\let\set\mathbb
\newtheorem{theorem}{Theorem}[section]
\newtheorem{corollary}[theorem]{Corollary}
\newtheorem{lemma}[theorem]{Lemma}
\newtheorem{remark}[theorem]{Remark}
\newtheorem{definition}[theorem]{Definition}
\newtheorem{example}[theorem]{Example}
\newtheorem{proposition}[theorem]{Proposition}
\def\hdeg{\operatorname{hdeg}}
\def\tdeg{\operatorname{tdeg}}
\def\redt{\operatorname{red}}
\def\im{\operatorname{im}}
\def\spanning{\operatorname{span}}
\DeclareMathOperator{\gl}{GL}
\newcommand{\qbinom}{\genfrac{[}{]}{0pt}{}}
\title{A Unified Reduction for Hypergeometric and $q$-Hypergeometric Creative Telescoping\footnote{S.\ Chen was partially 
supported by the National Key R\&D Programs of
China (No.\ 2020YFA0712300 and No.\ 2023YFA1009401), the NSFC grants (No.\ 12271511 and No.\ 11688101), 
the CAS Funds of the Youth Innovation Promotion Association (No.\ Y2022001), and 
the Strategic Priority Research Program of the Chinese Academy of Sciences (No.\ XDB0510201).
H.\ Du was supported by the NSFC grant (No.\ 12201065).
Y.\ Gao was supported by the Austrian Science Fund (FWF) grant 10.55776/PAT1332123.
H.\ Huang was partially supported by the NSFC grant (No.\ 12101105) and the
Natural Science Foundation of Fujian Province of China (No.\ 2024J01271).
Z.\ Li was partially supported by the NSFC grant (No.\ 12271511)
and the National Key R\&D Program of China (No.\ 2023YFA1009401).
}}
\author{Shaoshi Chen$^{1,2}$, Hao Du$^3$, Yiman Gao$^{ 4}$, Hui Huang$^ {5}$, Ziming Li$^{1,2}$\\
	{\small $^1$ KLMM, Academy of Mathematics and Systems Science, Chinese Academy of Sciences,}\\
	{\small Beijing 100190, China}\\
	{\small $^2$ School of Mathematical Sciences, University of Chinese Academy of Sciences,}\\
	{\small Beijing 100049, China}\\
	{\small $^3$ School of Mathematical Sciences, Beijing University of Posts and Telecommunications,}\\
	{\small Beijing 100876, China}\\
    {\small  $^4$ Johannes Kepler University Linz, Research Institute for Symbolic Computation(RISC),}\\
	{\small Altenberger Stra\ss e 69, A-4040, Linz, Austria}\\
	{\small  $^5$ School of Mathematics and Statistics, Fuzhou University,}\\
	{\small Fuzhou 350108, China}\\
	{\tt \small schen@amss.ac.cn, haodu@bupt.edu.cn,  ymgao@risc.jku.at}\\
	{\tt \small huanghui@fzu.edu.cn, zmli@mmrc.iss.ac.cn}
	}
\date{}
\begin{document}
\maketitle
{\em Dedicated to Professors George Andrews and Bruce Berndt for their 85th birthdays}

\begin{abstract}
We adapt the theory of normal and special polynomials from symbolic integration
to the summation setting, and then built up a general framework embracing both the 
usual shift case and the $q$-shift case. In the context of this general framework, we develop
a unified reduction algorithm, and subsequently a creative telescoping algorithm,
applicable to both hypergeometric terms and their $q$-analogues.
Our algorithms allow to split up the usual shift case and the $q$-shift case only when it is really 
necessary, and thus instantly reveal the intrinsic differences between these two cases. 
Computational experiments are also provided.
\end{abstract}

\section{Introduction}\label{SEC:intro}
%object&method
Hypergeometric summation and its $q$-analogue appear frequently in combinatorics~\cite{AnAs1999}.
These are sums whose summands are ($q$-)hypergeometric terms, typically involving 
rational functions, geometric terms, factorial terms, binomial coefficients, and so on. 
Given a ($q$-)hypergeometric sum, an important problem is to decide whether the sum admits 
a \lq\lq closed form\rq\rq. A prominent technique for tackling such a problem is the method of
{\em creative telescoping}, also known as {\em Zeilberger's algorithm} in the hypergeometric
case and {\em $q$-Zeilberger's algorithm} in the $q$-hypergeometric case. 
This method was first pioneered by 
Zeilberger~\cite{Zeil1990a,Zeil1990b,Zeil1991} in the 1990s and has now become the 
primary technique for definite summation and integration.

In the case of summation, the method of creative telescoping takes a summand $f(x,y)$
and looks for polynomials $c_0,c_1,\dots,c_\rho$ in $x$ only, not all zero, and another term 
$g(x,y)$ in the same class as $f(x,y)$ such that 
\begin{equation}\label{EQ:ct}
c_0(x)f+c_1(x)S_x(f)+\dots+c_\rho(x)S_x^\rho(f) = S_y(g)-g,
\end{equation}
where $S_x$ and $S_y$ denote ($q$-)shift operators with respect to $x$ and $y$, respectively.
The number~$\rho$ may or may not be part of the input. If such $c_0,c_1,\dots,c_\rho$ and $g$ exist,
then the nonzero recurrence operator $L = c_0+c_1S_x+\cdots+c_\rho S_x^\rho$
is called a {\em telescoper} for $f$ and the term $g$ is called the {\em certificate} for~$L$.
From the relation~\eqref{EQ:ct}, one can derive a recurrence equation admitting
the given definite sum as a solution. With this equation at hand, we are then able to evaluate
the sum by some other available algorithms (for example, cf.\ \cite{Petk1992,APP1998}) or (dis-)prove 
an already given identity by substitution and initial-value checking. As an example, let us
consider the $q$-Chu-Vandermonde identity in the form
\[
\sum_{k=0}^n\qbinom{n}{k}_q\qbinom{b}{k}_q q^{k^2} = \qbinom{b+n}{n}_q,
\]
where 
\[
\qbinom{n}{k}_q=\begin{cases}
\frac{(q;q)_n}{(q;q)_k(q;q)_{n-k}} & \text{if}\ 0\leq k\leq n,\\[1ex]
0 & \text{otherwise}
\end{cases}
\]
 is the Gaussian binomial coefficient
and $(q;q)_k=\prod_{i=1}^k(1-q^i)$ is the $q$-Pochhammer symbol with $(q;q)_0=1$. Let $f_{n,k}$ denote the summand 
on the left-hand side of the identity, and let $x=q^n$ and $y=q^k$.
The method of creative telescoping then constructs a telescoper 
$L = q(1-q^{b+1}x)- q(1-qx)S_x$ for~$f_{n,k}$ and a corresponding certificate 
\[
g_{n,k} = \frac{q^2x(y-1)^2}{-qx+y}\qbinom{n}{k}_q\qbinom{b}{k}_q q^{k^2},
\]
where $S_x(f_{n,k}) = f_{n+1,k}$. Thus \eqref{EQ:ct} becomes
\[
q(1-q^{b+1}x)f_{n,k}- q(1-qx)f_{n+1,k} = g_{n,k+1}-g_{n,k}.
\]
Summing over $k$ from zero to $n$ on both sides, along with a subsequent simplification, delivers 
the recurrence equation 
\begin{equation}\label{EQ:CVrec}
q(1-q^{b+1}x)F_n- q(1-qx)F_{n+1} = 0\quad \text{with}\ F_n = \sum_{k=0}^nf_{n,k}.
\end{equation}
Using a $q$-analogue of Pascal's formula (cf.\ \cite[Exercises 1.2~(v)]{GaRa2004}), we see that the right-hand 
side of the identity $\qbinom{b+n}{n}_q$ satisfies the same recurrence equation. The correctness of the 
$q$-Chu-Vandermonde identity is finally confirmed by checking the equality of initial values at $n = 0$.
Similar reasoning processes apply to most of the summation identities listed in \cite[Appendix~II]{GaRa2004}.

%previous work
Over the past 35 years, a variety of generalizations and improvements of creative telescoping
have been developed. As outlined in the introduction of~\cite{CHKL2015}, we can distinguish four 
generations among them. The first generation was based on elimination techniques. The second 
generation starts with Zeilberger's algorithm, and uses the idea of parametrizing an algorithm for
indefinite summation (or integration). The third generation was initiated by Apagodu and 
Zeilberger, and mainly applies a second-generation algorithm
by hand to a generic input so as to reduce the problem to solving linear systems.
Due to the efficiency and practicability, the algorithms of the second generation have been
implemented in many computer algebra systems, including {\sc Maple} and  {\sc Mathematica}, 
and are widely used in proving identities from combinatorics, the theory of partitions or physics, etc. 
More details can be found in~\cite{PWZ1996} for the first two generations and in~\cite{MoZe2005,ApZe2006} 
for the third one.

In terms of the fourth generation, 
reduction methods currently provide the state of the art for constructing telescopers
(see \cite{Chen2019} and the references therein). These convert a given summand $f(x,y)$
into some sort of reduced form $\redt(f)$ modulo all terms that are $y$-differences of other terms. 
The object is then to continually take shifts 
$f,S_x(f),S_x^2(f),\dots$ and, by reduction methods, convert these to reduced forms 
$\redt(f), \redt(S_x(f)),\redt(S_x^2(f)), \dots$ until we find a set of polynomials 
$c_0(x),c_1(x),\dots,c_\rho(x)$, not all zero, such that 
\[
c_0(x)\redt(f)+c_1(x)\redt(S_x(f))+\dots+c_\rho(x)\redt(S_x^\rho(f))=0.
\] 
This will give rise to a telescoper $L = c_0+c_1S_x+\cdots +c_\rho S_x^\rho$ for~$f$. 
The termination of the above process is guaranteed by known existence criteria for
telescopers (for example, see \cite{Abra2003} for the hypergeometric case and \cite{CHM2005} 
its $q$-analogue).

Compared with previous generations, the key advantage of the fourth generation is that it separates 
the computation of the $c_i$ from the computation of the $g$ in~\eqref{EQ:ct}, and thus enables one to find 
a telescoper without also necessarily computing a corresponding certificate. This is desirable 
in a typical situation where only the telescoper is of interest and its size is much smaller than 
the size of the certificate. 
For instance, in the above example of the $q$-Chu-Vandermonde identity, due to the natural 
boundary of Gaussian binomial coefficients, we could have directly used the telescoper to obtain 
the recurrence equation \eqref{EQ:CVrec} without knowing the certificate.
So far this approach has been worked out for various 
classes of functions, including rational functions \cite{BCCL2010,BLS2013,CHHLW2022}, 
hyperexponential functions \cite{BCCLX2013}, algebraic functions \cite{CKK2016}, 
D-finite functions \cite{CvHKK2018,BCLS2018,vdHo2021,CDK2023}, 
hypergeometric terms \cite{CHKL2015,Huan2016}, P-recursive sequences~\cite{vdHo2018,BrSa2024,CDKW2025}, and so on. 
One goal of the present paper is to further enlarge the list and include another 
important class of $q$-hypergeometric terms.

%contribution
$q$-Hypergeometric terms are just slight adaptations of the usual ones by essentially promoting
involved variables to exponents of an additional parameter~$q$. One of the reasons for interest
in $q$-analogues is that, due to the extra parameter $q$, they have many counting interpretations
which are useful in combinatorics and analysis; see the classic books~\cite{Andr1976,Andr1986}
for many interesting applications in combinatorics, analysis and elsewhere in mathematics.
Very often, techniques for handling 
the usual case carry over to the $q$-analogue with some subtle modifications (cf.\ \cite{Koor1993,PaRi1997}).
Rather than working out these modifications individually, we aim to set up a general framework
which combines both the usual shift case and the $q$-shift case so as to reveal more profound reasons 
for this phenomenon. The foundation of this framework is the theory of normal and special 
polynomials adapted from symbolic integration~\cite{Bron2005}. 
With this theory, every rational function can be uniquely decomposed as a sum of the polynomial,
normal, and special parts.
As indicated by Lemma~\ref{LEM:special}, a major difference from the $q$-shift case to the usual one 
lies in the appearance of nontrivial special polynomials (and thus possibly nontrivial special parts).
This results in the Laurent polynomial reduction, instead of the usual polynomial reduction, for 
$q$-hypergeometric terms in~\cite{DHL2018}.
In order to eliminate this discrepancy, we introduce the notion of standard rational functions
(see Definition~\ref{DEF:standard}), which enables us to transform a nontrivial special part
to a ``simpler" term that can be tackled simultaneously with the polynomial part.
In this way, we unify the reduction processes for hypergeometric terms and their $q$-analogues,
and subsequently obtain a unified creative telescoping algorithm.
This algorithm extends the one developed in~\cite{CHKL2015} for the usual hypergeometric terms
by including the $q$-shift case, and it shares the important feature of the reduction-based approach 
that the computation of a telescoper is separated from that of its certificate. 

%structure
The remainder of the paper proceeds as follows. 
Some basic notions and results are recalled in the next section. In particular,
we briefly translate the theory of normal and special polynomials from symbolic integration
to the summation setting. Using this theory, we present in Section~\ref{SEC:red} 
a reduction algorithm which brings every ($q$-)hypergeometric term to a reduced form.
These reduced forms are shown in Section~\ref{SEC:rem} to be well-behaved with respect to 
taking linear combinations. Based on the reduction algorithm developed previously, Section~\ref{SEC:rct} 
describes an algorithm for constructing telescopers for ($q$-)hypergeometric terms,
followed in Section~\ref{SEC:exp} by an experimental comparison between our algorithms and 
the built-in algorithms of Maple. 

\section{Preliminaries}\label{SEC:prelim}
Throughout the paper, let $\set F$ be a field of characteristic zero, and $\set F(y)$ be the field 
of rational functions in~$y$ over~$\set F$. Let $\sigma_y$ be an $\set F$-automorphism of $\set F(y)$.
The pair $(\set F(y),\sigma_y)$ is called a {\em difference field}. By~\cite[Theorem~6.2.3]{Wein2009}, 
there exists a matrix 
\[
A = \begin{pmatrix}
a & b\\
c & d
\end{pmatrix}
\in \gl_2(\set F)
\quad\text{with}\ 
a,b,c,d\in\set F\ \text{and}\ ad-bc \neq 0,
\]
such that $\sigma_y=\varphi_A$, where $\varphi_A$ denotes the $\set F$-automorphism of $\set F(y)$ 
defined by
\[
\varphi_A(f(y)) = f\Big(\frac{ay+b}{cy+d}\Big)
\quad\text{for all}\ f\in\set F(y).
\]
Let
\[
A_u= \begin{pmatrix} 1 & 1\\0 & 1\end{pmatrix}
\quad\text{and}\quad
A_q = \begin{pmatrix} q & 0 \\ 0 & 1\end{pmatrix} \ \text{with}\ q\in \set F\setminus\{0\}.
\]
We call $\sigma_y$ the {\em usual shift operator} if $\sigma_y = \varphi_{A_u}$, and call it the
{\em $q$-shift operator} if $\sigma_y=\varphi_{A_q}$.

According to the discussions in \cite[\S 2, Page 323]{ChSi2014}, 
we see that the difference field $(\set F(y),\sigma_y)$ is actually isomorphic to either the
difference field $(\set F(y),\varphi_{A_u})$ or the difference field $(\set F(y),\varphi_{A_q})$.
In other words, there exists an automorphism $\phi$ of $\set F(y)$ such that
\begin{equation}\label{EQ:iso}
\text{either}\quad \phi\circ \sigma_y = \varphi_{A_u} \circ \phi\quad\text{or}\quad
\phi\circ\sigma_y = \varphi_{A_q} \circ\phi.
\end{equation}

Now let $R$ be a ring extension of $\set F(y)$ and assume that the relation \eqref{EQ:iso} remains in $R$,
where $\sigma_y,\varphi_{A_u},\varphi_{A_q}$ are extended to be monomorphisms of~$R$, and $\phi$ is
extended to be an automorphism of $R$. 
An element $c\in R$ is called a {\em constant} if $\sigma_y(c) = c$.
All constants in $R$ form a subring of~$R$, denoted by $C_R$.

\begin{definition}\label{DEF:hypergeom}
An invertible element $T$ of $R$ is called a {\em $\sigma_y$-hypergeometric term} over $\set F(y)$ if 
$\sigma_y(T) = rT$ for some $r\in \set F(y)$. We call $r$ the {\em $\sigma_y$-quotient} of~$T$.
\end{definition}
Clearly, every nonzero rational function in $\set F(y)$ is $\sigma_y$-hypergeometric. 
A $\sigma_y$-hypergeometric term is also called a {\em hypergeometric term} if 
$\sigma_y$ is the usual shift operator, and a {\em $q$-hypergeometric term} if 
$\sigma_y$ is the $q$-shift operator. Typical examples are given by a Pochhammer symbol
$(a)_y = a(a+1)\cdots (a+y-1)\ (y>0)$ being a hypergeometric term in $y$ and its $q$-analogue 
$(q;q)_k = \prod_{i=1}^k (1-q^i)$ being a $q$-hypergeometric term in $q^k$.

Two $\sigma_y$-hypergeometric terms are called {\em similar} over $\set F(y)$ if one can be obtained from the other
by multiplying a rational function in $\set F(y)$.
A $\sigma_y$-hypergeometric term $T$ is said to be {\em $\sigma_y$-summable} if there exists
another $\sigma_y$-hypergeometric term $G$ such that $T = \Delta_y(G)$, where $\Delta_y$
denotes the difference of $\sigma_y$ and the identity map of~$R$.
It is readily seen that two $\sigma_y$-hypergeometric terms $T$ and $G$ satisfying $T = \Delta_y(G)$
are similar.  

Let $\sigma_1,\sigma_2$ be any two monomorphisms of~$R$ with $\phi\circ\sigma_1=\sigma_2\circ\phi$ 
for some automorphism $\phi$ of~$R$ whose restriction to $\set F(y)$ is an automorphism of~$\set F(y)$.
Then a term $T$ in $R$ is $\sigma_1$-hypergeometric if and only if $\phi(T)$ is 
$\sigma_2$-hypergeometric. Moreover, the problem of determining whether a 
$\sigma_1$-hypergeometric term $T$ is $\sigma_1$-summable is equivalent to that of 
determining whether the $\sigma_2$-hypergeometric term $\phi(T)$ is $\sigma_2$-summable.  
We thus conclude from \eqref{EQ:iso} that determining the $\sigma_y$-summability of a
$\sigma_y$-hypergeometric term amounts to determining the usual summability of a hypergeometric
term or the $q$-summability of a $q$-hypergeometric term.

For simplicity, in the rest of the paper, we assume throughout that, when restricted to $\set F(y)$,
the $\set F$-automorphism $\sigma_y$ is either the usual shift operator such that $\sigma_y(y) = y+1$ 
or the $q$-shift operator such that $\sigma_y(y) = qy$, where 
$q\in\set F$ is neither zero nor a root of unity. These two cases will be later referred to as the usual shift case 
and the $q$-shift case, respectively. We remark that in the case when $\sigma_y$ is the $q$-shift operator 
and $q$ is further assumed to be a root of unity, the $\sigma_y$-summability problem is closely related to 
the additive version of Hilbert's 
Theorem~90 (see~\cite[Theorem~6.3, Page~290]{Lang2002}), and will be left for future research.

\subsection{The canonical representation}
Let $T$ be a $\sigma_y$-hypergeometric term. 
A key idea on determining the $\sigma_y$-summability of a given 
$\sigma_y$-hypergeometric term $T$ is to write it into a multiplicative decomposition $T = fH$, where
$f\in \set F(y)$ and $H$ is a $\sigma_y$-hypergeometric term enjoying some nice properties 
(cf.\ \cite{AbPe2001a, AbPe2002b}). Then determining whether $T$ is $\sigma_y$-summable
amounts to finding a rational function $g\in \set F(y)$ such that 
\begin{equation}\label{EQ:summable}
fH = \Delta_y(gH),\ \text{or equivalently},\ f = K\sigma_y(g)-g\ \text{with}\ K = \frac{\sigma_y(H)}H.
\end{equation}
For a nonzero rational function $K$ in $\set F(y)$, we follow \cite{DHL2018} to
define a linear map $\Delta_K=K\sigma_y-{\bf 1}$
from $\set F(y)$ to itself which maps $r\in\set F(y)$ to $K\sigma_y(r)-r$.
Note that the image of $\Delta_K$, denoted by $\im(\Delta_K)$, is an 
$\set F$-linear subspace of $\set F(y)$.
It then follows from \eqref{EQ:summable} that $fH$ is $\sigma_y$-summable if and only if
$f\in \im(\Delta_K)$.
In this way, the main object has been reduced from $\sigma_y$-hypergeometric terms
to the well-studied class of rational functions. 

In the following, we adapt the notion of normal and special polynomials 
from symbolic integration \cite{Bron2005} into our setting, so as to obtain a canonical 
representation of a rational function.

\begin{definition}\label{DEF:types}
A polynomial $p\in \set F[y]$ is said to be {\em $\sigma_y$-normal} if $\gcd(p,\sigma_y^\ell(p)) = 1$ 
for any nonzero integer~$\ell$, and {\em $\sigma_y$-special} if $p\mid \sigma_y^\ell(p)$ for some 
nonzero integer $\ell$.
\end{definition}
Note that $\sigma_y$-normal polynomials are also called $\sigma_y$-free polynomials in the literature.
Recall that two polynomials $a,b$ in $\set F[y]$ are {\em associates} if $a = c\,b$ for $c\in \set F$.
Since $\sigma_y$ preserves the degree of the input polynomial, a polynomial $p\in \set F[y]$ 
is $\sigma_y$-special if and only if $p$ is an associate of $\sigma_y^\ell(p)$ for some nonzero integer~$\ell$, 
which happens if and only if $p$ is an associate of $\sigma_y^\ell(p)$ for some positive integer~$\ell$.
It is readily seen that $\sigma_y$-normal (resp.\ $\sigma_y$-special) polynomials remain to be 
$\sigma_y$-normal (resp.\ $\sigma_y$-special) under the application of the automorphism $\sigma_y$.
A polynomial is not necessarily $\sigma_y$-normal or $\sigma_y$-special, 
but an irreducible polynomial $p\in\set F[y]$ must be either $\sigma_y$-normal or 
$\sigma_y$-special, since $\gcd(p,\sigma_y^\ell(p))$ for any integer $\ell$ is a factor of $p$.
Evidently, all elements in $\set F$ are $\sigma_y$-special, and $p\in\set F[y]$ is 
both $\sigma_y$-normal and $\sigma_y$-special if and only if $p\in \set F\setminus\{0\}$. 

\begin{definition}\label{DEF:coprime}
Two polynomials $a,b\in\set F[y]$ are said to be {\em $\sigma_y$-coprime} if
$\gcd(a,\sigma_y^\ell(b)) = 1$ for any nonzero integer~$\ell$.
\end{definition}
Note that in the above definition, we did not require that two $\sigma_y$-coprime 
polynomials be coprime.
In analogy to \cite[Theorem~3.4.1]{Bron2005}, we describe the multiplicative properties of $\sigma_y$-special 
and $\sigma_y$-normal polynomials.
\begin{proposition}\leavevmode\null
\label{PROP:props}
\begin{itemize}
\item[(i)] Any finite product of $\sigma_y$-normal and pairwise $\sigma_y$-coprime 
polynomials in~$\set F[y]$ is $\sigma_y$-normal. Any factor of a $\sigma_y$-normal polynomial 
in $\set F[y]$ is $\sigma_y$-normal.

\item[(ii)] Any finite product of $\sigma_y$-special polynomials in $\set F[y]$ is $\sigma_y$-special.
Any factor of a nonzero $\sigma_y$-special polynomial in $\set F[y]$ is $\sigma_y$-special.
\end{itemize}
\end{proposition}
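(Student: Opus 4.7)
The plan is to handle the two items separately, with the factor statement in (ii) being by far the most delicate. For part (i), the factor claim is immediate: if $a$ divides a $\sigma_y$-normal polynomial $p$, then $a\mid p$ and $\sigma_y^\ell(a)\mid\sigma_y^\ell(p)$ because $\sigma_y$ is a ring homomorphism, so any common factor of $a$ and $\sigma_y^\ell(a)$ must divide $\gcd(p,\sigma_y^\ell(p))=1$. For the product claim, I would pick an irreducible common factor of $p_1\cdots p_n$ and $\sigma_y^\ell(p_1)\cdots\sigma_y^\ell(p_n)$, observe that it must divide some $p_i$ and some $\sigma_y^\ell(p_j)$, and then dispatch the two cases: $i=j$ contradicts the $\sigma_y$-normality of $p_i$, and $i\ne j$ contradicts the pairwise $\sigma_y$-coprimality of the $p_i$.

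For the product statement of (ii), the key observation is that $\sigma_y$ fixes $\set F$ pointwise, so if $\sigma_y^{\ell_i}(p_i)=c_i p_i$ with $c_i\in \set F\setminus\{0\}$ and $\ell_i>0$, then an easy induction gives $\sigma_y^{k\ell_i}(p_i)=c_i^k p_i$ for every $k\ge 0$. Choosing $\ell=\operatorname{lcm}(\ell_1,\dots,\ell_n)$ therefore provides a single shift for which each $\sigma_y^\ell(p_i)$ is a scalar multiple of $p_i$, and multiplying these relations yields $\sigma_y^\ell(p_1\cdots p_n)=c\,(p_1\cdots p_n)$ for some nonzero $c\in \set F$, as desired.

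The main obstacle is the factor claim in part (ii), where one needs a compactness-type argument. Suppose $p\ne 0$ is $\sigma_y$-special with $\sigma_y^\ell(p)=c\,p$, and write $p=ab$. Applying $\sigma_y^{k\ell}$ and using the observation above, one sees that $\sigma_y^{k\ell}(a)$ divides $c^k p$, and hence divides $p$ up to a unit, for every $k\ge 0$. Since $\sigma_y$ preserves degrees, the polynomials $a,\sigma_y^\ell(a),\sigma_y^{2\ell}(a),\dots$ all share the same degree while (after normalization) being monic divisors of $p$; but $p$ admits only finitely many monic divisors, so by pigeonhole there exist $0\le j<k$ with $\sigma_y^{k\ell}(a)=c'\sigma_y^{j\ell}(a)$ for some $c'\in \set F\setminus\{0\}$. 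Here I would exploit that in both admissible cases on $\set F(y)$ (the usual shift and the $q$-shift, since in the matrices $A_u$ and $A_q$ one has $c=0$) the map $\sigma_y$ restricts to an $\set F$-automorphism of $\set F[y]$; hence $\sigma_y^{-j\ell}$ acts on polynomials and fixes the constant $c'$, yielding $\sigma_y^{(k-j)\ell}(a)=c'\,a$ with $(k-j)\ell>0$, so that $a$ is $\sigma_y$-special.
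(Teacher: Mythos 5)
Your proof is correct, and on the one genuinely delicate point it takes a different route from the paper. For part (i) and the product claim of (ii) you are essentially reproducing the paper's arguments: the paper proves the product claim of (i) via the divisibility $\gcd(p,\sigma_y^\ell(p)) \mid \prod_{i}\gcd(p_i,\sigma_y^\ell(p_i))$, whose justification is exactly your irreducible-common-factor case analysis (the cross terms vanish by pairwise $\sigma_y$-coprimality), and it proves the product claim of (ii) using the common period $\ell_1\ell_2$ where you use the least common multiple --- your bookkeeping of the constants (powers $c_i^{k}$ rather than $c_i$ itself) is in fact the more careful of the two. The divergence is the factor claim of (ii). The paper first treats an \emph{irreducible} factor $a$ of $p$ by a backward chain: $a$ is an associate of $\sigma_y^{\ell}(a_1)$ for some irreducible factor $a_1$ of $p$, $a_1$ an associate of $\sigma_y^{\ell}(a_2)$, and so on; pigeonhole on the finitely many irreducible factors of $p$ gives $a_i=a_j$ with $i<j$, whence $a \mid \sigma_y^{(j-i)\ell}(a)$; a general factor is then written as a product of irreducible factors of $p$ and dispatched by the product claim of (ii). You instead handle an arbitrary factor $a$ in one stroke by pushing it forward: $\sigma_y^{k\ell}(a)$ divides $\sigma_y^{k\ell}(p)=c^k p$, hence divides $p$ up to a unit for every $k\ge 0$, and pigeonhole on the finitely many monic divisors of $p$ yields $\sigma_y^{k\ell}(a)=c'\sigma_y^{j\ell}(a)$, so $\sigma_y^{(k-j)\ell}(a)=c'a$. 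Your version is shorter and self-contained (it never invokes the product claim), at the price of two auxiliary facts the paper's chain does not use in this form: that $\sigma_y$ fixes $\set F$ pointwise, to get $\sigma_y^{k\ell}(p)=c^kp$, and that $\sigma_y^{\pm 1}$ preserves $\set F[y]$, to cancel $\sigma_y^{j\ell}$ at the end --- both valid in the paper's setting, as you correctly note, and the cancellation could even be done by injectivity of $\sigma_y^{j\ell}$ alone, without invoking the inverse. What the paper's longer route buys is the intermediate statement, useful in its own right, that every irreducible factor of a nonzero $\sigma_y$-special polynomial is itself $\sigma_y$-special.
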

\begin{proof}
(i) Let $p_1,\dots,p_m\in\set F[y]$ be $\sigma_y$-normal and such that $\gcd(p_i,\sigma_y^k(p_j)) = 1$
for all $i,j,k\in \set Z$ with $1\leq i < j\leq m$ and $k\neq 0$, 
and let $p = \prod_{i=1}^mp_i$. Then for any nonzero
integer $\ell$, we have
\[
\gcd(p,\sigma_y^\ell(p)) 
= \gcd(p_1\cdots p_m, \sigma_y^\ell(p_1)\cdots\sigma_y^\ell(p_m))
\ \Big|\ \prod_{i=1}^m\gcd(p_i,\sigma_y^\ell(p_i)) = 1,
\]
where the last equality follows by the $\sigma_y$-normality of each $p_i$. 
Thus $\gcd(p,\sigma_y^\ell(p)) = 1$, that is, $p$ is $\sigma_y$-normal.

Let $p\in \set F[y]$ be $\sigma_y$-normal and write $p = ab$ where $a,b\in \set F[y]$. 
Since $p$ is $\sigma_y$-normal, we have 
\[
\gcd(p,\sigma_y^\ell(p))
= \gcd(ab,\sigma_y^\ell(a)\sigma_y^\ell(b)) = 1 
\quad\text{for all $\ell\in\set Z\setminus\{0\}$}.
\]
Thus $\gcd(a,\sigma_y^\ell(a)) = 1$ for all $\ell\in\set Z\setminus\{0\}$, which implies that $a$ is $\sigma_y$-normal.

\bigskip\noindent
(ii) Let $a,b\in\set F[y]$ be two $\sigma_y$-special polynomials. Then there exist 
positive integers $\ell_1,\ell_2$ and elements $c_1,c_2\in\set F$ such that 
$\sigma_y^{\ell_1}(a) = c_1a$ and $\sigma_y^{\ell_2}(b) = c_2b$. Thus
\[
\sigma_y^{\ell_1\ell_2}(ab) = \sigma_y^{\ell_1\ell_2}(a)\sigma_y^{\ell_1\ell_2}(b) 
= (c_1a)(c_2b) = c_1c_2 ab.
\]
So $ab\mid \sigma_y^{\ell_1\ell_2}(ab)$, that is, $ab$ is $\sigma_y$-special. 
The first assertion of part (ii) then follows by induction.

Let $p$ be a nonzero $\sigma_y$-special polynomial in $\set F[y]$. There is nothing to show if $p\in \set F$.
Assume that $p\notin \set F$ and let $a\in\set F[y]$ be an irreducible
factor of $p$. Then there exists a nonzero integer $\ell$ such that $p$ and $\sigma_y^\ell(p)$ are 
associates,
and thus $a\mid \sigma_y^\ell(p)$. It follows that there exists an irreducible factor $a_1\in\set F[y]$ of $p$
such that $a$ is an associate of $\sigma_y^\ell(a_1)$. Applying the same argument to $a_1$, we get an 
irreducible factor $a_2\in\set F[y]$ of $p$ such that $a_1$ is an associate of $\sigma_y^\ell(a_2)$. Thus,
$a$ is an associate of $\sigma_y^{2\ell}(a_2)$. 
Continuing in this pattern, we obtain a sequence of irreducible factors $\{a_1,a_2,\dots\}\subseteq \set F[y]$ 
of $p$ such that 
\[
a \ \text{is an associate of} \ \sigma_y^{k\ell}(a_k) \quad\text{for all}\ k= 1,2,\dots
\]
Since $p$ has only finitely many irreducible factors, there exist two integers $i,j$ with $j>i\geq 1$
such that $a_i = a_j$. Since $a$ is an associate of both $\sigma_y^{i\ell}(a_i)$ and $\sigma_y^{j\ell}(a_j)$, 
we see that
$a\mid \sigma_y^{(j-i)\ell}(a)$. Notice that $j-i>0$. So $(j-i)\ell\neq 0$. By definition,
$a$ is $\sigma_y$-special. Since $a$ is arbitrary, we know that every irreducible factor of 
$p$ is $\sigma_y$-special. Let now $b\in\set F[y]$ be any factor of $p$. If $b\in \set F$,
then $b$ is $\sigma_y$-special by definition. Otherwise, $b$ is a nonempty finite product 
of irreducible factors of $p$, so it is $\sigma_y$-special by the first assertion of part (ii).
\end{proof}

We can separate the $\sigma_y$-normal and $\sigma_y$-special components of a polynomial in $\set F[y]$.
\begin{definition}
Let $p \in \set F[y]$. We say that $p = p_s p_n$ is a {\em $\sigma_y$-splitting factorization} of~$p$ if
$p_s,p_n\in\set F[y]$, $p_s$ is $\sigma_y$-special, and every irreducible factor of $p_n$ is 
$\sigma_y$-normal.
\end{definition}

A consequence of Proposition~\ref{PROP:props} is that we always have $\gcd(p_s,p_n) = 1$
in a $\sigma_y$-splitting factorization $p=p_sp_n$ of $p\in\set F[y]$, and such a factorization is 
unique up to multiplication 
by units in $\set F$. Clearly, a full irreducible factorization of $p$ yields a $\sigma_y$-splitting 
factorization of $p$.
It is not obvious to us (but it would be interesting to see) whether such a $\sigma_y$-splitting
factorization can be computed by the gcd computation only, like in the differential case (cf.\ \cite[\S 3.5]{Bron2005}).

For a nonzero polynomial $p\in \set F[y]$, its degree in $y$ (or $y$-degree) is denoted by $\deg_y(p)$. 
We will follow the convention that $\deg_y(0) = -\infty$. 
We assume throughout that the numerator and denominator of a rational function
in $\set F(y)$ are always coprime.
A rational function in $\set F(y)$ is said to be {\em proper} if the $y$-degree of its numerator 
is less than that of its denominator.

We can now define a canonical representation of rational functions in $\set F(y)$.
Let $f$ be a rational function in~$\set F(y)$ with denominator $d$ and let $d = d_sd_n$ 
be a $\sigma_y$-splitting factorization of~$d$. Then there are unique $p,a,b\in\set F[y]$
such that $\deg_y(a)<\deg_y(d_s), \deg_y(b)<\deg_y(d_n)$, and
\[
f = p + \frac{a}{d_s}+\frac{b}{d_n}.
\]
We call this decomposition, which is unique, the {\em $\sigma_y$-canonical representation} of $f$,
and the components $p,a/d_s,b/d_n$ the {\em polynomial part}, the {\em special part}, the 
{\em normal part} of $f$, respectively.
Note that the special and normal parts of a rational function in $\set F(y)$ are always proper.

In the usual shift and the $q$-shift cases, 
we are able to characterize all possible $\sigma_y$-special irreducible polynomials.
To this end, we need the following simple lemma.
\begin{lemma}\label{LEM:invariant}
Let $f$ be a rational function in $\set F(y)$.
\begin{itemize}
\item[(i)] If $f(y+\ell) = f(y)$ for some nonzero integer $\ell$, then $f\in\set F$.
\item[(ii)] If $f(q^\ell y) = cf(y)$ for some nonzero integer $\ell$ and $c\in\set F$,
then $f/y^k\in\set F$ for some $k\in \set Z$. In particular, $c = q^{\ell k}$ if $f\neq 0$.
\end{itemize}
\end{lemma}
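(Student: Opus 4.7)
Both parts share the same underlying strategy: write $f=p/q$ with $p,q\in\set F[y]$ coprime (assuming $f\neq0$; otherwise the statements are trivial), use the given functional equation to derive separate transformation rules for $p$ and $q$, and then analyze those rules coefficient-wise.

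For part (i), the relation $f(y+\ell)=f(y)$ rewrites as $p(y+\ell)q(y)=p(y)q(y+\ell)$. Since $\gcd(p,q)=1$ in $\set F[y]$, this forces $q(y)\mid q(y+\ell)$, and an identical argument applied to $f(y-\ell)=f(y)$ yields $q(y+\ell)\mid q(y)$; hence $q(y+\ell)=c\,q(y)$ for some $c\in\set F$. Comparing leading coefficients shows $c=1$, and then comparing the next-to-leading coefficients uses characteristic zero and $\ell\neq0$ to conclude $\deg_y(q)=0$, so $q\in\set F$. Plugging back in, $p(y+\ell)=p(y)$, and the same coefficient comparison gives $p\in\set F$, so $f\in\set F$.

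For part (ii), the relation $f(q^\ell y)=cf(y)$ similarly yields $p(q^\ell y)q(y)=c\,p(y)q(q^\ell y)$ and hence, by coprimality and degree considerations, $q(q^\ell y)=\alpha\,q(y)$ for some $\alpha\in\set F$. Writing $q=\sum_{i}a_i y^i$ and matching coefficients gives $a_i(q^{\ell i}-\alpha)=0$ for every $i$. Here the hypothesis that $q$ is not a root of unity is essential: the values $q^{\ell i}$ for distinct $i\in\set Z_{\geq0}$ are pairwise distinct, so at most one $i$ can contribute, forcing $q=a_m y^m$ for some $m\geq0$. The same reasoning applied to $p$ gives $p=b_n y^n$ for some $n\geq0$, and coprimality of $p$ and $q$ forces $\min(m,n)=0$. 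Thus $f=\lambda y^k$ for some $\lambda\in\set F\setminus\{0\}$ and $k\in\set Z$, from which $f/y^k\in\set F$, and a direct substitution into the original equation yields $c=q^{\ell k}$.

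There is no real obstacle here; the only point requiring care is the use of the non-root-of-unity hypothesis in part (ii) to guarantee that the values $q^{\ell i}$ separate distinct exponents, which is what collapses $q$ (and likewise $p$) to a single monomial in $y$. In part (i) the analogous ``separation'' comes for free from characteristic zero together with $\ell\neq0$.
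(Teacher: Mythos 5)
Your proof is correct. For part (ii) your argument is essentially identical to the paper's: write $f=p/q$ in lowest terms, use that $y\mapsto q^\ell y$ is an automorphism preserving coprimality to split the functional equation into separate scaling rules $p(q^\ell y)=c_1p(y)$ and $q(q^\ell y)=c_2q(y)$, and then compare coefficients, invoking the non-root-of-unity hypothesis to force each of $p$ and $q$ to be a single monomial, from which $k$ and $c=q^{\ell k}$ are read off. The only difference of substance is part (i): the paper does not prove it at all, but simply cites Lemma~2 of Abramov--Petkov\v{s}ek (2002), whereas you give a self-contained elementary argument (mutual divisibility of $q(y)$ and $q(y+\ell)$ to get $q(y+\ell)=c\,q(y)$, leading coefficients to get $c=1$, then the next-to-leading coefficient $b_{n-1}+n\ell b_n=b_{n-1}$ together with characteristic zero and $\ell\neq 0$ to force $\deg_y q=0$, and the same for $p$). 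Your treatment has the merit of making the lemma fully self-contained and of exhibiting the two parts as instances of one strategy -- derive a quasi-invariance of numerator and denominator separately, then kill all but one coefficient -- where the ``separation of exponents'' comes from characteristic zero in the shift case and from $q$ not being a root of unity in the $q$-shift case; the paper's citation is of course shorter. One small remark: your observation that coprimality forces $\min(m,n)=0$ is true but not needed, since $f=\lambda y^{n-m}$ holds regardless.
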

\begin{proof}
(i) This is exactly \cite[Lemma~2]{AbPe2002a}.

\smallskip\noindent
(ii) It is trivial when $f = 0$. Assume that $f$ is nonzero and $f(q^\ell y) = cf(y)$ for some $\ell\in\set Z\setminus\{0\}$ and $c\in\set F$. Write $f = a/d$ with $a,d\in \set F[y]\setminus\{0\}$ and $\gcd(a,d) = 1$.
It then follows from $f(q^\ell y) = cf(y)$ that
\[
a(q^\ell y) = c_1 a(y) \quad\text{and}\quad d(q^\ell y) = c_2 d(y),
\]
where $c_1,c_2\in \set F$ with $c_1/c_2 = c$. Let $a = \sum_{i=0}^{k_1}a_iy^i$ with $k_1\in\set N$,
$a_i\in \set F$ and $a_{k_1}\neq 0$. By comparing the coefficients of both sides of $a(q^\ell y) = c_1 a(y)$,
we conclude that
\[
a = a_{k_1}y^{k_1}\quad\text{and}\quad c_1 = q^{\ell k_1}.
\]
Similarly, we have $d = d_{k_2} y^{k_2}$ and $c_2 = q^{\ell k_2}$ for some $k_2\in \set N$ and 
$d_{k_2}\in \set F\setminus\{0\}$. Letting $k = k_1-k_2\in\set Z$, we obtain that $f/y^k\in \set F$ and $c = q^{\ell k}$.
\end{proof}

The following lemma gives the desired characterization.
\begin{lemma}\label{LEM:special}
Let $p$ be a polynomial in $\set F[y]$. 
\begin{itemize}
\item[(i)] In the usual shift case, $p$ is $\sigma_y$-special if and only if $p\in \set F$.
\item[(ii)] In the $q$-shift case, $p$ is $\sigma_y$-special 
if and only if $p$ is an associate of $y^k$ for some $k\in \set N$.
\end{itemize}
\end{lemma}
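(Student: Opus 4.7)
The plan is to obtain both parts of the lemma by combining a short forward computation with an application of Lemma~\ref{LEM:invariant} in the reverse direction. The key preliminary observation, already recorded in the excerpt just before Definition~\ref{DEF:coprime}, is that $\sigma_y$ preserves the $y$-degree of every polynomial in both the usual shift and the $q$-shift cases. Consequently, for a nonzero polynomial $p\in\set F[y]$, the divisibility $p\mid\sigma_y^\ell(p)$ together with $\deg_y(p)=\deg_y(\sigma_y^\ell(p))$ forces $\sigma_y^\ell(p)=c\,p$ for some $c\in\set F\setminus\{0\}$, which is precisely the shape of the hypothesis of Lemma~\ref{LEM:invariant}.

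For part (i), the easy direction is immediate: if $p\in\set F$ then $\sigma_y(p)=p$, and hence $p\mid\sigma_y(p)$. For the converse, assume that $p$ is $\sigma_y$-special and, without loss of generality, nonzero. The preliminary observation yields $p(y+\ell)=c\,p(y)$ for some nonzero integer $\ell$ and $c\in\set F\setminus\{0\}$. Comparing leading coefficients on both sides pins $c=1$, so $p(y+\ell)=p(y)$, and Lemma~\ref{LEM:invariant}(i) gives $p\in\set F$.

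For part (ii), the easy direction is again a one-line check: if $p=a y^k$ with $a\in\set F\setminus\{0\}$ and $k\in\set N$, then $\sigma_y(p)=a\,q^k y^k=q^k p$, so $p\mid\sigma_y(p)$. For the converse, the preliminary observation rewrites $\sigma_y$-specialness as $p(q^\ell y)=c\,p(y)$ for some nonzero integer $\ell$ and $c\in\set F\setminus\{0\}$. Applying Lemma~\ref{LEM:invariant}(ii) with $f=p$ then produces an integer $k\in\set Z$ such that $p/y^k\in\set F$; since $p$ is a nonzero polynomial, $k$ must be nonnegative, and we obtain $p=a y^k$ with $a\in\set F\setminus\{0\}$, i.e., $p$ is an associate of $y^k$.

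I expect no serious obstacle in this proof: the whole argument is essentially a translation of Lemma~\ref{LEM:invariant} into the language of $\sigma_y$-special polynomials, with the real content already located in that earlier lemma. The only point to handle carefully is the upgrade from $p\mid\sigma_y^\ell(p)$ to $\sigma_y^\ell(p)=c\,p$ with a nonzero scalar $c$, which relies on degree preservation under $\sigma_y$ together with $p\ne 0$.
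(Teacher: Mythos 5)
Your proof is correct and takes essentially the same route as the paper's: both use degree preservation under $\sigma_y$ to upgrade $p\mid\sigma_y^\ell(p)$ to $\sigma_y^\ell(p)=c\,p$ (the paper records this observation right after Definition~\ref{DEF:types}), then settle part (i) by comparing leading coefficients and part (ii) by invoking Lemma~\ref{LEM:invariant}(ii). Your explicit remark that $k\geq 0$ because $p$ is a polynomial is a small detail the paper leaves implicit.
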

\begin{proof}
(i) The sufficiency is evident by definition. For the necessity, assume that $p$ is $\sigma_y$-special.
Then there exist $\ell\in\set Z\setminus\{0\}$ and $c\in \set F$ such that $\sigma_y^\ell(p)=c\,p$. 
Since $\sigma_y$ is the usual shift operator, $\sigma_y^\ell(p)$ and $p$ 
have the same leading coefficient with respect to~$y$. Thus $c=1$ and $\sigma_y^\ell(p) = p$. 
It follows from part (i) of Lemma~\ref{LEM:invariant} that $p\in \set F$.

\smallskip\noindent
(ii) Assume that $p = cy^k$ for some $c\in \set F$ and $k\in \set N$. 
Since $\sigma_y$ is the $q$-shift operator, we have $\sigma_y(p) = c\,q^k y^k$. 
So $p\mid \sigma_y(p)$, and thus $p$ is $\sigma_y$-special by definition.
Conversely, assume that $p$ is $\sigma_y$-special. Then there exist $\ell\in\set Z\setminus\{0\}$ and 
$c\in \set F$ such that $\sigma_y^\ell(p)=c\,p$. The assertion follows 
from part (ii) of Lemma~\ref{LEM:invariant}.
\end{proof}

\subsection{Kernels, shells and $\sigma_y$-factorizations}\label{SEC:ksf}
Recall that a polynomial $p$ in $\set F[y]$ is said to be {\em monic} if its leading coefficient with respect to~$y$
is one, and {\em $q$-monic} if $p(0) = 1$ (cf.\ \cite{PaRi1997}). Unifying the usual shift and the $q$-shift cases,
we say that a polynomial $p$ in $\set F[y]$ is {\em $\sigma_y$-monic} if it is monic in the former case 
or $q$-monic in the latter one. A rational function in~$\set F(y)$ is {\em $\sigma_y$-monic} if both 
its numerator and denominator are $\sigma_y$-monic.
By a factor of a rational function in~$\set F(y)$, we mean a factor of either its numerator or its denominator. 
Let $f\in \set F(y)$ be $\sigma_y$-monic. Lemma~\ref{LEM:special} then tells us that all irreducible factors of~$f$ 
are $\sigma_y$-normal. Moreover, $\sigma_y^\ell(f)$ for all $\ell\in \set Z$ is again $\sigma_y$-monic. 

Based on~\cite{AbPe2002b,CHM2005}, a nonzero rational function in $\set F(y)$ with numerator
$u$ and denominator~$v$ is said to be {\em $\sigma_y$-reduced} if $u$ and $v$ are $\sigma_y$-coprime.
For a nonzero rational function $f\in \set F(y)$, there exist
two nonzero rational functions $K,S$ in $\set F(y)$ with $K$ being $\sigma_y$-reduced such that
\[
f  = K\frac{\sigma_y(S)}S.
\]
Such a pair $(K,S)$ will be called a {\em rational normal form} (or an {\em RNF} for short) of $f$.
Moreover, we call $K$ a {\em kernel} and $S$ a corresponding {\em shell} of $f$. 
These quantities can be constructed by gcd-calculations (cf.~\cite{AbPe2002b,CHM2005}).

It is known that a rational function in $\set F(y)$ is a Laurent polynomial in~$y$ if its denominator
is a power of $y$. All Laurent polynomials in $\set F(y)$ form a subring, which is denoted by
$\set F[y,y^{-1}]$. Let $f$ be a nonzero Laurent polynomial in~$\set F[y,y^{-1}]$.
Then it can be written in the form $f = \sum_{i=m}^nc_iy^i$, where $m,n\in\set Z$ 
with $m\leq n$ and $c_m,c_{m+1},\dots,c_n\in \set F$ with $c_mc_n\neq 0$.
We call $n$ the {\em head degree} of $f$ and $m$ the {\em tail degree} of $f$, which are 
denoted by $\hdeg(f)$ and $\tdeg(f)$, respectively. By convention, we have
$\hdeg(0) = -\infty$ and $\tdeg(0) = +\infty$.

We will also consider the ring of Laurent polynomials in $\sigma_y$ over $\set Z$, denoted by
$\set Z[\sigma_y,\sigma_y^{-1}]$. Let $p$ be a nonzero polynomial in~$\set F[y]$ and let
$\alpha = \sum_{i=m}^nk_i\sigma_y^i\in \set Z[\sigma_y,\sigma_y^{-1}]$. We define
\[
p^\alpha = \prod_{i=m}^n\sigma_y^i(p)^{k_i}.
\]
Clearly, $p^\alpha$ is a polynomial if and only if $\alpha$ belongs to $\set N[\sigma_y,\sigma_y^{-1}]$.

According to \cite[Definition~11]{Karr1981} and \cite[Definition~1]{AbPe2002b}, 
two polynomials $a,b\in\set F[y]$ are {\em $\sigma_y$-equivalent}
if $a$ is an associate of $\sigma_y^\ell(b)$ for some integer $\ell$.
Evidently, this gives an equivalence relation,
and the $\sigma_y$-equivalence of two polynomials can be easily recognized 
by comparing coefficients. Let $f$ be a rational function in~$\set F(y)$.
By computing $\sigma_y$-splitting factorizations of its numerator and denominator,
and grouping together all $\sigma_y$-normal irreducible factors that are $\sigma_y$-equivalent,
we can decompose $f$ as
\begin{equation}\label{EQ:sigmafac}
f =f_s\, p_1^{\alpha_1}\dots\, p_m^{\alpha_m},
\end{equation}
where $f_s\in \set F(y)$ whose numerator and denominator are both $\sigma_y$-special,
$m\in \set N$, each $\alpha_i\in \set Z[\sigma_y,\sigma_y^{-1}]\setminus\{0\}$, each
$p_i\in \set F[y]$ is $\sigma_y$-monic and irreducible,
and the $p_i$ are pairwise $\sigma_y$-inequivalent.
We call \eqref{EQ:sigmafac} a {\em $\sigma_y$-factoriztion} of~$f$.
Note that such a factorization is not unique since there are many possibilities to 
express each component $p_i^{\alpha_i}$. Nevertheless,
for each fixed $p_i$, the corresponding exponent $\alpha_i$ in~\eqref{EQ:sigmafac} is unique 
as $p_i$ is $\sigma_y$-normal and $\set F[y]$ is a unique factorization domain, and 
we will then call $\alpha_i$ the {\em $\sigma_y$-exponent} of $p_i$ in~$p$.

The following describes a useful property of $\sigma_y$-reduced rational functions, 
which is equivalent to \cite[Lemma~2.2]{CHKL2015} in the usual shift case and
\cite[Proposition~3.2]{DHL2018} in the $q$-shift case.
\begin{proposition}\label{PROP:redprop}
Let $r$ be a $\sigma_y$-reduced rational function in~$\set F(y)$, and assume that
$r = \sigma_y(f)/f$ for some $f\in \set F(y)\setminus\{0\}$. Then both the numerator
and denominator of $f$ are $\sigma_y$-special. Moreover, $r$ is equal to one in 
the usual shift case, or it is a power of $q$ in the $q$-shift case.
\end{proposition}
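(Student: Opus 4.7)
The plan is to use the $\sigma_y$-factorization~\eqref{EQ:sigmafac} of $f$ to control its $\sigma_y$-normal part, and then invoke Lemma~\ref{LEM:special} to pin down its $\sigma_y$-special part.

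First I would write
\[
f = f_s\, p_1^{\alpha_1}\cdots p_m^{\alpha_m},
\]
where $f_s \in \set F(y)$ has $\sigma_y$-special numerator and denominator, each $p_i\in\set F[y]$ is $\sigma_y$-monic, irreducible and $\sigma_y$-normal, the $p_i$ are pairwise $\sigma_y$-inequivalent, and each $\alpha_i \in \set Z[\sigma_y,\sigma_y^{-1}]\setminus\{0\}$. Then
\[
r \;=\; \frac{\sigma_y(f_s)}{f_s}\cdot p_1^{(\sigma_y-1)\alpha_1}\cdots p_m^{(\sigma_y-1)\alpha_m},
\]
and by Proposition~\ref{PROP:props}(ii) the numerator and denominator of $\sigma_y(f_s)/f_s$ remain $\sigma_y$-special after cancellation. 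Since a polynomial that is simultaneously $\sigma_y$-normal and $\sigma_y$-special lies in $\set F\setminus\{0\}$, no irreducible factor of $\sigma_y(f_s)/f_s$ coincides (up to a unit) with any $\sigma_y^j(p_i)$; hence the $\sigma_y$-exponent of each $p_i$ in a $\sigma_y$-factorization of $r$ is exactly $\beta_i := (\sigma_y-1)\alpha_i$.

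The crux is to deduce from $\sigma_y$-reducedness of $r$ that every $\beta_i$ vanishes. I claim each $\beta_i$ must have all coefficients of one sign: otherwise a positive coefficient at $\sigma_y^{j_1}$ and a negative one at $\sigma_y^{j_2}$ of $\beta_i$ would place $\sigma_y^{j_1}(p_i)$ in the numerator of $r$ and $\sigma_y^{j_2}(p_i)$ in its denominator, and shifting the latter by $\sigma_y^{j_1-j_2}$ would expose $\sigma_y^{j_1}(p_i)$ as a common factor, contradicting $\sigma_y$-coprimeness (note $j_1\neq j_2$). On the other hand, the coefficients of $(\sigma_y-1)\alpha_i$ telescope to zero, so a same-sign Laurent polynomial in $\set Z[\sigma_y,\sigma_y^{-1}]$ whose coefficients sum to zero must itself be zero; hence $\beta_i = 0$. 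Because $\set Z[\sigma_y,\sigma_y^{-1}]$ is an integral domain and $\sigma_y-1\ne 0$, this yields $\alpha_i = 0$, contradicting $\alpha_i \neq 0$. Thus no $p_i$ occurs, and $f = f_s$ has $\sigma_y$-special numerator and denominator.

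For the second assertion I would write $f = a/b$ with $a,b\in\set F[y]\setminus\{0\}$ both $\sigma_y$-special and $\gcd(a,b)=1$, and apply Lemma~\ref{LEM:special}. In the usual shift case $a,b\in\set F$, so $f\in\set F\setminus\{0\}$ and $r = 1$. In the $q$-shift case $a$ and $b$ are each associates of powers of $y$; coprimality forces $f = c\,y^k$ for some $c\in\set F\setminus\{0\}$ and $k\in\set Z$, so $r = q^k$. The main obstacle is the same-sign claim in the middle step, where one must track carefully that the $\sigma_y$-normal factors $\sigma_y^i(p_j)$ survive intact in the lowest-terms form of $r$ and are not absorbed into the $\sigma_y$-special component coming from $\sigma_y(f_s)/f_s$, so that the $\sigma_y$-exponent $\beta_i$ is genuinely what the $\sigma_y$-reducedness test sees.
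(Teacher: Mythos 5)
Your proof is correct and follows essentially the same route as the paper: take a $\sigma_y$-factorization $f = f_s\,p_1^{\alpha_1}\cdots p_m^{\alpha_m}$, observe that $r = (\sigma_y(f_s)/f_s)\prod_i p_i^{(\sigma_y-1)\alpha_i}$, note that the coefficients of each $\beta_i=(\sigma_y-1)\alpha_i$ sum to zero while $\beta_i\neq 0$, and conclude that mixed-sign coefficients would contradict $\sigma_y$-reducedness, forcing $m=0$; the second assertion then follows from Lemma~\ref{LEM:special} exactly as in the paper. Your additional care in checking that the $\sigma_y$-special component $\sigma_y(f_s)/f_s$ cannot absorb any $\sigma_y^j(p_i)$ (so that the exponents $\beta_i$ are genuinely visible in the lowest-terms form of $r$) is a detail the paper leaves implicit, and it is a welcome tightening rather than a deviation.
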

\begin{proof}
Assume that $f$ admits a $\sigma_y$-factorization of the form~\eqref{EQ:sigmafac}
and suppose that $m>0$. Since $r = \sigma_y(f)/f$, we get
\[
r = \frac{\sigma_y(f_s)}{f_s}\prod_{i=1}^mp_i^{\beta_i},
\]
where $\beta_i = \sigma_y\alpha_i - \alpha_i\neq 0$ for all $i=1,\dots,m$.
Notice that the total sum of all coefficients of each $\beta_i$ with respect to $\sigma_y$
is zero. So each $\beta_i$ has both positive and negative coefficients, which contradicts
with the assumption that $r$ is $\sigma_y$-reduced. Therefore, $m=0$ and then $f = f_s$,
implying that both the numerator and denominator of $f$ are $\sigma_y$-special.
The second assertion immediately follows by Lemma~\ref{LEM:special}.
\end{proof}

The above property enables us to derive the following equivalence characterization of 
{\em rational $\sigma_y$-hypergeometric terms}, that is, terms
of the form $cf$, where $c\in C_R\setminus\{0\}$ and $f\in\set F(y)$.
\begin{corollary}\label{COR:rathyper}
Let $T\in R$ be a $\sigma_y$-hypergeometric term whose $\sigma_y$-quotient has a kernel $K$.
\begin{itemize}
\item[(i)] In the usual shift case, $T$ is rational if and only if $K=1$.
\item[(ii)] In the $q$-shift case, $T$ is rational if and only if $K$ is a power of $q$.
\end{itemize}
\end{corollary}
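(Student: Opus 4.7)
My plan is to treat both parts uniformly by leveraging Proposition~\ref{PROP:redprop}, which describes exactly when a $\sigma_y$-reduced rational function can be written as $\sigma_y(g)/g$.

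For the \emph{only if} direction, suppose $T = cf$ with $c \in C_R\setminus\{0\}$ and $f\in\set F(y)\setminus\{0\}$. Since $\sigma_y(c)=c$, the $\sigma_y$-quotient of $T$ is $r = \sigma_y(f)/f$. Let $(K,S)$ be an RNF of $r$, so $r = K\,\sigma_y(S)/S$. Setting $g = f/S \in \set F(y)\setminus\{0\}$ and rearranging, I obtain
\[
K \;=\; \frac{\sigma_y(f)}{f}\cdot\frac{S}{\sigma_y(S)} \;=\; \frac{\sigma_y(g)}{g}.
\]
Since $K$ is $\sigma_y$-reduced, Proposition~\ref{PROP:redprop} forces $K=1$ in the usual shift case, and $K$ to be a power of $q$ in the $q$-shift case.

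For the \emph{if} direction, I would exhibit a rational $\sigma_y$-hypergeometric term sharing the given $\sigma_y$-quotient $r = K\sigma_y(S)/S$ and then invoke a uniqueness-up-to-constants argument. In case~(i), with $K=1$, the shell $S$ itself satisfies $\sigma_y(S)/S = r$. In case~(ii), with $K = q^k$ for some $k\in\set Z$, a short computation using $\sigma_y(y) = qy$ gives
\[
\frac{\sigma_y(y^kS)}{y^kS} \;=\; \frac{q^k y^k\,\sigma_y(S)}{y^k S} \;=\; q^k\,\frac{\sigma_y(S)}{S} \;=\; r,
\]
so $T' := y^k S \in \set F(y)\setminus\{0\}$ is a rational $\sigma_y$-hypergeometric term with $\sigma_y$-quotient $r$. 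In either case $\sigma_y(T/T') = (\sigma_y(T)/T)(T'/\sigma_y(T')) = r\cdot r^{-1} = 1$, so $T/T' \in C_R$; since $T$ is invertible, $T/T' \in C_R\setminus\{0\}$, and therefore $T = (T/T')\,T'$ is a rational $\sigma_y$-hypergeometric term.

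The only subtlety I expect is the \emph{if} direction in case~(ii), where one must notice that the kernel being a power of $q$ is realized by multiplying a shell by an appropriate monomial $y^k$---this is exactly the source of the nontrivial $\sigma_y$-special factors highlighted by Lemma~\ref{LEM:special}(ii) as distinguishing the $q$-shift case from the usual shift case. Everything else is a formal manipulation of quotients together with one direct appeal to Proposition~\ref{PROP:redprop}.
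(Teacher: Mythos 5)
Your proposal is correct and takes essentially the same route as the paper's own proof: necessity by writing $K=\sigma_y(g)/g$ with $g=f/S$ and invoking Proposition~\ref{PROP:redprop}, and sufficiency by exhibiting the rational term $rS$ (with $r=1$ in the shift case and $r=y^k$ in the $q$-shift case) having the same $\sigma_y$-quotient as $T$, so that $T/(rS)$ is a nonzero constant. There are no gaps.
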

\begin{proof}
Let $S$ be a shell of $\sigma_y(T)/T$ so that $\sigma_y(T)/T = K\sigma_y(S)/S$.
Assume that $T=cf$ for some $c\in C_R\setminus\{0\}$ and $f\in \set F(y)$. Then 
$\sigma_y(T)/T = \sigma_y(f)/f = K\sigma_y(S)/S$.
Thus $K = \sigma_y(r)/r$, where $r = f/S\in \set F(y)$.
Since $K$ is $\sigma_y$-reduced, we see from Proposition~\ref{PROP:redprop}
that $K$ is equal to one in the usual shift case or it is a power of $q$
in the $q$-shift case.
The necessities of both parts (i) and (ii) thus follow.

Conversely, assume that $K = 1$ in the usual shift case or $K = q^k$ for 
some $k\in \set Z$ in the $q$-shift case. By taking $r = 1$ in the former case
or $r = y^k$ in the latter one, we obtain that $K = \sigma_y(r)/r$ in either case. 
Notice that $\sigma_y(T)/T = K\sigma_y(S)/S$.
Thus $T/(rS)$ is a nonzero constant, say $c$, of the ring $R$. 
It follows that $T = c\, rS$. 
\end{proof}
As mentioned in the paragraph right after Corollary~3.2 in \cite{DHL2018},
the ring $R$ can be chosen using Picard-Vessiot extensions (cf.~\cite{BLW2005,HaSi2008})
so that $C_R$ coincides with the field $\set F$ if $\set F$ is further assumed to be algebraically closed. 

\section{A unified reduction}\label{SEC:red}
Let $T$ be a $\sigma_y$-hypergeometric term whose $\sigma_y$-quotient has
an RNF $(K,S)$. According to \cite{AbPe2002b,CHM2005},
$T$ admits a {\em multiplicative decomposition} $SH$, where $H$ is another 
$\sigma_y$-hypergeometric term  with $\sigma_y$-quotient $K$.
We are going to reduce the shell $S$ modulo $\im(\Delta_K)$ to a rational function $r\in \set F(y)$,
which is minimal in some sense. This reduction gives rise to an additive decomposition 
$T = \Delta_y(gH)+ rH$ for some $g\in \set F(y)$. 
The minimality of $r$ will then establish a $\sigma_y$-summability criterion which says that 
$T$ is $\sigma_y$-summable if and only if $r = 0$ (see Theorem~\ref{THM:shellred}).

Using an arbitrary RNF, the task described above can be accomplished by 
a shell reduction enhanced with a polynomial reduction in both the usual shift case~\cite{CHKL2015} 
and the $q$-shift case~\cite{DHL2018}.
The difference is that, in the $q$-shift case, we need to reduce Laurent polynomials instead of 
polynomials, which complicates the steps for the polynomial reduction.
In order to force the $q$-shift case to be in line with the usual shift case as much as possible, 
inspired by \cite{PaSt1995}, we will introduce the notion of $\sigma_y$-standard rational functions 
(see Definition~\ref{DEF:standard}) and further show that every nonzero rational function in $\set F(y)$
has a $\sigma_y$-standard kernel (see Proposition~\ref{PROP:kernel}).
For a $\sigma_y$-hypergeometric term whose $\sigma_y$-quotient has a $\sigma_y$-standard kernel $K$
and a corresponding shell~$S$,
the main steps of our reduction algorithm are as follows: first write the shell $S$ in its canonical 
representation and then perform successive reductions each of which brings the individual part to a 
\lq\lq simple\rq\rq\ one, until the remaining rational function is \lq\lq minimal\rq\rq. 

\subsection{Normal reduction}\label{SUBSEC:normalred}
In this subsection, we aim at reducing the normal part of a rational function to a \lq\lq simple\rq\rq\ one.
The main strategy differs slightly from the method presented in \cite[\S 4]{DHL2018} in that it employs the
so-called strong $\sigma_y$-factorization, rather than an arbitrary one, so as to make the process more
concise and more trackable.
The following definition is useful in justifying the simplicity.
\begin{definition}
Let $K\in \set F(y)$ with numerator $u$ and denominator $v$. 
A nonzero polynomial $p\in \set F[y]$ is said to be {\em strongly coprime} with $K$ if 
$\gcd(u,\sigma_y^\ell(p)) = \gcd(v,\sigma_y^{-\ell}(p)) = 1$ for all $\ell\in \set N$.
\end{definition}
The next lemma is  used to verify the minimality of our reduction algorithm, which
applies to both the usual shift and the $q$-shift cases, and thus extends \cite[Lemma~4.1]{DHL2018}.
\begin{lemma}\label{LEM:minimal}
Let $K\in \set F(y)$ with denominator~$v$, 
and let $h\in \set F(y)$ be a rational function whose denominator $d$ is $\sigma_y$-normal 
and strongly coprime with~$K$. Assume that there are $\tilde h\in \set F(y)$ 
and $p\in \set F[y]$ such that
\begin{equation}\label{EQ:minimal}
h-\tilde h + \frac{p}v\in \im(\Delta_K).
\end{equation}
Then the degree of $d$ is no more than that of the denominator of $\tilde h$. 
\end{lemma}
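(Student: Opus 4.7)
The plan is to unfold the hypothesis as $h - \tilde h + p/v = K\sigma_y(g) - g$ for some $g \in \set F(y)$, and then compare valuations orbit-by-orbit; write $K = u/v$ with $\gcd(u,v) = 1$. Since $d$ is $\sigma_y$-normal, each $\sigma_y$-orbit of irreducibles in $\set F[y]$ contains at most one irreducible factor of $d$. Fix an irreducible factor $a$ of $d$, set $k := v_a(d)$, and let $a_\ell := \sigma_y^\ell(a)$ for $\ell \in \set Z$; the $a_\ell$ are pairwise distinct by the $\sigma_y$-normality of $a$. It suffices to prove $\sum_{\ell \in \set Z} v_{a_\ell}(\tilde d) \ge k$, since summing such inequalities over the finitely many orbits and using $\deg a_\ell = \deg a$ (uniform in $\ell$) yields $\deg d \le \deg \tilde d$, where $\tilde d$ is the denominator of $\tilde h$.

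Next, I would set $e_\ell := -v_{a_\ell}(g) \ge 0$, $s_\ell := v_{a_\ell}(v)$, $t_\ell := v_{a_\ell}(u)$, and $\tilde k_\ell := v_{a_\ell}(\tilde d)$. Strong coprimeness of $d$ with $K$ gives $s_\ell = 0$ for $\ell \le 0$ and $t_\ell = 0$ for $\ell \ge 0$, and a direct computation yields $v_{a_\ell}(K\sigma_y(g)) = t_\ell - s_\ell - e_{\ell-1}$. Applying $v_{a_0}$ to $h = \tilde h - p/v + K\sigma_y(g) - g$ and using $v_{a_0}(h) = -k$, $v_{a_0}(p/v) \ge 0$, $v_{a_0}(K\sigma_y(g)) = -e_{-1}$, $v_{a_0}(g) = -e_0$, the subadditivity of valuations forces $k \le \max\{\tilde k_0, e_0, e_{-1}\}$. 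If $\tilde k_0 \ge k$ the per-orbit bound is immediate, so assume without loss of generality $e_0 \ge k$ (the case $e_{-1} \ge k$ is symmetric after interchanging the roles of $u,v$ and reversing the chain below).

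The key step is then a chain argument on $e_0, e_1, e_2, \ldots$. Call $\ell \ge 1$ \emph{non-cancelling} if $e_\ell \ne s_\ell + e_{\ell-1}$; since $t_\ell = 0$ for $\ell \ge 1$, this is equivalent to $v_{a_\ell}(g) \ne v_{a_\ell}(K\sigma_y(g))$, so that $v_{a_\ell}(K\sigma_y(g) - g) = -M$ with $M := \max\{s_\ell + e_{\ell-1}, e_\ell\}$. If every $\ell \ge 1$ were cancelling, iterating $e_\ell = s_\ell + e_{\ell-1}$ would force $e_\ell \ge e_{\ell-1} \ge \cdots \ge e_0 \ge k > 0$ for all $\ell \ge 0$, contradicting the finiteness of the pole set of $g$. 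Let $\ell^*$ be the smallest non-cancelling index; then $e_{\ell^*-1} \ge k$ by the chain, hence $M \ge s_{\ell^*} + e_{\ell^*-1} > s_{\ell^*}$ and $M \ge k$.

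Finally, I would match valuations at $a_{\ell^*}$: we have $v_{a_{\ell^*}}(h) \ge 0 > -M$ and $v_{a_{\ell^*}}(p/v) \ge -s_{\ell^*} > -M$, where the strict inequality $M > s_{\ell^*}$ is critical. The equation $h - \tilde h + p/v = K\sigma_y(g) - g$ then forces $v_{a_{\ell^*}}(\tilde h) = -M$, so $\tilde k_{\ell^*} = M \ge k$. Because $\ell^* \ne 0$, this yields $\sum_\ell \tilde k_\ell \ge \tilde k_{\ell^*} \ge k$, completing the orbit. I expect the main obstacle to be this chain step: one has to locate a position $\ell^*$ at which the pole of $K\sigma_y(g) - g$ cannot be absorbed into $h$ or $p/v$, and the strict comparison $M > s_{\ell^*}$ that enables this is precisely what makes strong coprimeness (rather than mere coprimeness) the correct hypothesis for the lemma.
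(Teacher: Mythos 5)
Your proof is correct and takes essentially the same approach as the paper's: both arguments track how the pole at an irreducible factor $a$ of $d$ propagates along the $\sigma_y$-orbit through the denominator $e$ of $g$, use strong coprimality with $K$ (so that $u$ and $v$ cannot absorb the pole) and the $\sigma_y$-normality of $d$ (so that $h$ cannot), and stop at the first index where the propagation breaks, forcing $\tilde h$ to carry a pole of order at least $k$ at that shift --- the paper merely phrases the stopping rule via divisibility ($\sigma_y^{\ell-1}(a)^k \mid e$ but $\sigma_y^{\ell}(a)^k \nmid e$) rather than your exact-valuation cancellation condition. The only slip is cosmetic: $e_\ell := -v_{a_\ell}(g)$ need not be nonnegative (the numerator of $g$ may vanish at $a_\ell$), but your argument never actually uses that inequality.
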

\begin{proof}
Write $K = u/v$, where $u\in \set F[y]$ with $\gcd(u,v) = 1$.
By \eqref{EQ:minimal}, there exists $g\in \set F(y)$ such that
\[
h-\tilde h + \frac{p}v = K\sigma_y(g)-g.
\]
Multiplying both sides by $v$ yields
\begin{equation}\label{EQ:vminimal}
v(h - \tilde h) -(u\sigma_y(g) - v g) = -p\in \set F[y].
\end{equation}
There is nothing to show if $d\in \set F$. Assume that $d\notin \set F$ and let
$a\in \set F[y]$ be an irreducible factor of~$d$ with multiplicity $k$. 
Since $d$ is $\sigma_y$-normal, $a$ is $\sigma_y$-normal as well by Proposition~\ref{PROP:props} (i).
Notice that all irreducible factors of $d$ are mutually $\sigma_y$-inequivalent.
So it suffices to show that there exists an integer $\ell$ such that $\sigma_y^\ell(a)^k$ 
divides the denominator $\tilde d$ of $\tilde h$.
Suppose that $a^k$ does not divide $\tilde d$,
otherwise we have done. Notice that $a\nmid v$ as $d$ 
is strongly coprime with $K$. It thus follows from \eqref{EQ:vminimal} that $a^k$
divides either $e$ or $\sigma_y(e)$, where $e$ is the denominator of~$g$.

If $a^k\mid e$, then there exists an integer $\ell\geq 1$ such that $\sigma_y^{\ell-1}(a)^k\mid e$
but $\sigma_y^\ell(a)^k\nmid e$ since $a$ is $\sigma_y$-normal. 
Thus we have $\sigma_y^\ell(a)^k\mid \sigma_y(e)$.
Since $d$ is $\sigma_y$-normal and strongly coprime with $K$, neither $d$ nor $u$ is 
divisible by $\sigma_y^\ell(a)$. Therefore, we conclude from \eqref{EQ:vminimal} that 
$\sigma_y^\ell(a)^k$ divides $\tilde d$.

If $a^k\mid \sigma_y(e)$, then there exists an integer $\ell\leq -1$ such that 
$\sigma_y^\ell(a)^k\mid e$ but $\sigma_y^{\ell-1}(a)^k\nmid e$ since $a$ is $\sigma_y$-normal. Thus we have
$\sigma_y^\ell(a)^k\nmid \sigma_y(e)$. Similarly, neither $d$ nor $v$ is divisible by
$\sigma_y^\ell(a)$, because $d$ is $\sigma_y$-normal and strongly coprime with $K$.
Therefore, $\sigma_y^\ell(a)^k$ divides $\tilde d$ by~\eqref{EQ:vminimal}.
\end{proof}
Let $p\in \set F[y]$ be $\sigma_y$-normal and irreducible, and 
$\alpha \in\set N[\sigma_y,\sigma_y^{-1}]\setminus\{0\}$. Let $K\in\set F(y)$ be $\sigma_y$-reduced
with numerator $u$ and denominator $v$. 
Let $\lambda,\mu\in\set N[\sigma_y,\sigma_y^{-1}]$ be the $\sigma_y$-exponents of~$p$ 
in $u$ and $v$, respectively. Since $K$ is $\sigma_y$-reduced, at least one of
$\lambda$ and $\mu$ is zero, that is, we have $\lambda\mu = 0$. Define
\begin{equation}\label{EQ:stronglycoprime}
\ell =\begin{cases}
\tdeg(\mu)-1\ & \text{if}\ \lambda = 0,\\[1ex]
\hdeg(\lambda) + 1 & \text{otherwise}.
\end{cases}
\end{equation}
Then we rewrite
\[
p^\alpha = (p^{\sigma_y^\ell})^{\sigma_y^{-\ell} \alpha},
\]
where $p^{\sigma_y^\ell}$ is strongly coprime with $K$ and $\sigma_y^{-\ell}\alpha\in\set N[\sigma_y,\sigma_y^{-1}]\setminus\{0\}$. 
Note that $p^{\sigma_y^\ell}$ is again $\sigma_y$-normal and irreducible.
This implies that every $\sigma_y$-normal and irreducible polynomial in~$\set F[y]$ can be 
transformed to one which is 
$\sigma_y$-equivalent to the original polynomial and is strongly coprime with~$K$.
Such a transformation enables
us to obtain a more structured $\sigma_y$-factorization of a given polynomial in $\set F[y]$.

Consider now a polynomial $p\in \set F[y]$ with a $\sigma_y$-factorization 
$p =p_s \prod_{i=1}^mp_i^{\alpha_i}$ and let $K\in\set F(y)$ be a $\sigma_y$-reduced rational function.
For each factor $p_i$, we transform it to one which is strongly coprime with $K$ using 
the procedure described in the preceding paragraph. By relabeling all the resulting factors, 
we finally arrive at the following decomposition (with a slight abuse of notation)
\begin{equation}\label{EQ:sigmasfac}
p =p_s\, p_1^{\alpha_1}\dots\, p_m^{\alpha_m},
\end{equation}
where $m\in \set N$ and 
\begin{itemize}
\item $p_s\in\set F[y]$ is $\sigma_y$-special;
\item each $p_i\in \set F[y]$ is $\sigma_y$-monic, irreducible and strongly coprime with $K$;
\item the $p_i$ are pairwise $\sigma_y$-inequivalent;
\item each $\alpha_i$ is in $\set N[\sigma_y,\sigma_y^{-1}]\setminus\{0\}$.
\end{itemize}
We will call \eqref{EQ:sigmasfac} a {\em strong $\sigma_y$-factorization} of~$p$ with respect to~$K$.

Before turning to the general case, we first perform the normal reduction ``locally".
\begin{lemma}\label{LEM:normalmonomial}
Let $K\in \set F(y)$ with denominator~$v$, 
and let $f\in\set F(y)$ be a nonzero proper rational function with denominator $d^{k\sigma_y^\ell}$, 
where $k\in\set Z^+$, $\ell\in \set Z$ and $d\in \set F[y]$ is strongly coprime with~$K$. 
Then there exist $g\in \set F(y)$ and $a,b\in \set F[y]$ with $\deg_y(a)<k\deg_y(d)$ such that
\begin{equation}\label{EQ:normalmonomial}
f =\Delta_K(g) + \frac{a}{d^k} + \frac{b}v.
\end{equation}
\end{lemma}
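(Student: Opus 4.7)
The plan is to proceed by induction on $|\ell|$, using the operator $\Delta_K = K\sigma_y - {\bf 1}$ to shift the power of $d$ appearing in the denominator one step at a time toward $\ell = 0$, absorbing the unavoidable residues into the $b/v$-piece. For the base case $\ell = 0$, the claim is immediate: since $f$ is proper with denominator $d^k$, one takes $g = 0$, $b = 0$, and $a$ equal to the numerator of $f$, which already satisfies $\deg_y(a) < k\deg_y(d)$.

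For the inductive step when $\ell \geq 1$, write $f = a_1/\sigma_y^\ell(d)^k$ and $K = u/v$. Strong coprimality of $d$ with $K$ gives $\gcd(u,\sigma_y^\ell(d)) = 1$, so $u$ is invertible modulo $\sigma_y^\ell(d)^k$. I can therefore solve the congruence $u\,\sigma_y(h) \equiv v\, a_1 \pmod{\sigma_y^\ell(d)^k}$ for some $h \in \set F[y]$ with $\deg_y(h) < k\deg_y(d)$ (invert $u$ modulo $\sigma_y^\ell(d)^k$, then apply $\sigma_y^{-1}$ and reduce modulo $\sigma_y^{\ell-1}(d)^k$). Setting $g_0 = h/\sigma_y^{\ell-1}(d)^k$ and computing directly yields
\[
f - \Delta_K(g_0) \;=\; \frac{v\, a_1 - u\,\sigma_y(h)}{v\,\sigma_y^\ell(d)^k} + \frac{h}{\sigma_y^{\ell-1}(d)^k} \;=\; \frac{q_0}{v} + \frac{h}{\sigma_y^{\ell-1}(d)^k},
\]
where $q_0 \in \set F[y]$ is the quotient of $v\, a_1 - u\,\sigma_y(h)$ by $\sigma_y^\ell(d)^k$. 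The second summand is proper with denominator indexed by $\ell - 1$, so $|\ell-1|<|\ell|$ and the induction hypothesis delivers the required decomposition, which is then combined with $g_0$ and $q_0/v$.

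For the inductive step when $\ell \leq -1$, strong coprimality now supplies $\gcd(v, \sigma_y^{\ell+1}(d)) = 1$. Taking $g_0 = -a_1/\sigma_y^\ell(d)^k$ yields, after a short computation,
\[
f - \Delta_K(g_0) \;=\; \frac{u\,\sigma_y(a_1)}{v\,\sigma_y^{\ell+1}(d)^k}.
\]
By coprimality of $v$ and $\sigma_y^{\ell+1}(d)$, a partial-fraction split writes the right-hand side as $P + a_2/\sigma_y^{\ell+1}(d)^k + b_0/v$ with $P \in \set F[y]$, $\deg_y(a_2) < k\deg_y(d)$, and $\deg_y(b_0) < \deg_y(v)$; absorbing $P$ by $P = Pv/v$ merges the polynomial remainder into the $v$-denominator piece. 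Since $|\ell+1|<|\ell|$, the induction hypothesis applies to $a_2/\sigma_y^{\ell+1}(d)^k$, and summing the resulting $g$'s with the accumulated $v$-denominator numerators finishes the step.

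The main technical point requiring care is that each inductive step needs exactly one half of strong coprimality: the $\ell\geq 1$ step uses $\gcd(u,\sigma_y^j(d))=1$ for $j\geq 0$ to solve the congruence, while the $\ell\leq -1$ step uses $\gcd(v,\sigma_y^j(d))=1$ for $j\leq 0$ to run partial fractions. These two halves are dual rather than identical, which is why the proof naturally splits into two symmetric cases. Beyond this, the argument is essentially bookkeeping of denominators and degrees and should present no genuine obstacle.
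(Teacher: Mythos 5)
Your proof is correct and takes essentially the same route as the paper's: induction on $|\ell|$ with the two symmetric cases, where your modular inversion of $u$ modulo $\sigma_y^\ell(d)^k$ (resp.\ the partial-fraction split after taking $g_0=-f$) is exactly the paper's extended-Euclidean step producing $vc = su + t\,d^{k\sigma_y^\ell}$ (resp.\ $u\sigma_y(c) = sv + t\,d^{k\sigma_y^{\ell+1}}$), each relying on the same half of strong coprimality. Even the auxiliary elements coincide: your $g_0 = h/\sigma_y^{\ell-1}(d)^k$ is the paper's $\sigma_y^{-1}(s)/d^{k\sigma_y^{\ell-1}}$, and your choice $g_0=-f$ in the negative case is identical to the paper's.
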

\begin{proof}
Write $K = u/v$ and $f = c/d^{k\sigma_y^\ell}$, where $u,c\in \set F[y]$ with $\gcd(u,v) = 1$ and $\gcd(c,d^{k\sigma_y^\ell}) = 1$.
If $\ell=0$ then letting $g=0$, $a = c$ and $b=0$ immediately yields the assertion.
Now assume that $\ell$ is nonzero. So it is either positive or negative.

If $\ell>0$, then $\gcd(u,d^{k\sigma_y^\ell}) = 1$ since $d$ is strongly coprime with~$K$.
Using the extended Euclidean algorithm, we can find $s,t\in \set F[y]$ 
with $\deg_y(s)<k\deg_y(d)$ such that
\[
vc = s u + t d^{k\sigma_y^\ell}.
\]
Multiplying both sides by $1/(vd^{k\sigma_y^\ell})$ gives
\[
f = \frac{vc}{vd^{k\sigma_y^\ell}} 
= K\frac{s}{d^{k\sigma_y^\ell}}+ \frac{t}{v}.
\]
Adding and subtracting $\sigma_y^{-1}(s)/d^{k\sigma_y^{\ell-1}}$ to the right-hand side, 
we get
\begin{equation}\label{EQ:positive}
f = K\sigma_y\Big(\frac{\sigma_y^{-1}(s)}{d^{k\sigma_y^{\ell-1}}}\Big)
- \frac{\sigma_y^{-1}(s)}{d^{k\sigma_y^{\ell-1}}}
+\frac{\sigma_y^{-1}(s)}{d^{k\sigma_y^{\ell-1}}} + \frac{t}{v}
=\Delta_K(g_0)+\tilde  f+\frac{t}v,
\end{equation}
where $g_0 = \sigma_y^{-1}(s)/d^{k\sigma_y^{\ell-1}}$ and 
$\tilde f = \sigma_y^{-1}(s)/d^{k\sigma_y^{\ell-1}}$.
If $\tilde f = 0$, the proof is then concluded by letting $g = g_0=0$, $a=0$ and $b = t$.
Otherwise, $\tilde f$ is a nonzero proper rational function with denominator $d^{k\sigma_y^{\ell-1}}$,
so by induction on~$\ell$, we can find $\tilde g\in \set F(y)$ and $a,\tilde b\in \set F[y]$
with $\deg_y(a) <k\deg_y(d)$ such that
\[
\tilde f = \Delta_K(\tilde g) + \frac{a}{d^k} + \frac{\tilde b}v,
\]
which, along with \eqref{EQ:positive}, establishes \eqref{EQ:normalmonomial}
with $g=g_0+\tilde g$ and $b = t+\tilde b$.

If $\ell<0$, then $\gcd(v,d^{k\sigma_y^{\ell+1}}) = 1$ since $d$ is strongly coprime with~$K$.
Again, we can employ the extended Euclidean algorithm to find $s,t\in \set F[y]$ with
$\deg_y(s)<k\deg_y(d)$ such that
\[
u\sigma_y(c) = s v + t d^{k\sigma_y^{\ell+1}}. 
\]
Multiplying both sides by $1/(vd^{k\sigma_y^{\ell+1}})$ gives
\[
K\sigma_y(f) =\frac{u\sigma_y(c)}{vd^{k\sigma_y^{\ell+1}}}
= \frac{s}{d^{k\sigma_y^{\ell+1}}} + \frac{t}v.
\]
Adding and subtracting $K\sigma_y(f)$ to $f$, we get
\begin{equation}\label{EQ:negative}
f = K\sigma_y(-f) - (-f) + K\sigma_y(f) 
= K\sigma_y(-f) - (-f) + \frac{s}{d^{k\sigma_y^{\ell+1}}} + \frac{t}v 
=\Delta_K(g_0) + \tilde f + \frac{t}v,
\end{equation}
where $g_0 = -f$ and $\tilde f = s/d^{k\sigma_y^{\ell+1}}$. 
If $\tilde f = 0$, then the assertion follows by letting $g=g_0 = -f$, $a = 0$ and $b = t$.
Otherwise, $\tilde f$ is a nonzero proper rational function with denominator $d^{k\sigma_y^{\ell+1}}$,
so by induction on $\ell$, we can find $\tilde g\in \set F(y)$ and $a,\tilde b\in \set F[y]$
with $\deg_y(a) <k\deg_y(d)$ such that
\[
\tilde f = \Delta_K(\tilde g) + \frac{a}{d^k} + \frac{\tilde b}v,
\]
which, along with \eqref{EQ:negative}, establishes \eqref{EQ:normalmonomial}
with $g=g_0+\tilde g$ and $b = t+\tilde b$.
\end{proof}
\begin{lemma}\label{LEM:normallocal}
Let $K\in \set F(y)$ with denominator~$v$, 
and let $f\in\set F(y)$ be a nonzero proper rational function with denominator $d^\alpha$, 
where $d\in \set F[y]$ is $\sigma_y$-normal and strongly coprime with $K$ 
and $\alpha\in \set N[\sigma_y,\sigma_y^{-1}]\setminus\{0\}$.
Then there exist $g\in \set F(y)$, $a,b\in \set F[y]$ and $k\in \set N$
with $\deg_y(a)<k\deg_y(d)$ such that
\begin{equation}\label{EQ:normallocal}
f =\Delta_K(g) + \frac{a}{d^k} + \frac{b}v.
\end{equation}
\end{lemma}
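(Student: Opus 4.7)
The plan is to reduce the statement to the already-proven ``monomial'' case, Lemma~\ref{LEM:normalmonomial}, via a partial-fraction decomposition with respect to the $\sigma_y$-shifted copies of $d$, and then to merge the resulting normal remainders into a single fraction with denominator a power of~$d$.

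First, write $\alpha=\sum_{i=m}^n k_i\sigma_y^i$ with $k_i\in\set N$, so that $d^\alpha=\prod_{i=m}^n\sigma_y^i(d)^{k_i}$. The key observation is that, because $d$ is $\sigma_y$-normal, the polynomials $\sigma_y^i(d)$ for distinct integers $i$ are pairwise coprime; hence the factors $\sigma_y^i(d)^{k_i}$ (for those $i$ with $k_i>0$) are pairwise coprime. Combining this with the assumption that $f$ is proper, an ordinary partial-fraction decomposition gives
\[
f=\sum_{i=m}^{n}\frac{c_i}{\sigma_y^i(d)^{k_i}},
\]
where each $c_i\in\set F[y]$ satisfies $\deg_y(c_i)<k_i\deg_y(d)$ (and $c_i=0$ when $k_i=0$).

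Next, I would apply Lemma~\ref{LEM:normalmonomial} to each nonzero summand: since $d$ is strongly coprime with~$K$, the hypotheses of that lemma are met with the exponent $\ell=i$, producing $g_i\in\set F(y)$ and $a_i,b_i\in\set F[y]$ with $\deg_y(a_i)<k_i\deg_y(d)$ such that
\[
\frac{c_i}{\sigma_y^i(d)^{k_i}}=\Delta_K(g_i)+\frac{a_i}{d^{k_i}}+\frac{b_i}{v}.
\]
Summing over $i$ yields
\[
f=\Delta_K\!\Big(\sum_i g_i\Big)+\sum_i\frac{a_i}{d^{k_i}}+\frac{\sum_i b_i}{v}.
\]

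Finally, set $k=\max_i k_i$ and consolidate the normal parts by
\[
\sum_i\frac{a_i}{d^{k_i}}=\frac{a}{d^k},\qquad a:=\sum_i a_i\,d^{k-k_i}.
\]
The degree estimate $\deg_y(a_i)\le k_i\deg_y(d)-1$ gives
$\deg_y(a_i d^{k-k_i})\le k\deg_y(d)-1<k\deg_y(d)$, so $\deg_y(a)<k\deg_y(d)$, as required. Taking $g=\sum_i g_i$ and $b=\sum_i b_i$ then yields the asserted decomposition~\eqref{EQ:normallocal}.

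The only nontrivial point in this plan is justifying the partial-fraction decomposition into summands indexed by the $\sigma_y^i(d)^{k_i}$; but this is immediate from the $\sigma_y$-normality of~$d$, which guarantees that $\gcd(\sigma_y^i(d),\sigma_y^j(d))=1$ for $i\neq j$. Everything else is a routine combination of Lemma~\ref{LEM:normalmonomial} with elementary degree bookkeeping.
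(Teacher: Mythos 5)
Your proposal is correct and follows essentially the same route as the paper's proof: partial-fraction decomposition of $f$ along the pairwise coprime factors $\sigma_y^i(d)^{k_i}$ (coprimality coming from $\sigma_y$-normality of $d$), application of Lemma~\ref{LEM:normalmonomial} to each piece, and summation. The only difference is cosmetic: you spell out the final consolidation $\sum_i a_i/d^{k_i} = a/d^k$ with $k=\max_i k_i$ and the attendant degree bound, which the paper leaves implicit in the phrase ``summing all these equations up.''
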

\begin{proof}
Assume that $\alpha = \sum_{i=m}^n k_i\sigma_y^i$, where $m,n\in \set Z$, $m\leq n$, 
$k_i\in \set N$ and $k_mk_n\neq 0$. Since $d$ is $\sigma_y$-normal, the polynomials
$d^{\sigma_y^m}, d^{\sigma_y^{m+1}}, \dots, d^{\sigma_y^n}$ are pairwise coprime.
Then $f$ admits a partial fraction decomposition $f = \sum_{i=m}^nf_i$, where each $f_i$
is either zero or a nonzero proper rational function in $\set F(y)$ with denominator
$d^{k_i\sigma_y^i}$. Applying Lemma~\ref{LEM:normalmonomial} to each nonzero 
$f_i$ yields
\[
f_i = \Delta_K(g_i) + \frac{a_i}{d^{k_i}} + \frac{b_i}v,
\]
where $g_i\in \set F(y)$, $a_i,b_i\in \set F[y]$ and $\deg_y(a_i)<k_i \deg_y(d)$.
Summing all these equations up, we thus obtain \eqref{EQ:normallocal} for some
$g\in\set F(y)$, $a,b\in \set F[y]$ and $k\in \set N$ satisfying $\deg_y(a)<k\deg_y(d)$.
\end{proof}
The main result of this subsection is given below.
\begin{theorem}\label{THM:normal}
Let $K\in \set F(y)$ be $\sigma_y$-reduced with denominator~$v$, 
and let $f\in\set F(y)$ be a proper rational function whose denominator does not have
any nontrivial $\sigma_y$-special factors.
Then there exist $g,h\in \set F(y)$ and $b\in \set F[y]$ such that 
\begin{equation}\label{EQ:normaldecomp}
f= \Delta_K(g) + h + \frac{b}v,
\end{equation}
and $h$ is proper whose denominator is $\sigma_y$-normal and strongly coprime with $K$.
Moreover, the denominator of $h$ has minimal $y$-degree in the sense that if
there exists another triple $(\tilde g, \tilde h, \tilde b)$ with $\tilde g,\tilde h \in \set F(y)$ 
and $\tilde b\in \set F[y]$ such that
\begin{equation}\label{EQ:normaldecomp2}
f = \Delta_K(\tilde g) + \tilde h + \frac{\tilde b}v,
\end{equation}
then the $y$-degree of the denominator of $h$ is no more than that of $\tilde h$.
In particular, $h$ is equal to zero if $f\in \im(\Delta_K)$.
\end{theorem}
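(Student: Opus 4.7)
The plan is to deploy the strong $\sigma_y$-factorization of the denominator of $f$ together with partial fractions, then apply Lemma~\ref{LEM:normallocal} componentwise for existence, and invoke Lemma~\ref{LEM:minimal} for the minimality assertion.

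First I would express the denominator of $f$ via a strong $\sigma_y$-factorization with respect to $K$. Since by hypothesis the denominator of $f$ has no nontrivial $\sigma_y$-special factors, we may take the $\sigma_y$-special part to be trivial and write the denominator as $p_1^{\alpha_1}\cdots p_m^{\alpha_m}$ in the form of~\eqref{EQ:sigmasfac}, where each $p_i$ is $\sigma_y$-monic, irreducible, pairwise $\sigma_y$-inequivalent (hence pairwise coprime), and strongly coprime with $K$, and each $\alpha_i\in\set N[\sigma_y,\sigma_y^{-1}]\setminus\{0\}$. The transformation $p^\alpha=(p^{\sigma_y^\ell})^{\sigma_y^{-\ell}\alpha}$ with the shift $\ell$ from~\eqref{EQ:stronglycoprime} guarantees that such a strong form exists.

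Next I would decompose $f=\sum_{i=1}^m f_i$ by partial fractions, where each $f_i$ is either zero or a nonzero proper rational function whose denominator is $p_i^{\alpha_i}$; this is possible because the $p_i^{\alpha_i}$ are pairwise coprime. Applying Lemma~\ref{LEM:normallocal} to each nonzero $f_i$, I obtain $g_i\in\set F(y)$, $a_i,b_i\in\set F[y]$ and $k_i\in\set N$ with $\deg_y(a_i)<k_i\deg_y(p_i)$ such that
\[
f_i=\Delta_K(g_i)+\frac{a_i}{p_i^{k_i}}+\frac{b_i}{v}.
\]
Summing these identities produces~\eqref{EQ:normaldecomp} with $g=\sum_i g_i$, $h=\sum_i a_i/p_i^{k_i}$, and $b=\sum_i b_i$. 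The denominator of $h$ divides $\prod_i p_i^{k_i}$, which is $\sigma_y$-normal by Proposition~\ref{PROP:props}(i) and strongly coprime with $K$; and $h$ is proper since each summand is proper and the $p_i^{k_i}$ are pairwise coprime.

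For the minimality part, suppose another triple $(\tilde g,\tilde h,\tilde b)$ also satisfies~\eqref{EQ:normaldecomp2}. Subtracting the two decompositions yields
\[
h-\tilde h+\frac{b-\tilde b}{v}\in\im(\Delta_K),
\]
and since the denominator of $h$ is $\sigma_y$-normal and strongly coprime with $K$, Lemma~\ref{LEM:minimal} gives the desired degree bound. Finally, if $f\in\im(\Delta_K)$, taking the trivial decomposition $f=\Delta_K(\tilde g)+0+0/v$ forces the denominator of $h$ to have $y$-degree at most zero, and since $h$ is proper, this forces $h=0$.

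The most delicate step is ensuring that the denominator of $h$ is genuinely $\sigma_y$-normal and strongly coprime with $K$; this is precisely why we need the strong $\sigma_y$-factorization rather than an arbitrary one, and it is also what makes Lemma~\ref{LEM:minimal} applicable in the minimality argument. Everything else is a careful bookkeeping of the partial fraction decomposition and the local reduction from Lemma~\ref{LEM:normallocal}.
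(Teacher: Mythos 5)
Your proposal is correct and follows essentially the same route as the paper's proof: a strong $\sigma_y$-factorization of the denominator (with trivial special part), partial fractions, componentwise application of Lemma~\ref{LEM:normallocal}, Proposition~\ref{PROP:props}(i) to certify that the denominator of $h$ is $\sigma_y$-normal, and Lemma~\ref{LEM:minimal} for both the minimality claim and the final vanishing argument. The only cosmetic difference is that the paper dispenses with the case $f=0$ explicitly first, which in your version is handled implicitly by the empty partial-fraction sum.
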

\begin{proof}
If $f =0$, then the assertion is evident by letting $g = h = b = 0$. Assume that $f$ is nonzero and
write $f = a/d$ with $a,d\in \set F[y]$, $\gcd(a,d) = 1$ and $\deg_y(a)<\deg_y(d)$. 
Let $d=d_s\prod_{i=1}^md_i^{\alpha_i}$ be a strong $\sigma_y$-factorization of $d$ with respect to~$K$,
where $d_s\in \set F$ by assumption. Then $f$ has a partial fraction decomposition 
\begin{equation}\label{EQ:pfd}
f = \sum_{i=1}^mf_i,
\end{equation}
where $f_i\in \set F(y)$ is a nonzero proper rational function with denominator $d_i^{\alpha_i}$ 
for $i = 1,\dots,m$. For all $i = 1,\dots,m$, we can apply Lemma~\ref{LEM:normallocal} to $f_i$ 
to find $g_i\in \set F(y)$, $a_i,b_i\in \set F[y]$ and $k_i\in\set N$ with 
$\deg_y(a_i)<k_i\deg_y(d_i)$ such that
\begin{equation}\label{EQ:fi}
f_i = \Delta_K(g_i)+ \frac{a_i}{d_i^{k_i}} + \frac{b_i}v.
\end{equation}
Then \eqref{EQ:normaldecomp} follows by 
letting $g=\sum_{i=1}^mg_i$, $h = \sum_{i=1}^ma_i/d_i^{k_i}$ and $b = \sum_{i=1}^mb_i$.
Note that the irreducible polynomials $d_1,\dots,d_m$ are $\sigma_y$-normal and 
mutually $\sigma_y$-inequivalent. So they are two by two $\sigma_y$-coprime, and 
thus the denominator of $h$ is $\sigma_y$-normal by Proposition~\ref{PROP:props}~(i).
Because $d_1,\dots,d_m$ are all strongly coprime with $K$, so is the denominator of $h$.
Moreover, $h$ is proper since all the $a_i/d_i^{k_i}$ are proper. 

It remains to show that the $y$-degree of the denominator of $h$ is minimal.
Assume that there exist $\tilde g,\tilde h\in \set F(y)$ and $\tilde b\in \set F[y]$
such that \eqref{EQ:normaldecomp2} holds. Then by \eqref{EQ:normaldecomp},
we have
\[
h - \tilde h + \frac{b-\tilde b}v\in \im(\Delta_K).
\]
It follows from Lemma~\ref{LEM:minimal} that the $y$-degree of the denominator of $h$
is no more than that of~$\tilde h$. Now assume that $f\in \im(\Delta_K)$. 
Then \eqref{EQ:normaldecomp2} holds with $\tilde h = \tilde b = 0$ and thus $h\in \set F[y]$
by the minimality. Since $h$ is proper, it must be zero.
\end{proof}
The proof of Theorem~\ref{THM:normal} induces an algorithm
as follows.

\smallskip\noindent
{\bf NormalReduction.} Given a $\sigma_y$-reduced rational function
$K\in \set F(y)$ with denominator $v$, and a proper rational function $f\in\set F(y)$ 
whose denominator does not have any nontrivial $\sigma_y$-special factors, 
compute two rational functions $g,h\in \set F(y)$ and a polynomial $b\in \set F[y]$ 
such that \eqref{EQ:normaldecomp} holds and $h$ is proper whose denominator is 
$\sigma_y$-normal and strongly coprime with~$K$.
\begin{enumerate}
\item If $f = 0$, then set $g=0, h=0, b = 0$, and return.
\item Compute a strong $\sigma_y$-factorization $d=d_s\prod_{i=1}^md_i^{\alpha_i}$ of the
denominator $d$ of $f$ with respect to~$K$.
\item Compute the partial fraction decomposition \eqref{EQ:pfd} of $f$
with respect to $d=d_s\prod_{i=1}^md_i^{\alpha_i}$.
\item For $i= 1,\dots, m$, apply Lemma~\ref{LEM:normallocal} to $f_i$ to find
$g_i\in \set F(y)$, $a_i,b_i\in \set F[y]$ and $k_i\in\set N$ with 
$\deg_y(a_i)<k_i\deg_y(d_i)$ such that \eqref{EQ:fi} holds.
\item Set 
$
g=\sum_{i=1}^mg_i,  h = \sum_{i=1}^m a_i/d_i^{k_i},
b = \sum_{i=1}^mb_i,
$
and return.
\end{enumerate}
\begin{example}\label{EX:normalred}
Assume that $\sigma_y$ is the $q$-shift operator. Let $K = -qy+1$, which is $\sigma_y$-reduced, 
and let
\[
f=-\frac{q(q-1)y}{(qy - 1)(q^2y - 1)},
\]
whose denominator has no nontrivial $\sigma_y$-special factors, and admits a 
strong $\sigma_y$-factorization $(qy - 1)(q^2y - 1)=d^\alpha$ with $d = -q^2y+1$ and 
$\alpha = \sigma_y^{-1}+1$. Applying the normal reduction to $f$ with respect to $K$ 
yields 
\[
f = \Delta_K\Big(\frac1{qy-1}\Big) + \frac{-1/q}{d}+\frac{1/q}{v},
\]
where $v = 1$. Since the second summand is nonzero, by Theorem~\ref{THM:normal},
$f\notin \im(\Delta_K)$.
\end{example}

\subsection{Special reduction}\label{SUBSEC:specialred}
We consider in this subsection the special part of a rational function in $\set F(y)$,
which, by Lemma~\ref{LEM:special}, will be always zero in the usual shift case. 
Thus we merely need to address the $q$-shift case, in which the special part 
is a Laurent polynomial in $\set F[y,y^{-1}]$, again by Lemma~\ref{LEM:special}.
For this purpose, we require the notion of $\sigma_y$-standard rational functions.
\begin{definition}\label{DEF:standard}
A rational function in~$\set F(y)$ with numerator $u$ and denominator $v$ is said to be {\em $\sigma_y$-standard}
if it is $\sigma_y$-reduced, and additionally in the $q$-shift case, $u(0)q^\ell-v(0)\neq 0$ for any negative integer~$\ell$.
\end{definition}

\begin{theorem}\label{THM:special}
Assume that $\sigma_y$ is the $q$-shift operator.
Let $K\in \set F(y)$ be $\sigma_y$-standard with denominator $v$, and
let $f\in\set F(y)$ be a proper rational function whose denominator is $\sigma_y$-special. 
Then there exist $g\in \set F[y^{-1}]$ 
and $b\in \set F[y]$ such that 
\begin{equation}\label{EQ:specialdecomp}
f = \Delta_K(g) + \frac{b}v.
\end{equation}
\end{theorem}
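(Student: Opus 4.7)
The plan is to reduce the problem to a finite lower-triangular linear system over $\set F$ that is solvable thanks to the $\sigma_y$-standard hypothesis on $K$. Write $K = u/v$ with $\gcd(u,v)=1$. Multiplying the desired identity $f = \Delta_K(g) + b/v$ by $v$ shows it is equivalent to finding $g \in \set F[y^{-1}]$ such that
\[
vf - u\sigma_y(g) + vg \in \set F[y],
\]
after which one simply takes $b := vf - u\sigma_y(g) + vg$. By Lemma~\ref{LEM:special}(ii), the hypothesis that $f$ is proper with $\sigma_y$-special denominator forces $f \in y^{-1}\set F[y^{-1}]$, so $vf$ is a Laurent polynomial whose negative-power part lies in $y^{-1}\set F[y^{-1}]$. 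The task thereby reduces to matching this negative-power part by $u\sigma_y(g) - vg$ for a suitable $g \in \set F[y^{-1}]$.

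Next I would examine the action of the map $g \mapsto u\sigma_y(g) - vg$ on the basis $\{y^{-k}\}_{k\geq 1}$. Since $\sigma_y(y^{-k}) = q^{-k}y^{-k}$, a short computation gives
\[
u\sigma_y(y^{-k}) - v\, y^{-k} = q^{-k}(u - q^k v)\, y^{-k}.
\]
Writing $u = u(0) + y\tilde u$ and $v = v(0) + y\tilde v$ with $\tilde u,\tilde v \in \set F[y]$, the negative-power part of the right-hand side equals
\[
q^{-k}\bigl(u(0) - q^k v(0)\bigr)\, y^{-k} + (\text{terms supported in } y^{-k+1},\dots,y^{-1}).
\]
The crucial observation is that the leading coefficient $u(0) - q^k v(0)$ is nonzero for every $k \geq 1$: this is precisely the content of the $\sigma_y$-standard hypothesis applied with $\ell = -k < 0$, namely $u(0)q^{-k} - v(0) \neq 0$.

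Consequently, the images of $y^{-1}, y^{-2}, \dots, y^{-N}$ under the negative-power-part projection form a lower-triangular family in $y^{-1}\set F[y^{-1}]$ with nonzero diagonal entries. If $N$ denotes the pole order at $y=0$ of $vf$ and $\sum_{k=1}^N c_k y^{-k}$ its negative-power part, back-substitution from $k=N$ down to $k=1$ uniquely determines scalars $\alpha_N, \alpha_{N-1}, \dots, \alpha_1 \in \set F$ such that $g := \sum_{k=1}^N \alpha_k y^{-k} \in \set F[y^{-1}]$ makes $u\sigma_y(g) - vg$ match the negative-power part of $vf$. The only delicate point in the whole argument is the nonvanishing of the pivots $u(0) - q^k v(0)$, and this is exactly the reason the $\sigma_y$-standard notion was introduced in Definition~\ref{DEF:standard} as a strengthening of the $\sigma_y$-reduced property.
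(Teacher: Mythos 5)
Your proposal is correct and is essentially the paper's own argument in linear-algebra clothing: the paper cancels the lowest-order term of $vf$ iteratively using $g_0 = cy^m/(u(0)q^m - v(0))$, which is exactly your back-substitution on the lower-triangular system, and both hinge on the same pivots $u(0)q^{-k}-v(0)\neq 0$ supplied by the $\sigma_y$-standard hypothesis. The only cosmetic difference is that you organize the elimination as solving a finite triangular system over $\set F$, while the paper presents it as a terminating reduction loop on the tail degree.
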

\begin{proof}
Since $\sigma_y$ is the $q$-shift operator and $f$ is proper with $\sigma_y$-special denominator,
it follows from Lemma~\ref{LEM:special} that $f = a/y^k$ for some $k\in \set N$ and $a\in \set F[y]$ 
with $\deg_y(a)<k$. Thus $vf = va/y^k$ is a Laurent polynomial in $\set F[y,y^{-1}]$.
Let $m = \tdeg(vf)$. If $m \geq 0$, then $vf\in \set F[y]$ and thus
letting $g = 0$ and $b = vf$ concludes the proof. Now assume that $m<0$, 
and write $K = u/v$, where $u\in \set F[y]$ with $\gcd(u,v) = 1$.
Since $K$ is $\sigma_y$-standard, 
$u(0)q^m-v(0) \neq 0$. Define
\[
g_0 = \frac{cy^m}{u(0)q^m-v(0)},
\]
where $c$ is the coefficient of $y^m$ in~$vf$. Then $g_0\in \set F[y^{-1}]$ and
\[
vf - (u\sigma_y(g_0) - vg_0) 
= vf - (cy^m + \text{higher terms in $y$}),
\]
which is again a Laurent polynomial in $\set F[y,y^{-1}]$ but with tail degree greater than~$m$,
as opposed to $m$ for the initial Laurent polynomial $vf$.
Thus, repeating this process at most $(-m)$ times yields a Laurent polynomial in $\set F[y,y^{-1}]$ 
with tail degree at least $0$, that is, a polynomial in $\set F[y]$. In other words,
we can find $g \in \set F[y^{-1}]$ and $b\in \set F[y]$ such that 
$vf = u\sigma_y(g)-vg+b$, giving
\[
f = \frac{vf}v = \frac{u\sigma_y(g)-vg + b}v
=\Delta_K(g) + \frac{b}v.
\]
\end{proof}
We now turn the above proof into the following algorithm.

\smallskip\noindent
{\bf SpecialReduction.} Assume that $\sigma_y$ is the $q$-shift operator.
Given a $\sigma_y$-standard rational function $K\in \set F(y)$ 
with numerator $u$ and denominator $v$,
and a proper rational function $f\in\set F(y)$ whose denominator is $\sigma_y$-special, 
compute a Laurent polynomial $g\in \set F[y^{-1}]$ and a polynomial $b\in \set F[y]$ 
such that \eqref{EQ:specialdecomp} holds.
\begin{enumerate}
\item Set $g=0$, $b = vf$ and $m = \tdeg_y(b)$. 
\item While $m <0$ do
\begin{itemize}
\item[2.1] Set $g_0 = cy^m/(u(0)q^m-v(0))$, where $c$ is the coefficient of $y^m$ in~$b$.
\item[2.2] Update $g$ to be $g+g_0$, $b$ to be $b-(u\sigma_y(g_0)-vg_0)$, 
and $m$ to be $\tdeg_y(b)$.
\end{itemize}
\item Return $g$ and $b$.
\end{enumerate}
\begin{example}\label{EX:specialred}
Assume that $\sigma_y$ is the $q$-shift operator. Let $K = -qy+1$
and 
\[
f = \frac{(q-1)(q^2-1)}{y^2}, 
\]
whose denominator is $\sigma_y$-special. 
It is readily seen from definition that $K$ is $\sigma_y$-standard.
Applying the special reduction to $f$ with respect to $K$ yields
\[
f = \Delta_K\Big(\frac{q^2(y-q+1)}{y^2}\Big) + \frac{q^2}{v},
\]
where $v = 1$.
\end{example}
\subsection{Polynomial reduction}\label{SUBSEC:polyred}
To deal with the remaining polynomial part, we present in this subsection a polynomial reduction which 
reduces the input polynomial into one lying in a finite-dimensional linear subspace over $\set F$. 
This reduction was first presented in \cite{BCCLX2013},
and later extended in various ways~\cite{CHKL2015,CKK2016,CvHKK2018,DHL2018,CDK2021}.

Let $K\in \set F(y)$ be $\sigma_y$-standard with numerator $u$ and denominator $v$. 
We define an $\set F$-linear map $\phi_K$
from $\set F[y]$ to itself by sending $p$ to $u\sigma_y(p)-vp$ for all $p\in \set F[y]$,
and call it the {\em map for polynomial reduction with respect to~$K$}.
Then the image of $\phi_K$, denoted by $\im(\phi_K)$, is an $\set F$-linear subspace of~$\set F[y]$.
We denote by $\im(\phi_K)^\top$ the $\set F$-linear subspace of $\set F[y]$ spanned by 
monomials in $y$ whose $y$-degrees are distinct from those of all polynomials in $\im(\phi_K)$, that is,
\[
\im(\phi_K)^\top = \spanning_{\set F}\big\{y^d\mid d\in \set N \ \text{and}\ d \neq \deg_y(p)\ 
\text{for all}\ p \in \im(\phi_K)\big\}.
\]
Following the proof of \cite[Lemma 4.1]{CHKL2015} verbatim, we obtain that 
$\set F[y] = \im(\phi_K) \oplus\im(\phi_K)^\top$. 
Thus we call $\im(\phi_K)^\top$ the {\em standard complement of $\im(\phi_K)$}.
Elements in a standard complement enjoy an important property, which will be useful
in determining the $\sigma_y$-summability of $\sigma_y$-hypergeometric terms.
\begin{lemma}\label{LEM:scomp}
Let $K\in \set F(y)$ be $\sigma_y$-standard with denominator $v$.
If $p\in \im(\phi_K)^\top$ and $p/v\in \im(\Delta_K)$, then $p$ is equal to zero.
\end{lemma}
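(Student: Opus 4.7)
The plan is to prove the lemma by showing that the hypotheses already force $p\in\im(\phi_K)$; the direct sum decomposition $\set F[y]=\im(\phi_K)\oplus\im(\phi_K)^\top$ recalled just above then yields $p=0$ at once.

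Since $p/v\in\im(\Delta_K)$, I can write
\[
p = u\sigma_y(g)-vg \quad\text{for some}\ g\in\set F(y),
\]
where $u$ denotes the numerator of~$K$. If $g$ can be chosen in $\set F[y]$, then $p=\phi_K(g)\in\im(\phi_K)$ and we are done. To this end I would decompose $g = g_p + g_s + g_n$ into its polynomial, special, and normal parts via the $\sigma_y$-canonical representation. Since $\phi_K(g_p)$ is already a polynomial, the remaining sum $(u\sigma_y(g_s)-vg_s)+(u\sigma_y(g_n)-vg_n)$ must also lie in $\set F[y]$. By Lemma~\ref{LEM:special}, in the $q$-shift case the first summand has poles only at $y=0$ whereas the second has poles only at the roots of $\sigma_y$-normal irreducible factors, which are disjoint from $y=0$; in the usual shift case the special part is trivial. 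Thus each summand must separately lie in $\set F[y]$, reducing the problem to showing $g_s=g_n=0$.

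For $g_n$, I would run the classical denominator-bound argument. Writing $g_n=a/d_n$ with $\gcd(a,d_n)=1$ and $d_n$ $\sigma_y$-normal, the condition $(u\sigma_y(a)d_n - va\sigma_y(d_n))/(d_n\sigma_y(d_n))\in\set F[y]$ forces the divisibilities $d_n\mid v\sigma_y(d_n)$ and $\sigma_y(d_n)\mid u d_n$. Assuming $d_n\notin\set F$ and choosing any irreducible factor $\alpha\mid d_n$, let $k_{\min}$ and $k_{\max}$ be respectively the smallest and largest integers $k$ with $\sigma_y^k(\alpha)\mid d_n$. The $\sigma_y$-normality of $\alpha$ rules out internal cancellation in each divisibility, forcing $\sigma_y^{k_{\min}}(\alpha)\mid v$ and $\sigma_y^{k_{\max}+1}(\alpha)\mid u$; applying $\sigma_y^{k_{\max}+1-k_{\min}}$ to the former and comparing with the latter contradicts the $\sigma_y$-reducedness of $K$. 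Hence $g_n=0$.

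For $g_s$, only the $q$-shift case is nontrivial. Write $g_s=\sum_{i=1}^m c_i y^{-i}$ with $m\ge 0$ and $c_m\ne 0$ whenever $m\ge 1$. A short Laurent expansion of $u\sigma_y(g_s)-vg_s$ around $y=0$ shows that its coefficient at $y^{-m}$ equals $c_m(u(0)q^{-m}-v(0))$. Since $u\sigma_y(g_s)-vg_s\in\set F[y]$, this coefficient must vanish; but the $\sigma_y$-standard condition $u(0)q^\ell-v(0)\ne 0$ for every negative integer~$\ell$, applied with $\ell=-m$, forces $c_m=0$ and hence $m=0$. Consequently $g_s=0$, $g=g_p\in\set F[y]$, and $p\in\im(\phi_K)$, completing the proof. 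The principal obstacle is the denominator-bound step for $g_n$, which relies crucially on the $\sigma_y$-reducedness of $K$; the special-part step, in turn, is precisely where the strictly stronger $\sigma_y$-standard hypothesis (as opposed to mere $\sigma_y$-reducedness) becomes indispensable.
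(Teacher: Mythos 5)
Your proof is correct and takes essentially the same route as the paper's: both reduce the lemma to showing that the rational function $g$ in $p = u\sigma_y(g)-vg$ must be a polynomial, rule out $\sigma_y$-normal factors in its denominator by the extremal-shift argument (your $k_{\min},k_{\max}$ are the paper's tail/head degrees of the $\sigma_y$-exponent) against $\sigma_y$-reducedness, rule out $\sigma_y$-special factors (powers of $y$) by extracting the lowest Laurent coefficient at $y=0$ against $\sigma_y$-standardness, and conclude via $\im(\phi_K)\cap\im(\phi_K)^\top=\{0\}$. The only difference is organizational: you split $g$ into its canonical parts before analyzing them separately, whereas the paper argues directly on the irreducible factors of the denominator of $g$; the underlying arguments are identical.
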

\begin{proof}
Write $K = u/v$, where $u\in \set F[y]$ with $\gcd(u,v) = 1$.
Assume that $p\in \im(\phi_K)^\top$ and $p/v\in \im(\Delta_K)$. Then there exists $g\in \set F(y)$ such that
$p/v = K\sigma_y(g)-g$, or equivalently,
\begin{equation}\label{EQ:scomp}
p = u\sigma_y(g) - v g \in \set F[y].
\end{equation}
It suffices to show that $g$ is a polynomial in $\set F[y]$, because then 
$p\in \im(\phi_K)\cap \im(\phi_K)^\top=\{0\}$ and thus $p = 0$.
Suppose that $g\notin\set F[y]$. Then its denominator $d$ has a monic irreducible factor~$a\in\set F[y]$, 
which is either $\sigma_y$-normal or $\sigma_y$-special.
Assume that $a$ is $\sigma_y$-normal and let $\alpha\in\set N[\sigma_y,\sigma_y^{-1}]$ be the $\sigma_y$-exponent
of~$a$ in $d$ with tail and head degrees $m$ and $n$, respectively. Then $\sigma_y^m(a)$
is a factor of $d$ but not a factor of $\sigma_y(d)$. It follows from \eqref{EQ:scomp}
that $\sigma_y^m(a)$ divides~$v$. Similarly, since $\sigma_y^{n+1}(a)$ is a factor
of $\sigma_y(d)$ but not a factor of $d$, we have $\sigma_y^{n+1}(a)$ divides $u$ by~\eqref{EQ:scomp},
which contradicts with the condition that $K$ is $\sigma_y$-reduced. 
Thus $a$ must be $\sigma_y$-special. 
Since $a$ is irreducible, it does not belong to~$\set F$. Then it follows from Lemma~\ref{LEM:special} that $\sigma_y$ is the 
$q$-shift operator and $a = y$. Thus $d = y^k$ for some $k\in\set Z^+$ and $g = b/y^k$ for some 
$b\in \set F[y]$ with $y\nmid b$. By \eqref{EQ:scomp}, we get
$
y^kp = uq^{-k}\sigma_y(b) - vb.
$
Since $\sigma_y$ is the $q$-shift operator and $k>0$,
letting $y=0$ on the both sides of the above equation yields that $(u(0)q^{-k} - v(0))b(0)=0$.
Since $y\nmid b$, we have $b(0)\neq 0$ and thus $u(0)q^{-k} - v(0) = 0$. 
Note that $k$ is positive. So we have derived a contradiction with the assumption that $K$ is $\sigma_y$-standard.
Therefore, $g\in \set F[y]$ and then $p = 0$.
\end{proof}
Since $\set F[y] = \im(\phi_K) \oplus\im(\phi_K)^\top$, a polynomial $p\in \set F[y]$ can be uniquely decomposed 
as $p = p_1+p_2$ with $p_1\in \im(\phi_K)$ and $p_2\in \im(\phi_K)^\top$. 
We will see shortly that such a decomposition can be easily computed using echelon bases for $\im(\phi_K)$
and $\im(\phi_K)^\top$. By an {\em echelon basis}, we mean an $\set F$-basis in which 
different elements have distinct $y$-degrees. In order to obtain such a basis, we start by
finding an ordinary $\set F$-basis of~$\im(\phi_K)$.
\begin{lemma}\label{LEM:polyred}
Let $K\in \set F(y)$ be $\sigma_y$-standard. 
Then the following assertions hold.
\begin{itemize}
\item[(i)] The map $\phi_K$ is injective if $K$ is unequal to one in the usual shift case or it is 
not a power of $q$ in the $q$-shift case.
\item[(ii)] The set $\{\phi_K(y^i)\mid i\in \set N\}\setminus\{0\}$ is an $\set F$-basis for $\im(\phi_K)$.
\end{itemize}
\end{lemma}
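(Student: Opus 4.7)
The plan is to prove (i) by contrapositive using Proposition~\ref{PROP:redprop}, and then deduce (ii) by combining the trivial spanning property of $\{\phi_K(y^i)\}$ with a linear-independence argument that invokes either part (i) or a direct kernel computation in the two exceptional cases.

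For part (i), I would suppose $\phi_K(p) = 0$ for some nonzero $p\in \set F[y]$ and write $K = u/v$, so that the hypothesis reads $u\sigma_y(p) = vp$, i.e., $K^{-1} = v/u = \sigma_y(p)/p$. Since $K$ is $\sigma_y$-standard and in particular $\sigma_y$-reduced, so is $K^{-1}$ (the $\sigma_y$-coprimality relation on a pair of polynomials being symmetric in its arguments). Applying Proposition~\ref{PROP:redprop} to $K^{-1}$ and $p$, I conclude that $K^{-1}$ equals $1$ in the usual shift case or is a power of $q$ in the $q$-shift case, and inverting gives the same for $K$, contrary to the hypothesis of~(i).

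For part (ii), the set $\{\phi_K(y^i)\mid i\in\set N\}$ spans $\im(\phi_K)$ by $\set F$-linearity of $\phi_K$, and discarding zeros preserves the span. For linear independence of the nonzero members, I would suppose $\sum_{i\in I} c_i\phi_K(y^i) = 0$, where $I\subset \set N$ is finite with $\phi_K(y^i)\neq 0$ for each $i\in I$, and set $p = \sum_{i\in I}c_i y^i$, so that $p\in\ker \phi_K$. If $K$ satisfies the hypothesis of~(i), part~(i) immediately forces $p = 0$ and hence all $c_i = 0$. Otherwise $K=1$ or $K=q^\ell$ for some $\ell\in\set Z$; a short direct calculation (using that $q$ is not a root of unity) shows $\ker\phi_K = \set F\cdot y^{j_0}$ for the unique $j_0\in\set N$ with $\phi_K(y^{j_0}) = 0$. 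Since $j_0 \notin I$ by construction, $p$ has no $y^{j_0}$ term yet lies in $\set F\cdot y^{j_0}$, forcing $p = 0$ and hence $c_i = 0$ for all $i \in I$.

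The main subtlety is to ensure, in the exceptional subcase of (ii), that the unique kernel exponent actually belongs to $\set N$. This is precisely where the standardness hypothesis on $K$ enters: the extra condition $u(0)q^m - v(0) \neq 0$ for $m<0$ rules out $K = q^\ell$ with $\ell > 0$ (where $j_0 = -\ell$ would be negative), so the kernel monomial remains safely within the index set used to span $\im(\phi_K)$.
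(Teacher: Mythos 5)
Your proof is correct and takes essentially the same route as the paper's: part (i) is the same contradiction argument via Proposition~\ref{PROP:redprop} (the paper writes $K=\sigma_y(1/p)/(1/p)$ directly instead of passing to $K^{-1}=\sigma_y(p)/p$, which is an equivalent reformulation since $\sigma_y$-reducedness is symmetric), and part (ii) uses the identical case split, invoking (i) in the generic case and, in the exceptional cases $K=1$ or $K=q^{-k}$ with $k\in\set N$ (nonpositivity of the exponent forced by standardness exactly as you note), an explicit computation --- yours of $\ker\phi_K$, the paper's of the images $\phi_K(y^i)$ --- which amounts to the same calculation.
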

\begin{proof}
Write $K = u/v$ with $u,v\in \set F[y]$ and $\gcd(u,v) = 1$.

\smallskip\noindent
(i) Assume that $K$ is unequal to one in the usual shift case or it is not a power of $q$ in the 
$q$-shift case. Suppose that there exists a nonzero polynomial $p\in \set F[y]$ such that
$\phi_K(p) = 0$. Then $K = \sigma_y(1/p)/(1/p)$, a contradiction with Proposition~\ref{PROP:redprop},
since $K$ is $\sigma_y$-standard and thus $\sigma_y$-reduced.
Therefore, $\phi_K(p) \neq 0$ for all $p\in \set F[y]\setminus\{0\}$ and then the map $\phi_K$ is injective. 

\smallskip\noindent
(ii) Let $\Lambda = \{\phi_K(y^i)\mid i\in \set N\}$.
It suffices to show the assertion when $K$ is one in the usual shift case or $K$ is a power of $q$ in the
$q$-shift case, for, otherwise, by part~(i), the map $\phi_K$
is injective and thus $\Lambda\setminus\{0\} = \Lambda$ is an $\set F$-basis for $\im(\phi_K)$.

In the usual shift case, namely the case when $\sigma_y$ is the usual shift operator, assume that $K = 1$.
Then we can take $u = v = 1$. It follows that $\phi_K(1) = 0$ and
\[
\phi_K(y^i) = u(y+1)^i-vy^i = i y^{i-1} + (\text{lower terms in~$y$}) \neq 0
\quad\text{for all}\ i \in\set Z^+.
\]
Thus $\Lambda\setminus\{0\} = \{\phi_K(y^i)\mid i\in\set Z^+\}$, which is clearly an $\set F$-basis
for $\im(\phi_K)$. 

In the $q$-shift case, namely the case when $\sigma_y$ is the $q$-shift operator, assume that $K$ is a power of $q$. 
Since $K$ is $\sigma_y$-standard, we have $K = q^{-k}$ for some
$k\in \set N$. So we can take $u = q^{-k}$ and $v = 1$. It follows that
$\phi_K(y^k) = 0$ and
\[
\phi_K(y^i) = (q^{-k+i}-1)y^i\neq 0\quad\text{for all}\ i \in\set N\setminus\{k\}.
\]
Thus $\Lambda\setminus\{0\} = \{\phi_K(y^i)\mid i\in \set N\setminus\{k\}\}$, which is again an $\set F$-basis
for $\im(\phi_K)$.

In summary, the set $\{\phi_K(y^i)\mid i\in \set N\}\setminus\{0\}$ is always an $\set F$-basis for $\im(\phi_K)$.
\end{proof}

We now make a case distinction to demonstrate how to construct an echelon
basis for $\im(\phi_K)$, along with one for $\im(\phi_K)^\top$, from the ordinary $\set F$-basis 
$\{\phi_K(y^i)\mid i\in \set N\}\setminus\{0\}$. 
This distinction is slightly different from the one in~\cite[\S 4.2]{CHKL2015} for the usual shift case and 
that in~\cite[\S 5]{DHL2018} for the $q$-shift case, in the sense that it also includes cases related to 
rational $\sigma_y$-hypergeometric terms.

Let $K\in \set F(y)$ be $\sigma_y$-standard with numerator $u$ and denominator $v$. Set
\[
u = \sum_{i=0}^d u_i y^i\quad\text{and}\quad
v = \sum_{i=0}^d v_i y^i,
\]
where $d = \max\{\deg_y(u),\deg_y(v)\}$ and $u_i,v_i\in \set F$ for all $i =0, \dots, d$.
Note that $u_d$ and $v_d$ cannot be both zero.

\smallskip\noindent
{\em Case~1.} $\sigma_y$ is the usual shift operator. Then
\begin{align}
\phi_K(y^i) &= u(y+1)^i - v y^i = u((y+1)^i- y^i) + (u-v)y^i
\nonumber\\[1ex]
&=(u_d-v_d)y^{d+i} + (iu_d +u_{d-1}-v_{d-1})y^{d+i-1} + 
(\text{lower terms in~$y$})\label{EQ:shiftcase}
\end{align}
for all $i\in \set N$.

\smallskip
{\em Case~1.1.} $u_d-v_d\neq 0$. Then by \eqref{EQ:shiftcase}, 
$\deg_y(\phi_K(y^i))=d+i$ for all $i\in \set N$. This implies that the images of different
powers of $y$ under $\phi_K$ have distinct $y$-degrees, and thus form an echelon basis for $\im(\phi_K)$.
It follows that $\im(\phi_K)^\top$ has an echelon basis $\{1,y,\dots,y^{d-1}\}$
and its dimension is equal to $d$.

\smallskip
{\em Case~1.2.} $u_d-v_d = 0$ and $d=0$. Then $u_d = v_d \neq 0$ and $K=1$. 
By \eqref{EQ:shiftcase}, $\phi_K(1) = 0$ and $\deg_y(\phi_K(y^i)) = i-1$ for all $i\in \set Z^+$, 
implying that $\{\phi_K(y^i)\mid i\in\set Z^+\}$ is an echelon basis for $\im(\phi_K)$
and thus $\im(\phi_K)^\top = \{0\}$, that is, $\dim(\im(\phi_K)^\top) = 0$. 

\smallskip
{\em Case~1.3.} $u_d-v_d = 0$, $d>0$ and $iu_d +u_{d-1}-v_{d-1}\neq 0$ for all $i\in \set N$.
Then by \eqref{EQ:shiftcase}, $\deg_y(\phi_K(y^i)) = d+i - 1$ for all $i \in \set N$. 
Similar to Case~1.1, we see that $\{\phi_K(y^i)\mid i\in \set N\}$ is an echelon basis for $\im(\phi_K)$.
It follows that $\im(\phi_K)^\top$ has an echelon basis $\{1,y,\dots,y^{d-2}\}$
and its dimension is equal to $d-1$.

\smallskip
{\em Case~1.4.} $u_d-v_d = 0$, $d>0$ and $ku_d +u_{d-1}-v_{d-1}= 0$ for some $k\in \set N$.
Note that $u_d = v_d \neq 0$. So the integer $k=(v_{d-1}-u_{d-1})/u_d$ is unique. 
Then by~\eqref{EQ:shiftcase}, 
$\deg_y(\phi_K(y^i)) = d+i-1$ for all $i\in \set N$ with $i\neq k$, and $\deg_y(\phi_K(y^k)) < d+k-1$.
Since $d>0$, we have $K\neq 1$. By Lemma~\ref{LEM:polyred}, $\phi_K(y^k)\neq 0$ and 
$\{\phi_K(y^i)\mid i\in \set N\}$ is an $\set F$-basis for $\im(\phi_K)$.
Eliminating $y^{d+k-2}, y^{d+k-3},\dots,y^{d-1}$ from $\phi_K(y^k)$ successively
by the elements $\phi_K(y^{k-1}),\phi_K(y^{k-2}),\dots,\phi_K(y^0)$, we will obtain a polynomial
$r\in \set F[y]$ with $\deg_y(r)<d-1$. Since $\phi_K(y^0),\dots,\phi_K(y^{k-1}),\phi_K(y^k)$
are linearly independent over $\set F$, the polynomial $r$ is nonzero. So
$
\{\phi_K(y^i)\mid i\in \set N\ \text{with}\ i\neq k\}\cup\{r\}
$
is an echelon basis for $\im(\phi_K)$.
It follows that $\im(\phi_K)^\top$ has an echelon basis 
\[
\{1,y,\dots,y^{\deg_y(r)-1},y^{\deg_y(r)+1},\dots,y^{d-2},y^{d+k-1}\}
\]
and its dimension is equal to $d-1$.

\smallskip\noindent
{\em Case~2.} $\sigma_y$ is the $q$-shift operator. Then
\begin{align}
\phi_K(y^i) &= uq^iy^i - v y^i
=(u_dq^i-v_d)y^{d+i} + (u_{d-1}q^i-v_{d-1})y^{d+i-1}+\cdots+(u_0q^i-v_0)y^i
\label{EQ:qshiftcase}
\end{align}
for all $i\in \set N$.

\smallskip
{\em Case~2.1.} $u_dq^i-v_d\neq 0$ for all $i\in \set N$. Then by \eqref{EQ:qshiftcase},
$\phi_K(y^i)\neq 0$ and $\deg_y(\phi_K(y^i)) = d+i$ for all $i\in \set N$. 
Thus $\{\phi_K(y^i)\mid i\in \set N\}$ is an echelon basis for $\im(\phi_K)$. 
It follows that $\im(\phi_K)^\top$ has an echelon basis $\{1,y,\dots,y^{d-1}\}$
and its dimension is equal to $d$.

\smallskip
{\em Case~2.2.} $u_dq^k-v_d= 0$ for some $k\in \set N$ and $d = 0$. Then $u_d q^k = v_d \neq 0$
and $K = q^{-k}$. By~\eqref{EQ:qshiftcase}, $\phi_K(y^k) = 0$ and $\deg_y(\phi_K(y^i)) = i$ for all 
$i\in \set N$ with $i\neq k$. Thus $\{\phi_K(y^i)\mid i\in \set N\setminus\{k\}\}$ is an echelon 
basis for $\im(\phi_K)$. It follows that $\im(\phi_K)^\top$ is a one-dimensional subspace of $\set F[y]$ 
spanned by~$\{y^k\}$.

\smallskip
{\em Case~2.3.} $u_dq^k-v_d= 0$ for some $k\in \set N$ and $d>0$. The integer $k$ is unique,
because $q$ is neither zero nor a root of unity. Then by~\eqref{EQ:qshiftcase}, 
$\deg_y(\phi_K(y^i)) = d+i$ for all $i\in \set N$ with $i\neq k$, and 
$\deg_y(\phi_K(y^k)) < d+k$. Since $d>0$, we know that $K$ cannot be a power of~$q$.
It then follows from Lemma~\ref{LEM:polyred} that $\phi_K(y^k)\neq 0$ and $\{\phi_K(y^i)\mid i\in \set N\}$ 
is an $\set F$-basis for~$\im(\phi_K)$.
Similar to Case~1.4, we can successively eliminate 
$y^{d+k-1}, y^{d+k-2},\dots,y^{d}$ from $\phi_K(y^k)$ by 
$\phi_K(y^{k-1}),\phi_K(y^{k-2}),\dots,\phi_K(y^0)$ and obtain a polynomial 
$r\in \set F[y]$ with $\deg_y(r)<d$. Since $\phi_K(y^0),\dots,\phi_K(y^{k-1}),\phi_K(y^k)$
are linearly independent over $\set F$, we have $r\neq 0$. Thus 
$
\{\phi_K(y^i)\mid i\in \set N\ \text{with}\ i\neq k\}\cup\{r\}
$
is an echelon basis for $\im(\phi_K)$.
It follows that $\im(\phi_K)^\top$ has an echelon basis 
\[
\{1,y,\dots,y^{\deg_y(r)-1},y^{\deg_y(r)+1},\dots,y^{d-1},y^{d+k}\}
\]
and its dimension is equal to $d$. 

\smallskip
The above case distinction leads to an interesting consequence, which 
tells us that the standard complement has a finite dimension, implying that 
all polynomials therein are ``sparse".
\begin{proposition}\label{PROP:sparse}
Let $K\in \set F(y)$ be $\sigma_y$-standard with numerator $u$ and denominator $v$.
The standard complement of $\im(\phi_K)$ is of dimension 
\begin{equation}\label{EQ:dimension}
\max\{\deg_y(u),\deg_y(v)\}
+\begin{cases}
-\llbracket0\leq \deg_y(u-v)\leq \deg_y(u)-1\rrbracket &\text{in the usual shift case,}\\[2ex]
\llbracket K\ \text{is a nonpositive power of $q$}\rrbracket &\text{in the $q$-shift case,}
\end{cases}
\end{equation}
where the notation $\llbracket \cdot \rrbracket$ equals 1 if the argument 
is true and 0 otherwise.
\end{proposition}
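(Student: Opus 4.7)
My plan is to match the case distinction just carried out (Cases~1.1--2.3) against the proposed formula, since those subcases already produced explicit values for $\dim_\set F(\im(\phi_K)^\top)$. Write $d = \max\{\deg_y(u),\deg_y(v)\}$; the formula asserts that the dimension equals $d$ up to a correction in $\{-1,0,+1\}$ governed by one of the two indicators, so I merely have to check that the indicators flip at the right case boundaries.

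In the usual shift case, the previous computation yielded dimensions $d$, $0$, $d-1$, $d-1$ in Cases~1.1--1.4 respectively, so I would verify that $\llbracket 0 \leq \deg_y(u-v) \leq \deg_y(u)-1 \rrbracket$ equals $1$ in Cases~1.3 and~1.4 and $0$ in Cases~1.1 and~1.2. The checks are: in Case~1.1 the leading coefficient of $u-v$ is $u_d - v_d \neq 0$, placing $\deg_y(u-v) = d \geq \deg_y(u)$ so the upper bound fails; in Case~1.2 the equalities $d = 0$ and $u_0 = v_0$ combined with $\gcd(u,v) = 1$ force $u = v$, so $\deg_y(u-v) = -\infty$ and the lower bound fails; in Cases~1.3 and~1.4, $u_d = v_d \neq 0$ gives $\deg_y(u) = d$, coprimality with $d > 0$ rules out $u = v$, and so $0 \leq \deg_y(u-v) \leq d-1 = \deg_y(u)-1$.

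In the $q$-shift case, the previous computation yielded dimensions $d$, $1$, $d$ in Cases~2.1--2.3, so I would check that $\llbracket K \text{ is a nonpositive power of } q \rrbracket$ equals $1$ exactly in Case~2.2. Indeed, in Case~2.2 the hypotheses $d = 0$ and $u_0 q^k = v_0$ give $K = u_0/v_0 = q^{-k}$ with $k \in \set N$. In Cases~2.1 and~2.3 with $d > 0$, the rational function $K$ is non-constant hence certainly not a power of $q$; in Case~2.1 with $d = 0$, $K = u_0/v_0$ is constant, and the standing hypothesis $u_0 q^i - v_0 \neq 0$ for all $i \in \set N$ is precisely the statement that $K \neq q^{-i}$ for all $i \in \set N$.

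The only mildly subtle point is the equivalence, in the usual shift case, between the bracketed condition and the disjunction ``Case~1.3 or~1.4'': once one notices that $0 \leq \deg_y(u-v) \leq \deg_y(u)-1$ holds exactly when $u_d = v_d$, $d > 0$, and $u \neq v$, the verification becomes a straightforward comparison with the four-way case split. The $q$-shift side is more direct, since the indicator is stated in the clean form ``$K$ is a nonpositive power of $q$'' and the case split~2.1--2.3 is organized around exactly that dichotomy.
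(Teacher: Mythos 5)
Your proposal is correct and is essentially the paper's own argument: the paper presents Proposition~\ref{PROP:sparse} as an immediate consequence of Cases~1.1--1.4 and 2.1--2.3, leaving exactly the verification you supply (that the indicator $\llbracket 0\leq \deg_y(u-v)\leq \deg_y(u)-1\rrbracket$ selects Cases~1.3/1.4 and that $\llbracket K\text{ is a nonpositive power of }q\rrbracket$ selects Case~2.2) to the reader. Your checks at the case boundaries, including the use of $\gcd(u,v)=1$ to rule out $u=v$ when $d>0$ and the observation that the hypothesis of Case~2.1 with $d=0$ is precisely $K\neq q^{-i}$ for all $i\in\set N$, are all accurate.
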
 
\begin{example}\label{EX:polynomialred}
Assume that $\sigma_y$ is the $q$-shift operator. Let $K = -qy+1$, which is $\sigma_y$-standard.
According to Case~2.1, $\im(\phi_K)$ has an echelon basis $\{\phi_K(y^i)\mid i\in \set N\}$. Thus
$\im(\phi_K)^\top$ has a basis $\{1\}$ and its dimension is one.
\end{example}
With echelon bases at hand, we are able to project 
a polynomial onto $\im(\phi_K)$ and $\im(\phi_K)^\top$, respectively.
The main process is summarized below.

\smallskip\noindent
{\bf PolynomialReduction.} Given a $\sigma_y$-standard rational function $K\in \set F(y)$, 
and a polynomial $b\in \set F[y]$, 
compute two polynomials $a,p\in \set F[y]$ with $p\in \im(\phi_K)^\top$ such that
$b = \phi_K(a) + p$.
\begin{enumerate}
\item If $b = 0$ then set $a = 0$ and $p = 0$; return.
\item Find polynomials $g_1,\dots,g_m\in \set F[y]$ such that
the set $\{\phi_K(g_1),\dots,\phi_K(g_m)\}$ consists of 
all polynomials in an echelon basis of $\im(\phi_K)$ whose 
$y$-degrees are no more than $\deg_y(b)$, and 
\[
0\leq \deg_y(\phi_K(g_1))<\dots<\deg_y(\phi_K(g_m))\leq \deg_y(b).
\]
\item For $i=m,m-1,\dots,1$, perform linear elimination to find $c_m,c_{m-1},\dots,c_1\in \set F$
such that 
\[
b-\sum_{i=1}^mc_i\phi_K(g_i)\in \im(\phi_K)^\top.
\]
\item Set $a = \sum_{i=1}^mc_ig_i$ and $p = b-\sum_{i=1}^mc_i\phi_K(g_i)$, and return.
\end{enumerate}

\subsection{Remainders of rational functions}
Incorporating the normal, the special and the polynomial
reduction, we are able to further reduce the shell to a ``minimal normal form".
The following definition formalizes such a form.
\begin{definition}\label{DEF:remainder}
Let $K\in \set F(y)$ be $\sigma_y$-standard with denominator $v$, and let $f\in\set F(y)$. 
Another rational function
$r$ in $\set F(y)$ is called a {\em $\sigma_y$-remainder} of $f$ with respect to $K$ if 
$f-r\in \im(\Delta_K)$ and $r$ can be written in the form
\begin{equation}\label{EQ:remainder}
r = h + \frac{p}v,
\end{equation}
where $h\in \set F(y)$ is proper with denominator being $\sigma_y$-normal and strongly
coprime with $K$, and $p\in \im(\phi_K)^\top$. For brevity, we just say that $r$
is a $\sigma_y$-remainder with respect to~$K$ if $f$ is clear from the context.
In addition, we call the denominator of $h$ the {\em significant denominator} of~$r$.
\end{definition}
The notion of significant denominators is well-defined, since the denominator of $h$ in \eqref{EQ:remainder}
is strongly coprime with $K$ and thus is coprime with~$v$.

Remainders describe the ``minimum" distance of a rational function in $\set F(y)$ from 
the $\set F$-linear subspace $\im(\Delta_K)$, and help us decide the $\sigma_y$-summability.
\begin{proposition}\label{PROP:remainder}
Let $K\in \set F(y)$ be $\sigma_y$-standard with numerator $u$ and denominator $v$,
and let $r$ be a $\sigma_y$-remainder 
with respect to~$K$ of the form \eqref{EQ:remainder}. Then the following assertions hold.
\begin{itemize}
\item[(i)] The total number of nonzero terms in $p$ is no more than the number given by~\eqref{EQ:dimension}.
\item[(ii)] If there exists $\tilde r\in \set F(y)$ such that $r-\tilde r\in \im(\Delta_K)$, 
then by writing $\tilde r$ in the form
\begin{equation}\label{EQ:tilder}
\tilde r = \tilde h + \frac{\tilde p}v
\end{equation}
for some $\tilde h\in \set F(y)$ and $\tilde p\in \set F[y]$,
we have the $y$-degree of the denominator of $h$ is no more than that of $\tilde h$.
\item[(iii)] $r\in \im(\Delta_K)$ if and only if $r = 0$.
\end{itemize}
\end{proposition}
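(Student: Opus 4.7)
The plan is to dispatch the three parts in turn, each reducing directly to material already developed earlier in the paper.

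For part (i), the argument is immediate from the case analysis in Section~\ref{SUBSEC:polyred}. In each of the Cases~1.1--2.3 treated there, the echelon basis exhibited for $\im(\phi_K)^\top$ consists entirely of distinct monomials $y^i$, and Proposition~\ref{PROP:sparse} records that the total number of these basis monomials equals the quantity in \eqref{EQ:dimension}. Since $p\in\im(\phi_K)^\top$, expanding $p$ in this monomial basis shows that the support of $p$ has size at most the stated dimension, which is exactly the claim.

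For part (ii), I would subtract \eqref{EQ:remainder} and \eqref{EQ:tilder} to rewrite the hypothesis $r-\tilde r\in\im(\Delta_K)$ as
$$
h-\tilde h + \frac{p-\tilde p}{v}\in\im(\Delta_K),
$$
noting that $p-\tilde p\in\set F[y]$ since both summands are polynomials. By the definition of a $\sigma_y$-remainder, the denominator of $h$ is $\sigma_y$-normal and strongly coprime with $K$, so Lemma~\ref{LEM:minimal} applies verbatim to this expression and yields precisely the desired degree inequality.

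For part (iii), the backward direction is trivial. For the forward direction, I would invoke part~(ii) with $\tilde r=0$, taking $\tilde h=0$ (whose denominator we regard as $1$) and $\tilde p=0$. Part (ii) then forces the denominator of $h$ to have $y$-degree at most~$0$, hence to lie in $\set F\setminus\{0\}$; since $h$ is proper, this compels $h=0$. Consequently $r=p/v$ and $p/v\in\im(\Delta_K)$ with $p\in\im(\phi_K)^\top$, so Lemma~\ref{LEM:scomp} delivers $p=0$ and therefore $r=0$.

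The whole proposition is in this way a bookkeeping consequence of Lemmas~\ref{LEM:minimal} and~\ref{LEM:scomp}, together with the structural description of $\im(\phi_K)^\top$ obtained in Section~\ref{SUBSEC:polyred}. The only mildly subtle point, and what I would regard as the main (still rather mild) obstacle, is recognizing in (iii) that properness of $h$ combined with a constant denominator forces $h=0$; this is what allows part~(ii) to be chained with Lemma~\ref{LEM:scomp} to conclude $r=0$.
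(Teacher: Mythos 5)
Your proposal is correct and takes essentially the same route as the paper: part (i) from Proposition~\ref{PROP:sparse}, part (ii) from Lemma~\ref{LEM:minimal} after subtracting the two representations, and part (iii) by first forcing $h=0$ and then applying Lemma~\ref{LEM:scomp} to conclude $p=0$. The only deviation is in (iii), where the paper obtains $h=0$ by citing Theorem~\ref{THM:normal} while you instead specialize your part (ii) to $\tilde r=0$; both arguments ultimately rest on Lemma~\ref{LEM:minimal}, and your variant is, if anything, marginally more careful, since the minimality clause of Theorem~\ref{THM:normal} is phrased for the $h$ produced by that theorem's own construction rather than for the $h$ of an arbitrary $\sigma_y$-remainder.
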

\begin{proof}
(i) Since $p\in \im(\phi_K)^\top$, the assertion immediately follows by Proposition~\ref{PROP:sparse}.

\smallskip\noindent
(ii) Assume that there exists $\tilde r\in \set F(y)$ such that $r-\tilde r\in \im(\Delta_K)$ and
write $\tilde r$ in the form \eqref{EQ:tilder}. Then by~\eqref{EQ:remainder}, we obtain
\[
h-\tilde h + \frac{p-\tilde p}v \in \im(\Delta_K).
\]
The assertion is thus evident by Lemma~\ref{LEM:minimal}.

\smallskip\noindent
(iii) The sufficiency is clear. For the necessity, assume that $r\in \im(\Delta_K)$, that is, $h+p/v\in \im(\Delta_K)$. 
We see from Theorem~\ref{THM:normal} that $h$ is actually the zero polynomial. It then
follows that $p/v\in \im(\Delta_K)$. Since $K$ is $\sigma_y$-standard and $p\in \im(\phi_K)^\top$, 
we have $p = 0$ by Lemma~\ref{LEM:scomp},
and thus $r = 0$.
\end{proof}

For describing our reduction algorithm, it remains to show that every nonzero rational function 
in~$\set F(y)$ has a $\sigma_y$-standard kernel, and it is not hard to construct one.
\begin{proposition}\label{PROP:kernel}
Let $f$ be a nonzero rational function in~$\set F(y)$. Then $f$ has a $\sigma_y$-standard kernel~$K$.
Moreover, there exists $S\in \set F(y)$ whose denominator does not have any nontrivial $\sigma_y$-special
factors such that $(K,S)$ is an RNF of~$f$.
\end{proposition}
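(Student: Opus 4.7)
The plan is to begin with any RNF $(K_0,S_0)$ of $f$ (whose existence is guaranteed by the gcd-based constructions in~\cite{AbPe2002b,CHM2005}) and then apply one or two \emph{gauge transformations}
\[
(K,S)\longmapsto\bigl(K\,\sigma_y(g)/g,\,S/g\bigr),\qquad g\in\set F(y)\setminus\{0\},
\]
each of which preserves the identity $K\sigma_y(S)/S=f$. The role of $g$ is to clean up both requirements: the denominator of the new shell should have no nontrivial $\sigma_y$-special factor, and the new kernel should be $\sigma_y$-standard, while $\sigma_y$-reducedness must survive every step.

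In the usual shift case there is in fact nothing to prove. Lemma~\ref{LEM:special}(i) says every $\sigma_y$-special polynomial lies in $\set F$, so no denominator carries a nontrivial $\sigma_y$-special factor, and being $\sigma_y$-standard collapses to being $\sigma_y$-reduced. Hence any RNF produced by~\cite{AbPe2002b} already satisfies both conclusions. In the $q$-shift case, Lemma~\ref{LEM:special}(ii) identifies the nontrivial $\sigma_y$-special polynomials as the nonzero powers of $y$, so we write $S_0=y^m T$ with $m\in\set Z$ chosen so that neither the numerator nor the denominator of $T$ is divisible by $y$. The gauge transformation with $g=y^m$ then replaces $(K_0,S_0)$ by $(K_1,T)$ where $K_1=q^m K_0$, since $\sigma_y(y^m)/y^m=q^m$ is a nonzero constant in~$\set F$. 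Multiplying by a unit preserves $\sigma_y$-reducedness, and the denominator of $T$ has no $y$-factor and thus no nontrivial $\sigma_y$-special factor.

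It remains to secure the standardness condition in the $q$-shift case. Write $K_1=u_1/v_1$ with $\gcd(u_1,v_1)=1$, so that $u_1(0)$ and $v_1(0)$ cannot both vanish. If one of them is zero, then $u_1(0)q^\ell-v_1(0)$ is already nonzero for every nonzero integer $\ell$, so $K_1$ is $\sigma_y$-standard and $(K_1,T)$ is the desired RNF. Otherwise both are nonzero, and since $q$ is not a root of unity there is at most one integer $\ell_0$ satisfying $u_1(0)q^{\ell_0}=v_1(0)$; if no such $\ell_0$ exists or $\ell_0\geq 0$, again $K_1$ is $\sigma_y$-standard. In the one remaining case $\ell_0<0$, we apply a second gauge transformation with $g=y^{\ell_0}$, producing the kernel $K=q^{\ell_0}K_1$ with numerator $u=q^{\ell_0}u_1$, denominator $v=v_1$, and hence $u(0)=v(0)$. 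For every nonzero integer $\ell'$ we then have $u(0)q^{\ell'}-v(0)=u(0)(q^{\ell'}-1)\neq 0$, so $K$ is $\sigma_y$-standard. The corresponding shell is $T/y^{\ell_0}=T\cdot y^{-\ell_0}$ where $-\ell_0>0$, so only a positive power of $y$ is introduced, and the denominator remains free of $\sigma_y$-special factors. The only delicate point in the whole argument, and the thing to watch, is that each multiplication by a power of $y$ alters the kernel just by the unit constant $q^{j}$, which is exactly what keeps $\sigma_y$-reducedness intact through both gauge steps.
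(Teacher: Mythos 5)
Your proof is correct and follows essentially the same route as the paper: normalize the shell by extracting the power of $y$ (so its denominator is free of nontrivial $\sigma_y$-special factors), then, in the one problematic case where $u(0)/v(0)$ is a positive power $q^{-\ell_0}$ of $q$, trade a factor $y^{-\ell_0}$ into the shell against the unit $q^{\ell_0}$ in the kernel, which preserves $\sigma_y$-reducedness and forces $u(0)=v(0)$, hence standardness. The paper's proof performs exactly these two steps (phrased as ``assume $\tilde S$ is $\sigma_y$-monic'' followed by $K=q^{-m}\tilde K$, $S=y^m\tilde S$), so your gauge-transformation formulation is only a cosmetic repackaging of the same argument.
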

\begin{proof}
The assertions are evident in the usual shift case, because in this case, $\sigma_y$-standard rational functions
in $\set F(y)$ are exactly $\sigma_y$-reduced ones, and all $\sigma_y$-special polynomials in~$\set F[y]$ belong 
to $\set F$ by Lemma~\ref{LEM:special}.

Now assume that $\sigma_y$ is the $q$-shift operator. Let $f\in\set F(y)\setminus\{0\}$ and 
$(\tilde K,\tilde S)$ be an RNF of~$f$.
Then $\tilde K$ is $\sigma_y$-reduced. Notice that $\sigma_y(y^\ell)/y^\ell = q^\ell$ for all $\ell\in\set Z$. 
So we may assume without loss of generality that $\tilde S$ is $\sigma_y$-monic.
Write $\tilde K = u/v$ with $u,v\in \set F[y]$ and $\gcd(u,v) = 1$. Notice that the constant terms $u(0)$ and $v(0)$ cannot be both zero.
If one of them is zero, or neither is zero and meanwhile $u(0)/v(0)$ is not a positive power of~$q$, then $u(0) q^\ell-v(0) \neq 0$ 
for any negative integer~$\ell$ as $q$ is nonzero, implying that $\tilde K$ is already $\sigma_y$-standard 
and $(\tilde K,\tilde S)$ gives 
a desired RNF of~$f$. Suppose that $u(0),v(0)$ are both nonzero and $u(0)/v(0) = q^m$ for some $m\in \set Z^+$.
Define $K= q^{-m}\tilde K$ and $S = y^m \tilde S$. Since $(\tilde K,\tilde S)$ is an RNF of~$f$, it follows that
\[
f = \tilde K\frac{\sigma_y(\tilde S)}{\tilde S} = q^{-m}\tilde K\frac{\sigma_y(y^m\tilde S)}{y^m\tilde S}
= K \frac{\sigma_y(S)}{S}.
\]
Since $\tilde K$ is $\sigma_y$-reduced, so is $K$ and thus $(K,S)$ is an RNF of~$f$.
Since $K= q^{-m}\tilde K$, it has numerator $q^{-m}u$ and denominator~$v$. Notice that $u(0)/v(0) = q^m$
and $q$ is not a root of unity. So $q^{-m}u(0) q^\ell - v(0) = v(0)(q^\ell-1)\neq 0$ for any negative integer~$\ell$.
It follows that $K$ is $\sigma_y$-standard. It remains to show that the denominator of $S$ has no
nontrivial $\sigma_y$-special factors, which, by Lemma~\ref{LEM:special}, is equivalent to verify that $y$ does
not divide the denominator of~$S$. This follows by the observation that $S = y^m \tilde S$, $m>0$ and 
$\tilde S$ is $\sigma_y$-monic.
\end{proof}

With everything in place, we now present our reduction algorithm, 
which determines the $\sigma_y$-summability of a $\sigma_y$-hypergeometric term
without solving any auxiliary difference equations explicitly.

\smallskip\noindent
{\bf HypergeomReduction.} Given a $\sigma_y$-hypergeometric term $T$, compute  
a $\sigma_y$-hypergeometric term $H$ whose $\sigma_y$-quotient $K$ is $\sigma_y$-standard, 
and two rational functions $g,r\in\set F(y)$ such that 
\begin{equation}\label{EQ:hyperred}
T = \Delta_y(gH) + rH,
\end{equation}
and $r$ is a $\sigma_y$-remainder with respect to~$K$.
\begin{enumerate}
\item Find a $\sigma_y$-standard kernel~$K$ and a corresponding shell $S$ of $\sigma_y(T)/T$, and set $H = T/S$.

\item Compute the canonical representation 
\begin{equation}\label{EQ:Screp}
S = f_p+f_s+f_n,
\end{equation}
where $f_p$, $f_s$ and $f_n$ are the polynomial, special and normal parts of~$S$, respectively.

\item Apply {\bf NormalReduction} to $f_n$ with respect to $K$ to find $g,h\in \set F(y)$
and $b\in \set F[y]$ such that
$
f_n = \Delta_K(g) + h + b/v,
$
and $h$ is proper whose denominator is $\sigma_y$-normal and strongly coprime with~$K$.

\item If $f_s\neq 0$ then apply {\bf SpecialReduction} to $f_s$ with respect to $K$ to find
$g_s\in \set F[y^{-1}]$ and $b_s\in \set F[y]$ such that
$
f_s = \Delta_K(g_s) + b_s/v,
$
and update $g$ to be $g+g_s$ and $b$ to be $b+b_s$.
\item Apply {\bf PolynomialReduction} to $vf_p + b$ with respect to $K$ to find 
$a\in \set F[y]$ and $p\in \im(\phi_K)^\top$ such that 
$
vf_p + b = \phi_K(a) + p.
$

\item Update $g$ to be $a+g$ and set $r = h+ p/v$, and return $H,g,r$.
\end{enumerate}
\begin{remark}
If the shell found in step~1 of the above algorithm is further chosen to be one whose denominator contains no nontrivial
$\sigma_y$-special factors (cf.\ Proposition~\ref{PROP:kernel}), then step~4 in the above algorithm
can be completely skipped.
\end{remark}
\begin{theorem}\label{THM:shellred}
For a $\sigma_y$-hypergeometric term $T$, the algorithm {\bf HypergeomReduction}
computes a $\sigma_y$-hypergeometric term $H$ whose $\sigma_y$-quotient $K$ is $\sigma_y$-standard, 
a rational function $g\in\set F(y)$ and a $\sigma_y$-remainder $r$ with respect to~$K$ such that 
\eqref{EQ:hyperred} holds.
Moreover, $T$ is $\sigma_y$-summable if and only if $r= 0$.
\end{theorem}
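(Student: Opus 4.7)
The plan is to verify both claims by tracing through the steps of \textbf{HypergeomReduction} and then exploiting the minimality of the $\sigma_y$-remainder established in Proposition~\ref{PROP:remainder}.

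For the decomposition \eqref{EQ:hyperred}, I would first invoke Proposition~\ref{PROP:kernel} to justify step~1: it yields a $\sigma_y$-standard kernel $K$ and a shell $S$ satisfying $T=SH$ and $\sigma_y(H)=KH$. The canonical representation $S=f_p+f_s+f_n$ of step~2 then feeds into Theorem~\ref{THM:normal} (step~3), Theorem~\ref{THM:special} (step~4), and the polynomial reduction (step~5). Using the identity $\phi_K(a)/v=\Delta_K(a)$ to convert $\phi_K(a)$ into a $\Delta_K$-term, all the $\Delta_K$-contributions collect into a single $\Delta_K(g)$, leaving a residual of the form $r=h+p/v$ with $h$ proper whose denominator is $\sigma_y$-normal and strongly coprime with $K$ (from Theorem~\ref{THM:normal}), and with $p\in\im(\phi_K)^\top$ (from the polynomial reduction). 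By Definition~\ref{DEF:remainder}, $r$ is then a $\sigma_y$-remainder. Multiplying $S=\Delta_K(g)+r$ through by $H$ and using $\Delta_y(fH)=\Delta_K(f)\,H$, which follows from $\sigma_y(fH)=\sigma_y(f)\sigma_y(H)=K\sigma_y(f)H$, delivers the required identity $T=\Delta_y(gH)+rH$.

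For the summability criterion, sufficiency is immediate: if $r=0$, then $T=\Delta_y(gH)$. For necessity, suppose $T=\Delta_y(G)$ for some $\sigma_y$-hypergeometric term $G$. The trivial case $T=0$ yields $r=0$ directly, so I may assume $T\neq 0$. The remark recalled after Definition~\ref{DEF:hypergeom} then tells us that $T$ and $G$ must be similar; combined with $T=SH$, this lets me write $G=g'H$ for some $g'\in\set F(y)$. Computing $\Delta_y(G)=\Delta_y(g'H)=\Delta_K(g')\,H$ and comparing with $T=SH$ gives $S=\Delta_K(g')$. Subtracting the identity $S=\Delta_K(g)+r$ yields $r=\Delta_K(g'-g)\in\im(\Delta_K)$, and Proposition~\ref{PROP:remainder}\,(iii) then forces $r=0$.

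The only step requiring genuine thought is the passage from the existence of the hypergeometric certificate $G$ to the representation $G=g'H$, which rests on the similarity observation following Definition~\ref{DEF:hypergeom}. All of the remaining work is essentially bookkeeping: the hard structural results about $\sigma_y$-normal and $\sigma_y$-special polynomials, strongly coprime factorizations, and the standard complement $\im(\phi_K)^\top$ have been assembled earlier precisely so that Proposition~\ref{PROP:remainder}\,(iii) can be invoked here as a black box to conclude $r=0$.
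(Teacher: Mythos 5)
Your proposal is correct and follows essentially the same route as the paper's own proof: trace the algorithm's steps to assemble $S=\Delta_K(g)+r$, multiply by $H$ via the identity $\Delta_K(f)H=\Delta_y(fH)$, and settle the summability criterion by reducing it to the membership $r\in\im(\Delta_K)$ and invoking Proposition~\ref{PROP:remainder}~(iii). The only (cosmetic) difference is that you explicitly unfold the similarity argument yielding $G=g'H$, whereas the paper delegates that step to the equivalence recorded around~\eqref{EQ:summable} in the preliminaries.
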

\begin{proof}
The correctness of step~1 is guaranteed by Proposition~\ref{PROP:kernel}.
Since $S\in\set F(y)$, the canonical representation \eqref{EQ:Screp} holds.
Applying {\bf NormalReduction} to $f_n$ and {\bf SpecialReduction} to $f_s$ if $f_s\neq 0$
with respect to~$K$, we obtain, after step~4, that
\[
f_n+f_s = \Delta_K(g) + h +\frac{b}v,
\]
which, together with \eqref{EQ:Screp}, leads to
\begin{equation}\label{EQ:shell0}
S = \Delta_K(g)+h + \frac{vf_p + b}v.
\end{equation}
The algorithm {\bf PolynomialReduction} in step~5 then computes the decomposition
\[
vf_p + b = \phi_K(a) + p = u\sigma_y(a) - va + p.
\]
Substituting this into \eqref{EQ:shell0}, we see that
\[
S = \Delta_K(g) + h + K\sigma_y(a) - a + \frac{p}v
= \Delta_K(a+g) + h + \frac{p}v.
\]
After renaming the symbols in step~6 and multiplying both sides by~$H$, we get \eqref{EQ:hyperred} since
$T = SH$ and $K=\sigma_y(H)/H$. 
Thus $T$ is $\sigma_y$-summable if and only if $rH$ is $\sigma_y$-summable, 
which happens if and only if $r\in\im(\Delta_K)$ and then, by Proposition~\ref{PROP:remainder} (iii), 
is equivalent to say that $r = 0$.
\end{proof}
\begin{example}\label{EX:shiftshellred}
Assume that $\sigma_y$ is the usual shift operator. Let
\[
T = \frac{y^3 + 4y^2 + 2y - 2}{(y + 1)(y + 2)}y!.
\]
Clearly, $T$ is a hypergeometric term whose $\sigma_y$-quotient has a kernel $K = y+1$ and 
a corresponding shell
\[
S = \frac{y^3 + 4y^2 + 2y - 2}{(y + 1)(y + 2)}.
\]
Since $K$ is $\sigma_y$-reduced, it is $\sigma_y$-standard by definition. Note that in the usual 
shift case, all $\sigma_y$-special polynomials in $\set F[y]$ belong to $\set F$. So we have obtained 
a $\sigma_y$-standard kernel $K$ and a corresponding shell $S$ whose denominator $(y+1)(y+2)$ 
has no nontrivial $\sigma_y$-special factors. Computing the canonical representation of $S$ gives
\[
S \ \ = \ \ \underbrace{y+1\vphantom{-\frac{3y+4}{(y+1)(y+2)}}}_{f_p}\ \
+\underbrace{0\vphantom{-\frac{3y+4}{(y+1)(y+2)}}}_{f_s}
+\ \ \underbrace{\Big(-\frac{3y+4}{(y+1)(y+2)}\Big)}_{f_n}.
\]
For the normal part $f_n$, its denominator admits a strong $\sigma_y$-factorization
$(y+1)(y+2) = d^\alpha$ with $d = y+2$ and $\alpha = \sigma_y^{-1}+1$. Applying the normal reduction to $f_n$
with respect to $K$ yields
\[
-\frac{3y+4}{(y+1)(y+2)} = \Delta_K\Big(\frac1{y+1}\Big) - \frac1{y+2} - \frac1v,
\]
where $v = 1$. Since $f_s = 0$, we skip the special reduction. Combining with the polynomial part~$f_p$, 
we then use the polynomial reduction with respect to $K$ to obtain $v (y+1) -1 = \phi_K(1) + 0$. Thus
\[
S = \Delta_K\Big(\frac{y+2}{y+1}\Big) -\frac1{y+2},
\]
and consequently,
\[
T = \Delta_y\Big(\frac{y+2}{y+1}H\Big)-\frac1{y+2}H,
\]
where $H = T/S = y!$. So the term $T$ is not $\sigma_y$-summable.
\end{example}

\begin{example}\label{EX:shellred}
Assume that $\sigma_y$ is the $q$-shift operator. Let 
\[
T(k) = \frac{q^{k}(q^{2k+3} - q^{k+2} - q^{k+1}-q^2 + q + 1)}{(q^{k+1} - 1)(q^{k+2} - 1)}(q, q)_k.
\]
Clearly, $T$ is a $q$-hypergeometric term with $q^k = y$ and $\sigma_y(T(k)) =T(k+1)$.
Then the $\sigma_y$-quotient of $T$ has a kernel $\tilde K = -q(qy - 1)$ and a corresponding shell
\[
\tilde S = \frac{q^3y^2 - q^2y - qy - q^2 + q + 1}{(qy - 1)(q^2y - 1)}.
\]
Notice that $\tilde K$ is not $\sigma_y$-standard by definition.
Performing the standardization process as in the proof of Proposition~\ref{PROP:kernel}, 
we obtain a $\sigma_y$-standard kernel $K = -qy+1$ and a corresponding shell $S=y\tilde S$,
whose denominator $(qy-1)(q^2y-1)$ has no nontrivial $\sigma_y$-special factors.
Computing the canonical representation of~$S$ gives
\[
S = \underbrace{y\vphantom{-\frac{q(q-1)y}{(qy - 1)(q^2y - 1)}}}_{f_p}
+\underbrace{0\vphantom{-\frac{q(q-1)y}{(qy - 1)(q^2y - 1)}}}_{f_s}
+\ \  \underbrace{\Big(-\frac{q(q-1)y}{(qy - 1)(q^2y - 1)}\Big)}_{f_n}.
\]
Applying the normal reduction as in Example~\ref{EX:normalred}, we decompose the normal part $f_n$ as
\[
-\frac{q(q-1)y}{(qy - 1)(q^2y - 1)} = \Delta_K\Big(\frac1{qy-1}\Big) + \frac{1}{q(q^2y-1)}+\frac{1/q}{v},
\]
where $v = 1$. Since $f_s=0$, we skip the special reduction.
Combining with the polynomial part $f_p$, we then use the polynomial reduction with 
respect to $K$ to obtain $v\cdot y + 1/q = \phi_K(-1/q) + 1/q$.
Thus
\[
S = \Delta_K\Big(-\frac{qy - q - 1}{q(qy - 1)}\Big)+\frac{qy}{q^2y - 1},
\]
and consequently,
\[
T = \Delta_y\Big(-\frac{q^{k+1} - q - 1}{q(q^{k+1} - 1)}H\Big) + \frac{q^{k+1}}{q^{k+2} - 1}H,
\]
where $H =T/S= (q,q)_k$. So the term $T$ is not $\sigma_y$-summable.
\end{example}

\section{Sum of $\sigma_y$-remainders}\label{SEC:rem}
In order to compute telescopers for $\sigma_y$-hypergeometric terms, we want to parameterize
the reduction algorithm developed in the preceding section so as to reduce the problem to 
determining the linear dependency among certain $\sigma_y$-remainders.  
However, we are confronted with the same difficulty as mentioned in \cite[\S 5]{CHKL2015}
that the sum of two $\sigma_y$-remainders is not necessarily a $\sigma_y$-remainder.
A complete obstruction preventing the linearity from being true is that
the least common multiple of two $\sigma_y$-normal 
polynomials may not again be $\sigma_y$-normal.
The idea used in~\cite[\S 5]{CHKL2015} to circumvent this obstruction can be literally extended
to our general setting. For the sake of completeness and the convenience of later use, we
present the idea in this section with some subtle adjustments.

Let $d$ and $e$ be two nonzero $\sigma_y$-normal polynomials in $\set F[y]$.
By polynomial factorization and dispersion computation (see \cite{AbPe2002b}), 
one can decompose
\begin{equation}\label{EQ:scdecomp}
e = \tilde e\, d_1^{k_1\sigma_y^{\ell_1}} \cdots\, d_m^{k_m\sigma_y^{\ell_m}},
\end{equation}
where $\tilde e\in\set F[y]$ is $\sigma_y$-coprime with $d$, $d_1, \dots, d_m\in\set F[y]$ 
are pairwise distinct and $\sigma_y$-monic irreducible factors of $d$, $\ell_1,\dots,\ell_m$ are nonzero integers, 
and $k_1,\dots,k_m\in\set Z^+$ are multiplicities of the factors 
$\sigma_y^{\ell_1}(d_1), \dots, \sigma_y^{\ell_m}(d_m)$ in $e$, respectively.
Note that such a decomposition is unique up to the order of factors. 
Following \cite{CHKL2015}, we refer to \eqref{EQ:scdecomp} as the {\em $\sigma_y$-coprime decomposition}
of $e$ with respect to~$d$.
\begin{theorem}\label{THM:remsum}
Let $K\in \set F(y)$ be $\sigma_y$-standard,
and let $r,s$ be two $\sigma_y$-remainders
with respect to~$K$. Then there exists a $\sigma_y$-remainder $t$ with respect to~$K$
such that $s-t\in \im(\Delta_K)$ and $\lambda r + \mu t$ for all $\lambda,\mu \in \set F$
is a $\sigma_y$-remainder with respect to~$K$.
\end{theorem}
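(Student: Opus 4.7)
The plan is to modify $s$ by subtracting a suitable element of $\im(\Delta_K)$ so that the normal part of the resulting term $t$ has a denominator compatible with that of $r$'s normal part. Write the two given $\sigma_y$-remainders as $r = h_r + p_r/v$ and $s = h_s + p_s/v$, where $v$ is the denominator of $K$, and denote by $d$ and $e$ the denominators of $h_r$ and $h_s$; both are $\sigma_y$-normal and strongly coprime with $K$ by Definition~\ref{DEF:remainder}. The only obstruction to $r+s$ being a $\sigma_y$-remainder is that $e$ may share irreducible factors with $d$ up to a nontrivial $\sigma_y$-shift, which would make the least common multiple of $d$ and $e$ fail to be $\sigma_y$-normal.

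To repair this, I would form the $\sigma_y$-coprime decomposition
\[
e = \tilde e\,d_1^{k_1\sigma_y^{\ell_1}}\cdots d_m^{k_m\sigma_y^{\ell_m}}
\]
of $e$ with respect to $d$, in which each $d_i$ is a $\sigma_y$-monic irreducible factor of $d$ and $\tilde e$ is $\sigma_y$-coprime with $d$. Each $d_i$ inherits strong coprimality with $K$ from $d$, and $\tilde e$ inherits it from $e$. A partial fraction decomposition of $h_s$ along this factorization splits off pieces $f_i$ with denominators $d_i^{k_i\sigma_y^{\ell_i}}$; applying Lemma~\ref{LEM:normalmonomial} to each $f_i$ produces $g_i\in\set F(y)$, $a_i,b_i\in\set F[y]$ with $\deg_y(a_i)<k_i\deg_y(d_i)$ such that $f_i = \Delta_K(g_i)+a_i/d_i^{k_i}+b_i/v$. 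Summing these corrections yields
\[
s = \Delta_K\Big(\sum_i g_i\Big) + \tilde h + \sum_{i=1}^m\frac{a_i}{d_i^{k_i}} + \frac{p_s+\sum_ib_i}{v},
\]
where $\tilde h$ is the partial fraction piece of $h_s$ supported on $\tilde e$. Running \textbf{PolynomialReduction} on the accumulated numerator $p_s+\sum_ib_i$ produces $a\in\set F[y]$ and $p_t\in\im(\phi_K)^\top$ with $p_s+\sum_ib_i = \phi_K(a)+p_t$. Setting $t := \tilde h + \sum_i a_i/d_i^{k_i} + p_t/v$ then gives $s - t\in\im(\Delta_K)$.

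It remains to verify that $t$ is a $\sigma_y$-remainder and that every $\set F$-linear combination $\lambda r+\mu t$ is too. The denominator of the non-polynomial part of $t$ is $\tilde e'\prod_{i=1}^m d_i^{k_i}$ for some divisor $\tilde e'$ of $\tilde e$; the factors $\tilde e'$ and $d_1,\dots,d_m$ are $\sigma_y$-normal, pairwise $\sigma_y$-coprime, and strongly coprime with $K$, so by Proposition~\ref{PROP:props}(i) the product is $\sigma_y$-normal and strongly coprime with $K$, and $t$ is a $\sigma_y$-remainder. For $\lambda r + \mu t$, the denominator of its normal part divides the least common multiple of $d$ and $\tilde e'\prod_i d_i^{k_i}$; the irreducible factors of that lcm come either from $d$ (with possibly enlarged exponents at the $d_i$) or from $\tilde e'$, all of which are $\sigma_y$-normal and pairwise $\sigma_y$-coprime (since $\tilde e$ is $\sigma_y$-coprime with $d$) and strongly coprime with $K$, so Proposition~\ref{PROP:props}(i) again applies. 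Meanwhile $\lambda p_r + \mu p_t\in\im(\phi_K)^\top$ by $\set F$-linearity, confirming that $\lambda r+\mu t$ is a $\sigma_y$-remainder.

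The main obstacle I anticipate is the algebraic bookkeeping: one must confirm that each application of Lemma~\ref{LEM:normalmonomial} replaces a shifted block $d_i^{k_i\sigma_y^{\ell_i}}$ by a block $d_i^{k_i}$ aligned with $d$ without introducing fresh $\sigma_y$-shifted copies of any $d_i$, and that the residual factor $\tilde e'$ indeed stays $\sigma_y$-coprime with $d$. Both assertions follow from the inductive construction in Lemma~\ref{LEM:normalmonomial} (which preserves the base polynomial while driving its exponent-shift toward zero) and from the defining property of the $\sigma_y$-coprime decomposition; once these are checked, the remaining verifications reduce to routine applications of Proposition~\ref{PROP:props}(i) and the $\set F$-linearity of $\im(\phi_K)^\top$.
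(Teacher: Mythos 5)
Your proposal is correct and follows essentially the same route as the paper's proof: the $\sigma_y$-coprime decomposition of $e$ with respect to $d$, partial fractions along that factorization, Lemma~\ref{LEM:normalmonomial} applied to each shifted block $d_i^{k_i\sigma_y^{\ell_i}}$, a final polynomial reduction to restore the $\im(\phi_K)^\top$ component, and Proposition~\ref{PROP:props}(i) to conclude that the combined denominators stay $\sigma_y$-normal. The ``obstacle'' you flag at the end is already settled by the statement of Lemma~\ref{LEM:normalmonomial} itself, which guarantees the output remainder part has denominator exactly $d_i^{k_i}$ (shift-free) times $v$, so no further checking is needed there.
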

\begin{proof}
Let $d$ and $e$ be significant denominators of $r$ and $s$, respectively. 
Then $d$ and $e$ are both $\sigma_y$-normal and strongly coprime with~$K$.
Let \eqref{EQ:scdecomp} be 
the $\sigma_y$-coprime decomposition of $e$ with respect to~$d$. Since $d$ and $e$
are both $\sigma_y$-normal, the factors $\tilde e, \sigma_y^{\ell_1}(d_1)$, $\dots, 
\sigma_y^{\ell_m}(d_m)$ are pairwise coprime. Then $s$ can be decomposed as
\begin{equation}\label{EQ:sform}
s = \sum_{i=1}^ms_i+\frac{\tilde a}{\tilde e}+\frac{b}v,
\end{equation}
where $s_i\in \set F(y)$ is a nonzero proper rational function with denominator 
$d_i^{k_i\sigma_y^{\ell_i}}$ for $i=1,\dots,m$, $\tilde a\in \set F[y]$ 
with $\deg_y(\tilde a)<\deg_y(\tilde e)$, $b\in \im(\phi_K)^\top$ and $v$ is the denominator of~$K$. 
For each $i=1,\dots,m$, applying Lemma~\ref{LEM:normalmonomial} to $s_i$ delivers 
\begin{equation}\label{EQ:si}
s_i = \Delta_K(g_i) + \frac{a_i}{d_i^{k_i}} + \frac{b_i}v,
\end{equation}
where $g_i\in \set F(y)$, $a_i,b_i\in \set F[y]$ and $\deg_y(a_i)<k_i\deg_y(d_i)$.
It then follows from \eqref{EQ:sform} that 
\[
s = \Delta_K\Big(\sum_{i=1}^mg_i\Big) 
+ \sum_{i=1}^m\frac{a_i}{d_i^{k_i}} + \frac{\tilde a}{\tilde e}
+ \frac{b+\sum_{i=1}^mb_i}v.
\]
By polynomial reduction, we can find $a\in \set F[y]$ and $p\in \im(\phi_K)^\top$
so that $\sum_{i=1}^mb_i = \phi_K(a) + p$. Let 
\[
h = \sum_{i=1}^m\frac{a_i}{d_i^{k_i}} + \frac{\tilde a}{\tilde e}
\quad\text{and}\quad
t = h+ \frac{b+p}v.
\]
Then $s = \Delta_K(\sum_{i=1}^m g_i+a) + t$ and thus $s-t \in \im(\Delta_K)$. 

Notice that the denominator of $h$ divides the polynomial 
$
\tilde e\, d_1^{k_1}\cdots\, d_m^{k_m},
$
which is $\sigma_y$-normal and strongly coprime with~$K$ since both $d$ and $e$ are 
$\sigma_y$-normal and strongly coprime with~$K$,
and is $\sigma_y$-coprime with $d$ since $\tilde e$ is $\sigma_y$-coprime
with $d$ and $d_i\mid d$ for all $i=1,\dots,m$.
Thus $h$ is proper whose denominator is $\sigma_y$-normal, strongly coprime with~$K$
and $\sigma_y$-coprime with~$d$. Since $b+p\in \im(\phi_K)^\top$, we conclude 
that $t$ is a $\sigma_y$-remainder with respect to $K$ whose significant denominator is 
$\sigma_y$-coprime with~$d$. It follows that the least common multiple of the significant 
denominators of $t$ and $r$ is $\sigma_y$-normal. Therefore, $\lambda r + \mu t$ for all 
$\lambda,\mu \in \set F$ is a $\sigma_y$-remainder with respect to~$K$.
\end{proof}

The proof of the above theorem contains an algorithm, which is outlined below.

\smallskip\noindent
{\bf RemainderLinearization.} Given a $\sigma_y$-standard rational function $K\in \set F(y)$, and
two $\sigma_y$-remainders $r,s$ with respect to~$K$, compute a rational function $g\in \set F(y)$ and
another $\sigma_y$-remainder $t$ with respect to~$K$ such that
\[
s = \Delta_K(g) + t
\]
and $\lambda r + \mu t$ for all $\lambda,\mu \in \set F$ is a $\sigma_y$-remainder 
with respect to~$K$.
\begin{enumerate}
\item Set $d$ and $e$ to be the significant denominators of $r$ and $s$, respectively.

\item Compute the $\sigma_y$-coprime decomposition \eqref{EQ:scdecomp} of~$e$
with respect to~$d$, and then decompose $s$ into the form \eqref{EQ:sform}.

\item For $i=1,\dots, m$, apply Lemma~\ref{LEM:normalmonomial} to $s_i$ to find
$g_i\in \set F(y)$ and $a_i,b_i\in \set F[y]$ with $\deg_y(a_i)<k_i\deg_y(d_i)$ such that
\eqref{EQ:si} holds.

\item Apply {\bf PolynomialReduction} to $\sum_{i=1}^mb_i$ to find $a\in \set F[y]$
and $p\in \im(\phi_K)^\top$ such that 
$
\sum_{i=1}^mb_i = \phi_K(a) + p.
$

\item Set
\[
g = \sum_{i=1}^m g_i + a\quad\text{and}\quad
t = \sum_{i=1}^m\frac{a_i}{d_i^{k_i}} + \frac{\tilde a}{\tilde e}
+ \frac{b+p}v,
\]
and return.
\end{enumerate}

\begin{example}\label{EX:sumofrems}
Assume that $\sigma_y$ is the $q$-shift operator. Let $K = -qy+1$, which is $\sigma_y$-standard,
and let $r = qy/(q^2y - 1)$, $s = q^2y/(q^3y-1)$. Then both $r$ and $s$ are $\sigma_y$-remainders
with respect to~$K$, but their sum is not one, because the denominator $(q^2y-1)(q^3y-1)$ is not
$\sigma_y$-normal. Using the algorithm {\bf RemainderLinearization}, 
we can find another $\sigma_y$-remainder $t=q^3y/(q^2-1)/(q^2y-1)$ of $s$ with respect to~$K$
such that $r+t= q(2q^2 - 1)y/(q^2 - 1)/(q^2y - 1)$, which 
is clearly a $\sigma_y$-remainder with respect to~$K$. 
\end{example}

\section{Creative telescoping via reduction}\label{SEC:rct}
In this section, we will translate terminologies concerning univariate hypergeometric terms to
bivariate ones and propose an algorithm for computing minimal telescopers as well as certificates 
in this bivariate setting.

Let $\set K$ be a field of characteristic zero, and $\set K(x,y)$ be the field of rational functions
in $x$ and $y$ over $\set K$. Let $\sigma_x$ and $\sigma_y$ be both either the usual shift operators 
with respect to $x$ and $y$ respectively defined by
\[
\sigma_x(f(x,y)) = f(x+1,y)\quad\text{and}\quad
\sigma_y(f(x,y)) = f(x,y+1),
\]
or the $q$-shift operators with respect to $x$ and $y$ respectively defined by
\[
\sigma_x(f(x,y)) = f(qx,y)\quad\text{and}\quad
\sigma_y(f(x,y)) = f(x,qy)
\]
for any $f\in \set K(x,y)$, where $q\in \set K$ is neither zero nor a root of unity.
Similarly, we will refer to the former case as the
usual shift case, and the latter one as the $q$-shift case. The pair $(\set K(x,y),\{\sigma_x,\sigma_y\})$
forms a partial ($q$-)difference field. Let $R$ be a {\em partial ($q$-)difference ring extension} of 
$(\set K(x,y),\{\sigma_x,\sigma_y\})$, that is, $R$ is a ring containing $\set K(x,y)$ together
with two distinguished endomorphisms $\sigma_x$ and $\sigma_y$ from $R$ to itself, 
whose respective restrictions to $\set K(x,y)$ agree with the two automorphisms defined earlier.
In analogy with the univariate case in Section~\ref{SEC:prelim}, an element $c\in R$ is
called a {\em constant} if it is invariant under the applications of $\sigma_x$ and~$\sigma_y$.
It is readily seen that all constants in $R$ form a subring of $R$. 
\begin{definition}
An invertible element $T$ of $R$ is called a {\em $(\sigma_x,\sigma_y)$-hypergeometric term} 
if there exist $f,g\in \set K(x,y)$ such that $\sigma_x(T) = fT$ and $\sigma_y(T) = gT$.
We call $f$ and $g$ the $\sigma_x$- and $\sigma_y$-quotients of $T$, respectively.
\end{definition}
In the rest of this paper, let $\set F$ be the field $\set K(x)$ and 
$\set F[S_x]$ be the ring of linear recurrence operators
in~$x$ over $\set F$, in which the commutation rule is that $S_xf = \sigma_x(f)S_x$ for all $f\in \set F$.
The application of an operator $L = \sum_{i=0}^\rho \ell_iS_x^i\in \set F[S_x]$ to a 
$(\sigma_x,\sigma_y)$-hypergeometric term $T$ is defined as
\[
L(T) = \sum_{i=0}^\rho \ell_i\sigma_x^i(T).
\]
\begin{definition}
Let $T$ be a $(\sigma_x,\sigma_y)$-hypergeometric term. 
A nonzero linear recurrence operator $L\in \set F[S_x]$ is called a 
{\em telescoper} for $T$ if there exists a $(\sigma_x,\sigma_y)$-hypergeometric 
term $G$ such that
\[
L(T) = \Delta_y(G).
\] 
We call $G$ a corresponding {\em certificate} of~$L$. The {\em order} of a telescoper 
for~$T$ is defined to be its degree in $S_x$.
%A telescoper of minimal order for $T$ will be called a {\em minimal telescoper} for~$T$.
\end{definition}

For $(\sigma_x,\sigma_y)$-hypergeometric terms, telescopers do not always exist.
Existence criteria were provided by Abramov \cite{Abra2003} for the usual shift case and 
by Chen et al.~\cite{CHM2005} for the $q$-shift case. In order to describe them 
concisely, we introduce the notion of integer-linear rational functions.
Note that two polynomials $a,b$ in $\set K[x,y]$ are associates if and only if
$a=c\,b$ for some $c\in \set K$.
\begin{definition}\label{DEF:intlinear}
An irreducible polynomial $p$ in $\set K[x,y]$ is said to be {\em integer-linear} (over $\set K$)
if there exist $m,n\in \set Z$, not both zero, such that $p$ and $\sigma_x^m\sigma_y^n(p)$ are associates.
A polynomial in~$\set K[x,y]$ is said to be {\em integer-linear} (over $\set K$) if all its irreducible 
factors over $\set K$ are integer-linear. A rational function in $\set F(y)$ is said to be {\em integer-linear} 
(over $\set K$) if its denominator and numerator are both integer-linear.
\end{definition}
We refer to \cite{GHLZ2019, GHLZ2021} for algorithms determining the integer-linearity of rational functions.
The following proposition provides an easy-to-use equivalent form for an integer-linear irreducible polynomial
in~$\set K[x,y]$, which also justifies the ``integer-linear" attribute in the name. 
\begin{proposition}\label{PROP:intlinear}
Let $p$ be an irreducible polynomial in~$\set K[x,y]$.
\begin{itemize}
\item[(i)] In the usual shift case, $p$ is integer-linear if and only if
$
p(x,y) = P(\lambda x+ \mu y)
$
for some $P(z)\in \set K[z]$ and $\lambda,\mu\in \set Z$ not both zero.

\item[(ii)] In the $q$-shift case, $p$ is integer-linear if and only if
$
p(x,y) = x^\alpha y^\beta P(x^\lambda y^\mu)
$
for some $P(z)\in \set K[z]$ and $\alpha,\beta,\lambda,\mu\in \set Z$ with $\lambda, \mu$ not both zero.
\end{itemize}
\end{proposition}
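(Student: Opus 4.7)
The plan is to treat each equivalence by handling its two implications separately. Both ``if'' directions reduce to a direct expansion, while the two ``only if'' directions require a lattice-theoretic argument. For the ``if'' direction of (i), if $p = P(\lambda x + \mu y)$ then $\sigma_x^m \sigma_y^n(p) = P(\lambda x + \mu y + \lambda m + \mu n)$, and I would choose $(m,n) = (-\mu, \lambda)$ when both $\lambda,\mu$ are nonzero, or $(1,0)$ or $(0,1)$ in the degenerate cases, which makes $\sigma_x^m\sigma_y^n(p) = p$. Similarly for (ii), $\sigma_x^m\sigma_y^n(x^\alpha y^\beta P(x^\lambda y^\mu)) = q^{m\alpha+n\beta} x^\alpha y^\beta P(q^{m\lambda+n\mu} x^\lambda y^\mu)$, and the same choice of $(m,n)$ annihilates $m\lambda+n\mu$, so this equals $q^{m\alpha+n\beta}p$, an associate of $p$.

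For the ``only if'' direction of (i), I would start from $p(x+m,y+n) = c\,p(x,y)$ with $(m,n) \neq (0,0)$, and first observe that translation by integers preserves the top-degree homogeneous form, forcing $c = 1$. Then writing $d = \gcd(m,n)$, $m = dm_0$, $n = dn_0$ with $\gcd(m_0,n_0)=1$, and choosing $a,b\in\set Z$ with $am_0 + bn_0 = 1$, I would introduce the unimodular change of coordinates $\tilde s = n_0 x - m_0 y$, $\tilde t = ax + by$, under which $\set K[x,y] = \set K[\tilde s,\tilde t]$ and the shift becomes $(\tilde s,\tilde t)\mapsto (\tilde s,\tilde t+d)$. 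Viewing $p\in\set K[\tilde s][\tilde t]$ and equating coefficients in $\tilde t$ of the identity $p(\tilde s,\tilde t+d) = p(\tilde s,\tilde t)$ forces $p$ to be constant in $\tilde t$, because $d\neq 0$ makes the top coefficient vanish, and induction on degree completes the argument. Hence $p = P(n_0 x - m_0 y)$ for some $P(z)\in\set K[z]$, giving $(\lambda,\mu) = (n_0,-m_0)$.

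For the ``only if'' direction of (ii), I would start from $p(q^m x, q^n y) = c\,p(x,y)$, and use the assumption that $q$ is not a root of unity to deduce that $q^k = c$ has at most one integer solution $k$; the coefficient comparison $(q^{mi+nj} - c)p_{ij} = 0$ then forces $c = q^k$ and pins the support of $p$ onto the line $\{(i,j)\in\set N^2 : mi + nj = k\}$. With the same decomposition $m = dm_0$, $n = dn_0$, the integer points on this line form a coset $(i^*,j^*) + \set Z \cdot (-n_0,m_0)$, so
\[
p \;=\; x^{i^*} y^{j^*} \, Q\!\bigl(x^{-n_0} y^{m_0}\bigr)
\]
for some Laurent polynomial $Q(z)\in\set K[z,z^{-1}]$. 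Writing $Q(z) = z^\tau P(z)$ with $P\in\set K[z]$ and $\tau$ the minimal exponent in $Q$, I would absorb $z^\tau$ into the leading monomial, obtaining $p = x^{i^*-\tau n_0} y^{j^*+\tau m_0} P(x^{-n_0} y^{m_0})$ with $(\lambda,\mu) = (-n_0,m_0) \neq (0,0)$. The main subtlety throughout is arranging the integer parameters correctly: in part (i) the unimodularity of the change of variables is what lets it preserve the polynomial ring structure, and in part (ii) the Laurent-polynomial factorization must be carried out so that the resulting exponents land in $\set Z$; once these are in place, everything else is routine polynomial algebra.
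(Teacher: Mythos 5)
Your proposal is correct, but it takes a genuinely different route from the paper's, most visibly in part (ii). For part (i) the paper establishes $c=1$ exactly as you do and then simply cites Lemma~3.3 of Hou (2004); your unimodular change of coordinates $\tilde s = n_0x-m_0y$, $\tilde t = ax+by$ (turning the shift into the pure translation $\tilde t\mapsto\tilde t+d$, killed by a degree argument in $\tilde t$) is in effect a self-contained reproof of that cited lemma. For part (ii) the paper mimics the shift-case argument multiplicatively: it builds a monomial substitution $h(x,y)=p(x^sy^\mu,x^ty^{-\lambda})$ from a B\'ezout relation, deduces $h(x,q^\ell y)=c\,h(x,y)$, invokes Lemma~\ref{LEM:invariant}(ii) to conclude $h$ is a power of $y$ times an element of $\set K[x,x^{-1}]$, and back-substitutes. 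You instead exploit that the $q$-shift acts diagonally on monomials: coefficient comparison forces $q^{mi+nj}=c$ on the support of $p$, and since $q$ is not a root of unity the support lies on the single line $mi+nj=k$, whose integer points form the coset $(i^*,j^*)+\set Z\cdot(-n_0,m_0)$, from which the claimed form drops out after normalizing the Laurent factor. Your argument is more elementary (no substitution machinery, no appeal to the invariance lemma) and makes the geometry behind the name ``integer-linear'' explicit, namely that the support of $p$ degenerates to a line segment; what the paper's route buys is uniformity, since the same B\'ezout-substitution template handles the shift and $q$-shift cases in parallel and reuses a lemma already proved in its Section~2. One phrasing in your part (i) deserves tightening: rather than ``$d\neq 0$ makes the top coefficient vanish, and induction on degree,'' say directly that if $\deg_{\tilde t}p=N\geq 1$ with leading coefficient $c_N(\tilde s)$, then the coefficient of $\tilde t^{N-1}$ in $p(\tilde s,\tilde t+d)-p(\tilde s,\tilde t)$ equals $Nd\,c_N(\tilde s)\neq 0$ in characteristic zero, a contradiction; this settles the claim in one step.
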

\begin{proof}
(i) We first consider the usual shift case, that is, the case when $\sigma_x$ and $\sigma_y$ are both
usual shift operators. The sufficiency is clear since 
$\sigma_x^{\mu}\sigma_y^{-\lambda}(P(\lambda x+ \mu y)) = P(\lambda x+\mu y)$
for any $P(z)\in \set K[z]$ and $\lambda,\mu\in \set Z$.
Assume that $p$ is integer-linear. Then there are $m,n\in\set Z$, not both zero, such that $p$ and
$\sigma_x^m\sigma_y^n(p)$ are associates. Since $\sigma_x,\sigma_y$ are usual shift operators, 
we have $p(x+m,y+n) = p(x,y)$. The assertion then follows from \cite[Lemma~3.3]{Hou2004}.

\smallskip\noindent
(ii) In the $q$-shift case, that is, in the case when $\sigma_x$ and $\sigma_y$ are both $q$-shift operators,
the sufficiency is clear since
$\sigma_x^{\mu}\sigma_y^{-\lambda}(x^\alpha y^\beta P(x^\lambda y^\mu)) 
= q^{\alpha\mu-\beta\lambda}x^\alpha y^\beta P(x^\lambda y^\mu)$
for any $P(z)\in \set K[z]$ and $\alpha,\beta,\lambda,\mu\in \set Z$.
For the necessity, we assume that $p$ is integer-linear.
Then $p$ is an associate of $\sigma_x^m\sigma_y^n(p)$ for some $m,n\in\set Z$ not both zero. 
Since $p$ is irreducible over~$\set K$ and $\sigma_x,\sigma_y$ are $q$-shift operators,
it follows that $p(q^mx,q^ny) =c\,p(x,y)$ for some $c\in \set K$.
We now adapt the proof of \cite[Lemma~3.3]{Hou2004} into this case.
Let $\ell = \gcd(m,n)$. Since $m,n$ are not both zero, we have $\ell\neq 0$.
Let $\lambda = -n/\ell$ and $\mu = m/\ell$. It is readily seen that $\gcd(\lambda,\mu) = 1$. 
By B{\' e}zout's relation, there exist $s,t\in \set Z$
such that $s\lambda + t \mu = 1$. Define $h(x,y) = p(x^s y^\mu,x^t y^{-\lambda})$.
Then $h\in \set K[x,x^{-1},y,y^{-1}]\subset \set F(y)$ and
\[
h(x, q^\ell y) =p(x^s(q^\ell y)^\mu, x^t(q^\ell y)^{-\lambda})
=p(q^m x^s y^\mu, q^n x^ty^{-\lambda})
= c\,p(x^s y^\mu, x^ty^{-\lambda}) = c\,h(x,y).
\]
Since $\ell \neq 0$, we conclude from Lemma~\ref{LEM:invariant} (ii) that $h/y^k\in \set F$ for some integer~$k$.
Notice that $h\in \set K[x,x^{-1},y,y^{-1}]$. So $h/y^k \in \set K[x,x^{-1}]$ and then
$h(x,y)=x^{-r}y^k P(x)$ for some $r\in \set N$ and $P(z)\in \set K[z]$. Therefore,
\[
p(x,y) = h(x^\lambda y^\mu,x^ty^{-s}) = (x^\lambda y^\mu)^{-r}(x^ty^{-s})^kP(x^\lambda y^\mu)
= x^{-\lambda r+tk}y^{-\mu r- sk}P(x^\lambda y^\mu).
\]
Letting $\alpha= -\lambda r + tk$ and $\beta=-\mu r-sk$ gives $p(x,y) = x^\alpha y^\beta P(x^\lambda y^\mu)$.
\end{proof}

Combining \cite[Theorem~10]{Abra2003} and \cite[Theorem~4.6]{CHM2005}, 
we have the following existence criteria for telescopers.
\begin{theorem}\label{THM:existence}
Let $T$ be a $(\sigma_x,\sigma_y)$-hypergeometric term 
whose $\sigma_y$-quotient has a $\sigma_y$-standard kernel $K$ and a corresponding shell $S$, and let $r$ be a 
$\sigma_y$-remainder of $S$ with respect to $K$.
Then $T$ has a telescoper if and only if the significant  denominator of $r$ 
is integer-linear.
\end{theorem}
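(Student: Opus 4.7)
The plan is to reduce Theorem~\ref{THM:existence} to the existing existence criteria of Abramov~\cite[Thm.~10]{Abra2003} and Chen--Hou--Mu~\cite[Thm.~4.6]{CHM2005}, using the reduction algorithm of Section~\ref{SEC:red} to isolate the structural obstruction inside the remainder $r$.

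First I would apply Theorem~\ref{THM:shellred} to write $T = \Delta_y(gH) + rH$, where $H$ is a $(\sigma_x,\sigma_y)$-hypergeometric term with $\sigma_y$-quotient~$K$. Every operator $L \in \set F[S_x]$ has coefficients in $\set F = \set K(x)$, and since $\sigma_y$ fixes $\set K(x)$ pointwise, $L$ commutes with $\Delta_y$. Therefore
\[
L(T) = \Delta_y(L(gH)) + L(rH),
\]
so $L$ is a telescoper for $T$ if and only if it is a telescoper for $rH$. Writing $r = h + p/v$ as in Definition~\ref{DEF:remainder}, Proposition~\ref{PROP:sparse} tells us that $p$ lives in the finite-dimensional $\set F$-space $\im(\phi_K)^\top$, so iterated $\sigma_x$-shifts of $pH/v$ stay in a controlled family; combined with the linearization in Theorem~\ref{THM:remsum}, this lets me concentrate the real obstruction in the normal part $hH$, whose significant denominator is the $\sigma_y$-normal polynomial $d$.

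It remains to prove that $hH$ admits a telescoper if and only if $d$ is integer-linear. For the ``if'' direction I would exploit the explicit shape of integer-linear irreducibles given by Proposition~\ref{PROP:intlinear}: in both shift and $q$-shift cases, the $\sigma_x$-shifts $\{\sigma_x^i(d)\}_{i\in\set N}$ fall into only finitely many $\sigma_y$-equivalence classes, so applying Theorem~\ref{THM:remsum} repeatedly collapses the $\sigma_y$-remainders of $\sigma_x^i(hH)$ into a finite-dimensional family over $\set F$ and produces an $\set F$-linear dependence, hence a telescoper. For the ``only if'' direction I would argue by contradiction using the minimality of $h$ from Proposition~\ref{PROP:remainder}~(ii): a non-integer-linear irreducible factor of $d$ cannot be canceled by any $\set F$-linear combination of $\sigma_x$-shifted $\sigma_y$-remainders without producing a new non-integer-linear factor of no smaller degree, contradicting the existence criteria of~\cite{Abra2003, CHM2005}. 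The main obstacle is precisely this matching of our minimality with the ``proper'' decomposition appearing in those references; the usual-shift and $q$-shift cases must be handled separately, each via the corresponding half of Proposition~\ref{PROP:intlinear}.
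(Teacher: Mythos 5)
Your opening move is fine: since $\sigma_y$ fixes $\set F=\set K(x)$ pointwise, every $L\in\set F[S_x]$ commutes with $\Delta_y$, so by Theorem~\ref{THM:shellred} $L$ is a telescoper for $T$ if and only if it is one for $rH$. From there, however, the proposal has genuine gaps, and they all stem from one oversight: you argue as if the question were governed solely by the $y$-structure of $r$ and $K$, ignoring the $\sigma_x$-structure of $H$. Shifting and re-reducing inevitably brings in $N=\sigma_x(H)/H$: functorially, $\sigma_x^i(T)=\Delta_y\bigl(\sigma_x^i(g)\sigma_x^i(H)\bigr)+\sigma_x^i(r)\,\prod_{j=0}^{i-1}\sigma_x^j(N)\,H$, and the shell that must be re-reduced with respect to the fixed kernel $K$ contains the denominators of the $\sigma_x^j(N)$. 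Nothing in Proposition~\ref{PROP:intlinear}, Proposition~\ref{PROP:sparse}, or Theorem~\ref{THM:remsum} prevents non-integer-linear factors of those denominators from surviving into the successive $\sigma_y$-remainders, so the finite-dimensionality claim in your ``if'' direction does not close; likewise, the assertion that the part $(p/v)H$ stays in a ``controlled family'' under $\sigma_x$-shifts is a non sequitur, since the finite dimension of $\im(\phi_K)^\top$ is a statement over $\set F$ about the $y$-direction only. What actually tames $N$ is the compatibility $\sigma_x(K)/K=\sigma_y(N)/N$ together with the structure theorem for bivariate hypergeometric terms (Ore--Sato/Abramov--Petkov\v{s}ek, cf.~\cite{AbPe2002a}); without invoking something of that strength you are not using the criteria of \cite{Abra2003,CHM2005} but silently re-proving them.

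The ``only if'' direction has a second, independent problem. Note that the paper does not prove this theorem at all: it obtains it by citing \cite[Theorem~10]{Abra2003} and \cite[Theorem~4.6]{CHM2005}, and the only work left implicit is a dictionary between the $\sigma_y$-remainder of Definition~\ref{DEF:remainder} and the minimal additive decompositions used in those references. That dictionary is exactly what your contradiction argument skips: the sentence ``a non-integer-linear irreducible factor of $d$ cannot be canceled without producing a new non-integer-linear factor of no smaller degree'' is an assertion, not a proof, and appealing to \cite{Abra2003,CHM2005} at that point is circular unless you first show that the significant denominator of the paper's remainder and the denominator appearing in those criteria share the same irreducible factors up to $\sigma_y$-shifts. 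The tool for that is the factor-wise statement hidden in the proof of Lemma~\ref{LEM:minimal}: every irreducible factor power $a^k$ of the significant denominator reappears, up to a $\sigma_y$-shift, in the denominator of \emph{any} other representative of the same class modulo $\im(\Delta_K)$; combine this with the facts that integer-linearity passes to factors and is preserved by $\sigma_y$-shifts (Definition~\ref{DEF:intlinear}). Running this comparison in both directions against the minimal decompositions of \cite{Abra2003} and \cite{CHM2005} yields a short, complete proof---this is what ``combining'' the two references amounts to---and it lets you discard both the finite-dimensionality argument and the contradiction argument entirely.
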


When telescopers exist, we are then able to use the reduction algorithm developed
in Section~\ref{SEC:red} to construct a telescoper for a given 
$(\sigma_x,\sigma_y)$-hypergeometric term effectively.

\smallskip\noindent
{\bf HypergeomTelescoping.} Given a $(\sigma_x,\sigma_y)$-hypergeometric term $T$,
compute a telescoper of minimal order for $T$ and its certificate if $T$ has telescopers.

\begin{enumerate}
\item Apply {\bf HypergeomReduction} to $T$ with respect to $y$ to find 
a $(\sigma_x,\sigma_y)$-hypergeometric term $H$ whose $\sigma_y$-quotient $K$ is $\sigma_y$-standard, 
and two rational functions $g_0,r_0\in\set F(y)$ such that 
\begin{equation}\label{EQ:hyperred0}
T = \Delta_y(g_0H) + r_0H,
\end{equation}
and $r_0$ is a $\sigma_y$-remainder with respect to~$K$.
If $r_0=0$ then return $(1,g_0H)$.

\item If the significant denominator of $r_0$ is not integer-linear, then return
``No telescoper exists!".

\item Set $N = \sigma_x(H)/H$ and $r = \ell_0r_0$, where $\ell_0$ is an indeterminate.

For $i=1,2,\dots$ do
\begin{itemize}
\item[3.1] Apply {\bf HypergeomReduction} to $\sigma_x(r_{i-1})NH$ with respect to $y$,
with the choice of the $\sigma_y$-standard kernel $K$ and the shell $\sigma_x(r_{i-1})N$ in its step~1, to find
$\tilde g_i\in \set F(y)$ and a $\sigma_y$-remainder $\tilde r_i$ with respect to $K$ 
such that
\begin{equation}\label{EQ:shelli-1}
\sigma_x(r_{i-1})NH = \Delta_y(\tilde g_iH) + \tilde r_iH.
\end{equation}

\item[3.2] Apply {\bf RemainderLinearization} to $\tilde r_i$ with respect to $r$ and $K$ to
find $\bar g_i\in \set F(y)$ and another $\sigma_y$-remainder $r_i$ with respect to $K$ 
such that
\begin{equation}\label{EQ:remsumi-1}
\tilde r_i = \Delta_K(\bar g_i) + r_i
\end{equation}
and $r+\ell_ir_i$ is a $\sigma_y$-remainder with respect to $K$, where $\ell_i$ is an 
indeterminate.

\item[3.3] Set $g_i = \sigma_x(g_{i-1})N+\tilde g_i + \bar g_i$ and update $r$ to be $r+\ell_i r_i$.

\item[3.4] Find $\ell_0,\dots,\ell_i\in \set F$ such that $r = 0$ by solving a linear system in 
$\ell_0,\dots,\ell_i$ over $\set F$. If there is a nontrivial solution, return
$
(\sum_{j=0}^i\ell_jS_x^j,\ \sum_{j=0}^i\ell_j g_j H).
$
\end{itemize}
\end{enumerate}

\begin{remark}
The algorithm {\bf HypergeomTelescoping} separates the computation of telescopers from that of certificates.
In applications where certificates are irrelevant, we can drop all computations related to the
preimages of $\Delta_y$. In particular, all rational functions $g_i$ can be discarded and we do
not need to calculate $\sum_{j=0}^i\ell_jg_jH$ in the end.
\end{remark}

\begin{theorem}\label{THM:rct}
For a $(\sigma_x,\sigma_y)$-hypergeometric term $T$, 
the algorithm {\bf HypergeomTelescoping} terminates and correctly finds a 
telescoper of minimal order for $T$ and a corresponding certificate when such telescopers exist.
\end{theorem}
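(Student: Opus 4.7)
The plan is to prove correctness via a loop invariant tying the shell remainders $r_i$ to the shifts $\sigma_x^i(T)$, and then to derive termination and minimality from the existence criterion of Theorem~\ref{THM:existence} combined with the rigidity statement of Proposition~\ref{PROP:remainder}(iii).

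First I would establish, by induction on $i$, the invariant
\[
\sigma_x^i(T) = \Delta_y(g_i H) + r_i H,
\]
together with the fact that $r = \sum_{j=0}^i \ell_j r_j$ remains a $\sigma_y$-remainder with respect to $K$ when the $\ell_j$ are treated as indeterminates in $\set F$. The base case $i=0$ is exactly~\eqref{EQ:hyperred0}. For the inductive step, applying $\sigma_x$ to the previous relation and using that $\sigma_x$ commutes with $\sigma_y$, hence with $\Delta_y$, yields
\[
\sigma_x^i(T) = \Delta_y(\sigma_x(g_{i-1}) N H) + \sigma_x(r_{i-1}) N H.
\]
Substituting~\eqref{EQ:shelli-1} for $\sigma_x(r_{i-1}) N H$, then~\eqref{EQ:remsumi-1} for $\tilde r_i$, and using the identity $\Delta_K(\bar g_i) H = \Delta_y(\bar g_i H)$ (which holds because $K = \sigma_y(H)/H$), I recover $\sigma_x^i(T) = \Delta_y(g_i H) + r_i H$ with precisely the update $g_i = \sigma_x(g_{i-1}) N + \tilde g_i + \bar g_i$ from step~3.3. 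That $r + \ell_i r_i$ is again a $\sigma_y$-remainder is guaranteed by the specification of \textbf{RemainderLinearization} in step~3.2.

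Correctness then follows immediately: when step~3.4 succeeds with a nontrivial solution $(\ell_0,\dots,\ell_i)$, summing the invariant weighted by the $\ell_j$ gives
\[
L(T) \;=\; \Delta_y\!\Big(\sum_{j=0}^i \ell_j g_j H\Big) + \Big(\sum_{j=0}^i \ell_j r_j\Big) H \;=\; \Delta_y\!\Big(\sum_{j=0}^i \ell_j g_j H\Big),
\]
so $L = \sum_j \ell_j S_x^j$ is a telescoper and $\sum_j \ell_j g_j H$ its certificate. For termination and minimality, step~2 is legitimate by Theorem~\ref{THM:existence}; otherwise, suppose a telescoper $L^\ast = \sum_{j=0}^\rho c_j S_x^j$ of order $\rho$ exists. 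Then $L^\ast(T) \in \im(\Delta_y)$, and dividing through by $H$ via the invariant shows $\sum_{j=0}^\rho c_j r_j \in \im(\Delta_K)$. Since this linear combination is itself a $\sigma_y$-remainder by the invariant, Proposition~\ref{PROP:remainder}(iii) forces $\sum_{j=0}^\rho c_j r_j = 0$, so step~3.4 finds a nontrivial solution no later than iteration $\rho$. For minimality, if the first nontrivial solution occurs at iteration $i$, then necessarily $\ell_i \neq 0$ (else the same solution would have been found at iteration $i-1$), giving a telescoper of order exactly $i$; and the same rigidity shows that any telescoper of order $<i$ would force a zero relation among $r_0,\dots,r_{i-1}$ detectable at an earlier iteration, contradicting the choice of $i$.

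The main obstacle is really the bookkeeping: carefully tracking how $\Delta_K$-relations on rational shells translate into $\Delta_y$-relations on hypergeometric terms after multiplication by $H$, and how the certificate pieces accumulate through the interleaving of $\sigma_x$ with the two reduction steps. Once the invariant $\sigma_x^i(T) = \Delta_y(g_i H) + r_i H$ is locked in, the remainder-rigidity result of Proposition~\ref{PROP:remainder}(iii) does all the conceptual work for termination and minimality.
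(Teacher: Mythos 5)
Your proposal is correct and follows essentially the same route as the paper's own proof: the same inductive invariant $\sigma_x^i(T)=\Delta_y(g_iH)+r_iH$ maintained via \textbf{RemainderLinearization}, and the same equivalence \emph{telescoper $\Leftrightarrow$ $\set F$-linear dependency among the $r_i$}, from which termination and minimality follow. The only cosmetic difference is that you invoke Proposition~\ref{PROP:remainder}(iii) directly where the paper cites Theorem~\ref{THM:shellred}, but the latter rests on the former, so the arguments coincide.
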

\begin{proof}
By Theorem~\ref{THM:shellred}, $r_0=0$ implies that $T$ is $\sigma_y$-summable and thus
$1$ is a telescoper of minimal order for $T$. Together with Theorem~\ref{THM:existence},
we see that steps~1 and 2 are correct.

Now assume that the algorithm proceeds to step~3. Then $T$ has a telescoper of order at least one.
It follows from \eqref{EQ:hyperred0} and $\sigma_x(H) = NH$ that 
\[
\sigma_x(T) = \Delta_y(\sigma_x(g_0)NH) + \sigma_x(r_0)NH.
\]
By viewing $\sigma_x(r_0)NH$ as a $(\sigma_x,\sigma_y)$-hypergeometric term
whose $\sigma_y$-quotient has a $\sigma_y$-standard kernel $K$ and a corresponding shell $\sigma_x(r_0)N$,
we can perform {\bf HypergeomReduction} to $\sigma_x(r_0)NH$ to obtain $\tilde g_1\in \set F(y)$
and a $\sigma_y$-remainder $\tilde r_1$ with respect to $K$ so that 
\eqref{EQ:shelli-1} holds for $i=1$. 
According to Theorem~\ref{THM:remsum}, the algorithm {\bf RemainderLinearization} enables us
to find $\bar g_1\in \set F(y)$ and another $\sigma_y$-remainder $r_1$ with respect to~$K$ 
such that \eqref{EQ:remsumi-1} holds for $i=1$ and $r+\ell_1r_1=\ell_0r_0+\ell_1r_1$ for all 
$\ell_0,\ell_1\in\set F$ is again a $\sigma_y$-remainder with respect to $K$. 
Setting $g_1 = \sigma_x(g_0)N+\tilde g_1+\bar g_1$, we thus get
$
\sigma_x(T) = \Delta_y(g_1H) + r_1H.
$
By a direct induction on $i\in\set Z^+$, we see that 
\begin{equation}\label{EQ:hyperredi}
\sigma_x^i(T) = \Delta_y(g_iH) + r_iH
\end{equation}
holds in the loop of step~3 every time the algorithm passes through step~3.3.
Moreover, $r = \sum_{j=0}^i\ell_j r_j$ for all $\ell_j\in\set F$ is a $\sigma_y$-remainder with respect to~$K$.

Let $\rho\in \set Z^+$ and define $L= \sum_{i=0}^\rho c_iS_x^i$ with $c_i\in \set F$ and $c_\rho \neq 0$. 
Then by \eqref{EQ:hyperredi},
\[
L(T) = \Delta_y\Big(\sum_{i=0}^\rho c_ig_i H\Big) + \Big(\sum_{i=0}^\rho c_i r_i\Big) H.
\]
Since $\sum_{i=0}^\rho c_i r_i$ is a $\sigma_y$-remainder with respect to $K$, we conclude
from Theorem~\ref{THM:shellred} that $L$ is a telescoper for~$T$ if and only if 
$\sum_{i=0}^\rho c_i r_i$ is equal to zero, which happens if and only if
the linear homogeneous system in $\ell_0,\dots,\ell_\rho$ over $\set F$ obtained by equating 
$\sum_{i=0}^\rho \ell_i r_i$ to zero has a nontrivial solution $\ell_0=c_0,\dots,\ell_\rho=c_\rho$. 
Therefore, the first linear dependency among the $r_i$ gives rise to a telescoper of minimal order.
\end{proof}
\begin{example}\label{EX:shiftrct}
Assume that $\sigma_x$ and $\sigma_y$ are both the usual shift operators. Let $T(x,y) = {x+2y\choose y}$. 
Then $T$ is a $(\sigma_x,\sigma_y)$-hypergeometric term whose $\sigma_y$-quotient has a $\sigma_y$-standard
kernel $K = (x + 2y + 1)(x + 2y + 2)/((y + 1)(x + y + 1))$ and a corresponding shell $S = 1$. So $H = T/S= T$.
Applying the algorithm {\bf HypergeomTelescoping} to $T$, we obtain in step~3 that
\[
\sigma_x^i(T) = \Delta_y(g_iH) + \frac{p_i}vH\quad \text{for}\ i = 0,1,2,
\]
where $v = (y + 1)(x + y + 1)$, $p_0 = (-x^2+x+2 y+2)/3$, $p_1 = (-2x^2 - 3xy - x + y + 1)/3$,
$p_2 = (-x^2 - 3xy - 2x - y - 1)/3$, and $g_i\in \set F(y)$ are not displayed here to keep things neat.
By finding an $\set F$-linear dependency among $p_0,p_1,p_2$, we get 
\[
L = S_x^2 - S_x + 1
\]
is a telescoper of minimal order for $T$.
\end{example}
\begin{example}\label{EX:rct}
Assume that $\sigma_x$ and $\sigma_y$ are both the $q$-shift operators. 
Let $T(n,k) = \qbinom{n}{k}_q$. Then $T$ is a $(\sigma_x,\sigma_y)$-hypergeometric term 
with $\sigma_x(T(n,k)) = T(n+1,k)$, $\sigma_y(T(n,k)) = T(n,k+1)$ and $q^n = x$, $q^k = y$. 
The $\sigma_y$-quotient of~$T$ has a $\sigma_y$-standard kernel $K = (x-y)/(y(qy-1))$ and 
a corresponding shell $S = 1$. So $H = T/S = T$. Applying the algorithm {\bf HypergeomTelescoping} to 
$T$, we obtain in step~3 that 
\[
\sigma_x^i(T) = \Delta_y(g_iH) + \frac{p_i}vH\quad\text{for}\ i = 0,1,2,
\]
where $v = y(qy-1)$, $p_0 = x-y$, $p_1 = (qx-1)y$, $p_2 = (qx-1)(x+y)$, and $g_i\in \set F(y)$ are not 
displayed here to keep things neat. By finding an $\set F$-linear dependency among $p_0,p_1,p_2$, 
we get
\[
L = S_x^2-2S_x+1-q^{n+1}
\]
is a telescoper of minimal order for~$T$.
\end{example}

\section{Implementation and applications}\label{SEC:exp}
We have implemented our algorithms in the computer algebra system {\sc Maple 2020}.
The code and its demo are available via the link:
\[\text{\url{http://www.mmrc.iss.ac.cn/~schen/CDGHL2025.html}}\]
We aim to compare their runtime and memory
requirements with the performance of known algorithms so as to get an idea about the efficiency.
Since such experiments for the shift case have been well conducted in~\cite{CHKL2015}, 
we will focus on the $q$-shift case in this  section. All timings are measured in seconds on 
a macOS computer with 32GB RAM and 2.3 GHz Quad-Core Intel Core i7 processors. 
The computations for the experiments did not use any parallelism.

We consider bivariate $q$-hypergeometric terms of the form
\[
T = \frac{f(q^n,q^k)}{g(q^{n+k})}\frac{(q;q)_{2\alpha n+k}}{(q;q)_{n+\alpha k}},
\]
where $f\in \set Q(q)[q^n,q^k]$ is of total degree $1$, $g = p\sigma_z^{\lambda}(p)\sigma_z^\mu(p)$ with 
$p\in \set Q[z]$ of degree $d$ and $\alpha,\lambda,\mu\in\set N$. For a selection of random terms 
of this type for different choices of $(d,\alpha,\lambda,\mu)$, Table~\ref{TAB:timing} compares
the timings of Maple's implementation of $q$-Zeilberger's algorithm (Z) and two variants of the 
algorithm {\bf HypergeomTelescoping} in the $q$-shift case from Section~\ref{SEC:rct}: 
For the column HTC we computed both the telescoper and the certificate, and for HT only the 
telescoper was returned. The difference between these two variants mainly comes from the time 
needed to bring the rational part $r$ in the certificate $rH$ on a common denominator. When it 
is acceptable to keep the rational part as an unnormalized linear combination of rational functions, 
the time is virtually the same as that for HT.

\begin{table}[!ht]
\centering
\begin{tabular}{l|r|r|r|c}
$(d,\alpha,\lambda,\mu)$ & Z & HTC & HT & order\\\hline
(1, 1, 1, 5) & 1.05 & .94 & .50 & 2 \\
(1, 2, 1, 5) & 5082.63 & 1854.00 & 797.41 & 5 \\
(2, 1, 1, 5) & 17.36 & 7.37 & 3.15 & 3 \\
(2, 2, 1, 5) & 167299.50 & 18774.25 & 9778.94 & 6 \\
(3, 1, 1, 5) & 884.95 & 72.20 & 22.98 & 4 \\
(1, 1, 5, 10) & 17.40 & 8.44 & 2.53 & 2 \\
(1, 2, 5, 10) & 39997.13 & 18542.85 & 9139.08 & 5 \\
(1, 1, 10, 15) & 60.93 & 48.97 & 7.92 & 2 \\\hline
\end{tabular}
\caption{\small Comparison of $q$-Zeilberger's algorithm to reduction-based creative telescoping
with and without construction of a certificate for a collection of random terms}
\label{TAB:timing}
\end{table}

Our implementation enhances the applicability of ($q$-)Zeilberger’s algorithm, as illustrated by the following example.
\begin{example}
Stanton~\cite{Stan1995} conjectured the following identity
\begin{equation}\label{EQ:stanton}
\sum_k (-1)^kq^{4k^2}\qbinom{2n}{n-4k}_q=\sum_kq^{2k^2}\qbinom{n}{2k}_q(-q;q^2)_{n-2k}(-1;q^4)_k,
\end{equation}
which was proved by Paule and Riese~\cite{PaRi1997} using $q$-Zeilberger's algorithm. 
They observed that taking the summation range to be $\{\sf k,-\text{\sf Infinity}, \text{\sf Infinity}\}$ saves computing
time. Actually, due to the natural boundary of Gaussian binomial coefficients, we can further omit the computation
of certificates and write down recurrence equations for both sides of~\eqref{EQ:stanton} merely using telescopers.
For the summand in either side of~\eqref{EQ:stanton}, without computing the certificate, our implementation returns
the same telescoper
\[
S_x^3-(q^{2n+5}+q^{2n+4}+q^{2n+3}+1)S_x^2+q^{2n+4}(q^{2n+3}+q^{2n+2}+q^{2n+1}-1)S_x
-q^{2n+3}(q^{2n+2}-1)(q^{2n+1}-1),
\]
where $q^n = x$, thus implying that both sides of~\eqref{EQ:stanton} satisfy the same recurrence equation
\[
\begin{array}{l}
f(n+3)-(q^{2n+5}+q^{2n+4}+q^{2n+3}+1)f(n+2)+q^{2n+4}(q^{2n+3}+q^{2n+2}+q^{2n+1}-1)f(n+1)\\[1ex]
-q^{2n+3}(q^{2n+2}-1)(q^{2n+1}-1)f(n) = 0.
\end{array}
\]
Checking the identity at the initial values $n=0,1,2$ completes the proof of~\eqref{EQ:stanton}.
\end{example}

\section*{Acknowledgments}
We would like to thank the anonymous referee for many useful and constructive suggestions.

\bibliographystyle{plain}
\newcommand{\Gathen}{\relax}\newcommand{\Hoeij}{\relax}\newcommand{\Hoeven}{\relax}\def\cprime{$'$}
  \def\cprime{$'$} \def\cprime{$'$} \def\cprime{$'$} \def\cprime{$'$}
  \def\cprime{$'$} \def\cprime{$'$} \def\cprime{$'$} \def\cprime{$'$}
  \def\polhk#1{\setbox0=\hbox{#1}{\ooalign{\hidewidth
  \lower1.5ex\hbox{`}\hidewidth\crcr\unhbox0}}} \def\cprime{$'$}

\end{document}